\documentclass[12pt]{article}
\usepackage{amsfonts}
\usepackage{amssymb}
\usepackage{amsthm}
\usepackage{amsmath}
\usepackage{braket}
\usepackage{graphicx}
\usepackage{indentfirst}
\usepackage{cite}
\usepackage{mathrsfs}
\usepackage{graphics}
\usepackage{tikz}
\usetikzlibrary{decorations.pathreplacing,angles,quotes}
\usepackage{subcaption}
\usepackage{xcolor}
\usepackage{multicol}
\usepackage{enumerate}

\textwidth=17cm \textheight=23cm \topmargin=-1cm \oddsidemargin=0cm
\flushbottom
\newtheorem{theorem}{\textbf{Theorem}}[section]
\newtheorem{lemma}{\textbf{Lemma}}[section]
\newtheorem{proposition}{\textbf{Proposition}}[section]
\newtheorem{corollary}{\textbf{Corollary}}[section]
\newtheorem{remark}{\textbf{Remark}}[section]
\newtheorem{definition}{\textbf{Definition}}[section]

\allowdisplaybreaks[4]

\def\be{\begin{equation}}
\def\ee{\end{equation}}
\def\bea{\begin{eqnarray}}
\def\eea{\end{eqnarray}}
\def\bt{\begin{theorem}}
\def\et{\end{theorem}}
\def\bl{\begin{lemma}}
\def\el{\end{lemma}}
\def\br{\begin{remark}}
\def\er{\end{remark}}
\def\bp{\begin{proposition}}
\def\ep{\end{proposition}}
\def\bc{\begin{corollary}}
\def\ec{\end{corollary}}
\def\bd{\begin{definition}}
\def\ed{\end{definition}}

\def\R3{\mathbb{R}^3} 
\def\R{\mathbb{R}}
\def\F2o{\overline{F_2}}

\def \eps{\varepsilon}
\newcommand{\Tn}[0]{Ti\textsubscript{74}Nb\textsubscript{23}Al\textsubscript{3}}

\newcommand{\beq}[0]{\begin{equation}}
\newcommand{\eeq}[0]{\end{equation}}

\newcommand{\mint}[0]{\int_\Omega}

\newcommand{\EE}[0]{{\mathcal E}}

\newcommand{\ttr}{}%\textcolor{red}}

\newcommand{\vc}[1]{\mathbf{#1}}
\newcommand{\vcg}[1]{\boldsymbol{#1}}
\newcommand{\mt}[1]{\mathsf{#1}}

\DeclareMathOperator{\cof}{\mathsf{cof}}
\DeclareMathOperator{\diag}{diag}
\DeclareMathOperator{\sign}{sign}
\DeclareMathOperator{\rank}{rank}

\usepackage{mathtools}

\newcommand{\mres}{\mathbin{\vrule height 1.6ex depth 0pt width
0.13ex\vrule height 0.13ex depth 0pt width 1.3ex}}

\usepackage[hang,flushmargin]{footmisc}
\makeatletter
\def\blfootnote{\gdef\@thefnmark{}\@footnotetext}
\makeatother
%%%%%%%%%%%%%%%%%%%%%%%%%%%%%%%%%%%%%%%%%%%%%%%%%%%%%%%%%%%%%%%%%%%%%%%%%%%%%%%%%%%%%%%%%%%%%%%%%%%%%%%%%%%%%%%%%%%%%%%%%%

\begin{document}

\title{\ttr{On non stress-free junctions between martensitic plates}}

\author{
{\sc Francesco Della Porta}%\footnote{Paper written at the Max Planck Institute for Mathematics in the Sciences, Leipzig, Germany. \textit{francesco.dellaporta@mis.mpg.de}}
}
\blfootnote{%\noindent\rule{6.8cm}{0.4pt}\\
\textbf{Acknowledgements:} {The author would like to thank the Max Planck Institute for Mathematics in the Sciences where part of this work was carried out. This work was partially supported by the Engineering and Physical Sciences Research Council [EP/L015811/1]. The author would like to thank John Ball, Tomonari Inamura and Angkana R\"uland for the useful discussions. The author would
like to acknowledge the two anonymous reviewers for improving this paper with their comments.}\\
\noindent\rule{6.8cm}{0.4pt}
%\textbf{Conflict of Interest:} The authors declare that they have no conflict of interest.
%\textbf{Declarations of interest:} none.
}
\date{\today}
\maketitle

\begin{abstract}
The analytical understanding of microstructures arising in martensitic phase transitions relies usually on the study of stress-free interfaces between different variants of martensite. However, in the literature there are experimental observations of non stress-free junctions between martensitic plates, where the compatibility theory fails to be predictive. In this work, we focus on $V_{II}$ junctions, which are non stress-free interfaces between different martensitic variants experimentally observed in \Tn. We first motivate the formation of some non stress-free junctions by studying {the two well problem under suitable boundary conditions.}
We then give a mathematical characterisation of $V_{II}$ junctions within the theory of elasto-plasticity, and show that for deformation gradients as in \Tn\,our characterisation {agrees with} experimental results. Furthermore, we are able to prove that, {under suitable hypotheses that are verified in the study of \Tn,} $V_{II}$ junctions are strict weak local minimisers of a simplified energy functional for martensitic transformations in the context of elasto-plasticity.

%\subjclass{Primary 74A50, 74N05, 74N10, 74N15, 74G05,74G65.}

%\keywords{Martensitic Transformations; Compatibility; Two-Well Problem; Local Minimiser.}
\end{abstract}

\section{Introduction}
Martensitic phase transitions are abrupt changes occurring in the crystalline structure of certain alloys or ceramics when the temperature is moved across a critical threshold. The high temperature phase is called austenite or parent phase, and usually enjoys cubic symmetry, while the low temperature phase is called martensite, and has lower symmetry (e.g., tetragonal, orthorhombic, monoclinic \cite{Batt}). For this reason, martensite has usually more variants, which are symmetry related, and which in experiments often appear finely mixed. % in  patterns (see e.g., \cite{Batt,BallJames1} and references therein) or in 
Martensitic phase transitions are important because they are the physical motivation of shape memory, the ability of certain materials to recover on heat deformations which are apparently plastic.

After the seminal work of Ball and James \cite{BallJames1} modelling martensitic phase transitions in the context of nonlinear elasticity (see Section \ref{Nonlin}), a vast literature has been developed to study energy minimisers, and energy minimising sequences {for energy functionals describing this physical phenomenon at a continuum scale. Indeed, energy minimising sequences can be interpreted as microstructures, that is finely mixed martensitic variants, with no elastic energy at the macroscopic scale (see e.g., \cite{BallJames2,Batt,Muller} and references therein).}\ A key tool to understand and predict martensitic microstructures is the Hadamard jump condition (see e.g., \cite[Prop. 1]{BallJames1}) stating that if a continuous function $\vc y\colon\R^3\to\R^3$ is such that 
$$
\nabla\vc y(\vc x) = \mt F_1 \quad\text{ a.e. in $\{\vc x\cdot\vc m<0\}$},\qquad\text{ and }\qquad
\nabla\vc y(\vc x) = \mt F_2 \quad\text{ a.e. in $\{\vc x\cdot\vc m>0\}$},
$$
for some unit vector $\vc m\in\mathbb{S}^2$ and two matrices $\mt F_1,\mt F_2\in\R^{3\times3}$, then 
\beq
\label{rank1conn}
\mt F_1-\mt F_2 = \vc b\otimes \vc m,\qquad\text{for some $\vc b\in\R^3$.}
%\rank (\mt F_1-\mt F_2)\leq 1.
\eeq
This condition imposes some necessary compatibility between two martensitic variants, or between two average martensitic deformation gradients representing different homogeneous microstructures, in order to have stress-free junctions. If \eqref{rank1conn} holds, then we say that $\mt F_1,\mt F_2$ are compatible across the plane $\{\vc x\cdot \vc m =0\}$.
Compatibility is a key ingredient not only to understand microstructures (see e.g., \cite{BallJames1,Batt}) but also to understand hysteresis of the phase transformation \cite{JamesMuller} and recently to construct materials undergoing ultra-reversible phase transformations \cite{JamesNew,JamesHyst}. Nonetheless, in the literature experiments are reported where the above compatibility is not observed right off the phase interface, and where the phase junctions are not stress free. {More precisely, martensite is elastically or plastically deformed to achieve compatibility between variants/phases.} For example, in Figure \ref{VI} we show the situation of $V_I$ junctions observed in the cubic to orthorhombic transformation in \Tn \cite{Inamura}. We have two different deformation gradients $\mt F_1,\mt F_2\in\R^{3\times 3}$ corresponding to two different martensitic variants, and the identity matrix $\mt 1$, deformation gradient in the austenite region. In the case of $V_I$ junctions we have
$$
\rank(\mt F_1-\mt F_2)=1,\qquad \rank(\mt F_1-\mt 1)> 1,\qquad \rank(\mt F_2-\mt 1)>1,
$$
and therefore the interfaces between austenite and martensite are not stress-free close to the junction between $\mt F_1$ with $\mt F_2$. Similarly, in the case of $V_{II}$ junctions (see Figure \ref{VII}), also observed in \Tn \cite{Inamura}, we have
\beq
\label{IncompII}
\rank(\mt F_1-\mt F_2)>1,\qquad \rank(\mt F_1-\mt 1)=1,\qquad \rank(\mt F_2-\mt 1)=1,
\eeq
and therefore $\mt F_1$ and $\mt F_2$ are not compatible. In Figure \ref{BallS} we show an incompatible junction between the two average deformation gradients $\mt F_1,\mt F_2\in\R^{3\times3}$ representing the average of the martensitic microstructures on the left and on the right of the red line \cite{BallS1,BallS2}. In this case, as for the $V_{II}$ junctions, \eqref{IncompII} holds. Non stress-free phase interfaces have also been observed in the X--interface configuration (Figure \ref{Ruddock}) for which we refer the reader to \cite{RuddockEx,Ruddock}.
\begin{figure}
\centering
\begin{subfigure}{.40\textwidth}
\begin{tikzpicture}[scale=0.40]
   \draw[black, thick] ({0},{0}) -- ({-4},{-8});
   \draw[red, thick] ({0},{0}) -- ({-4},{-8});
   \draw[red, thick] ({0},{0}) -- ({4},{-8});
   \draw[black, thick] ({0},{4+2}) -- ({-4-2},{-8+2});
   \draw[red, thick] ({0},{4+2}) -- ({-4-2},{-8+2});
   \draw[red, thick] ({0},{4+2}) -- ({4+2},{-8+2});
   \draw[black, thick] ({0},{4+2}) -- ({0},{0});
   \filldraw [red] ({2.5*cos(90)+2},{2.5*sin(90)+4}) circle (0pt) node[blue,anchor=north west,black] {$\mt 1$};
\filldraw [red] ({2.5*cos(90)},{2.5*sin(90)-5}) circle (0pt) node[blue,anchor=north,black] {$\mt 1$};
\filldraw [red] ({0.5*sin(240)+2.5*cos(240)+1},{-0.5*cos(240)+2.5*sin(240)+2}) circle (0pt) node[red,anchor=north east,black] {$\mt F_1$};
\filldraw [red] ({-0.5*sin(300)+2.5*cos(300)-1},{0.5*cos(300)+2.5*sin(300)+2}) circle (0pt) node[blue,anchor=north west,black] {$\mt F_2$};
\end{tikzpicture}
\caption{\label{VI}}
\end{subfigure}%
\hfill
\begin{subfigure}{.40\textwidth}
\begin{tikzpicture}[scale=0.40]
   \draw[black, thick] ({0},{0}) -- ({-4},{-8});
   \draw[black, thick] ({0},{0}) -- ({4},{-8});
   \draw[black, thick] ({0},{4+2}) -- ({-4-2},{-8+2});
   \draw[black, thick] ({0},{4+2}) -- ({4+2},{-8+2});
   \draw[red, thick] ({0},{4+2}) -- ({0},{0}); 
   \filldraw [red] ({2.5*cos(90)+2},{2.5*sin(90)+4}) circle (0pt) node[blue,anchor=north west,black] {$\mt 1$};
\filldraw [red] ({2.5*cos(90)},{2.5*sin(90)-5}) circle (0pt) node[blue,anchor=north,black] {$\mt 1$};
\filldraw [red] ({0.5*sin(240)+2.5*cos(240)+1},{-0.5*cos(240)+2.5*sin(240)+2}) circle (0pt) node[red,anchor=north east,black] {$\mt F_1$};
\filldraw [red] ({-0.5*sin(300)+2.5*cos(300)-1},{0.5*cos(300)+2.5*sin(300)+2}) circle (0pt) node[blue,anchor=north west,black] {$\mt F_2 $};
\end{tikzpicture}
\caption{\label{VII}}
\end{subfigure}%
\\
\begin{subfigure}{.40\textwidth}
\begin{tikzpicture}[scale=0.50]
   \draw[red, thick] ({0},{8.85}) -- ({0},{2});
 \foreach \x in {1,2,3}{
\draw[black, thick] ({-6},{\x*2-1}) -- ({-0.5},{\x*2+1});
}
 \foreach \x in {1,2,3}{
\draw[black, thick] ({-6},{\x*2+1.2-1}) -- ({-0.5},{\x*2+1.2+1});
}
 \foreach \x in {1,2,3}{
\draw[black, thick] ({6},{\x*2-2}) -- ({0.5},{\x*2+1});
}
 \foreach \x in {1,2,3}{
\draw[black, thick] ({6},{\x*2+0.8-2}) -- ({0.5},{\x*2+0.8+1});
}
\end{tikzpicture}
\caption{\label{BallS}}
\end{subfigure}%
\hfill
\begin{subfigure}{.40\textwidth}
\begin{tikzpicture}[scale=0.50]

   \draw[red, thick] ({0},{0}) -- ({4},{4});
   \draw[red, thick] ({0},{0}) -- ({-4},{4});
 \foreach \x in {1,2,3,4}{
\draw[black, thick] ({-4},{(\x-1)*2-4}) -- ({ -0.5 - (\x-1)},{-0.5 + (\x-1)});
}
 \foreach \x in {1,2,3,4}{
\draw[black, thick] ({-4},{(\x-1)*2-4+0.3}) -- ({ -0.4 - (\x-1) - 0.4*0.7071},{-0.5 + (\x-1)+0.4- 0.4*0.7071});
}
 \foreach \x in {1,2,3,4}{
\draw[black, thick] ({+4},{(\x-1)*2-4}) -- ({ +0.5 + (\x-1)},{-0.5 + (\x-1)});
}
 \foreach \x in {1,2,3,4}{
\draw[black, thick] ({+4},{(\x-1)*2-4+0.3}) -- ({ +0.4 + (\x-1) + 0.4*0.7071},{-0.5 + (\x-1)+0.4- 0.4*0.7071});
}

\filldraw [red] ({0},{2.5}) circle (0pt) node[blue,anchor=south,black] {$\mt 1$};
\filldraw [red] ({0},{-2.5}) circle (0pt) node[red,anchor=north,black] {$\mt F_3$};
\end{tikzpicture}
\caption{\label{Ruddock}}
\end{subfigure}%
\caption{\label{figureJIn}Examples of non stress free junctions (in red in the picture) experimentally observed in martensitic transformations: \ref{VI}--\ref{VII} show respectively a $V_I$ and a $V_{II}$ junction, observed for example in \cite{Inamura,InamuraI,InamuraIII}. The case \ref{BallS} is a generalisation of $V_{II}$ junctions, where instead of two single variants of martensite we have two martensitic laminates, both compatible on average with austenite but not with each other (see \cite{BallS1,BallS2}). In Figure \ref{Ruddock} an example of an X--interface, experimentally observed in \cite{RuddockEx}, and studied in \cite{Ruddock}. {In Figure \ref{VI} and in Figure \ref{VII}, at the non stress-free junctions (red lines in the pictures) defects are observed in experiments.}
}
\end{figure}
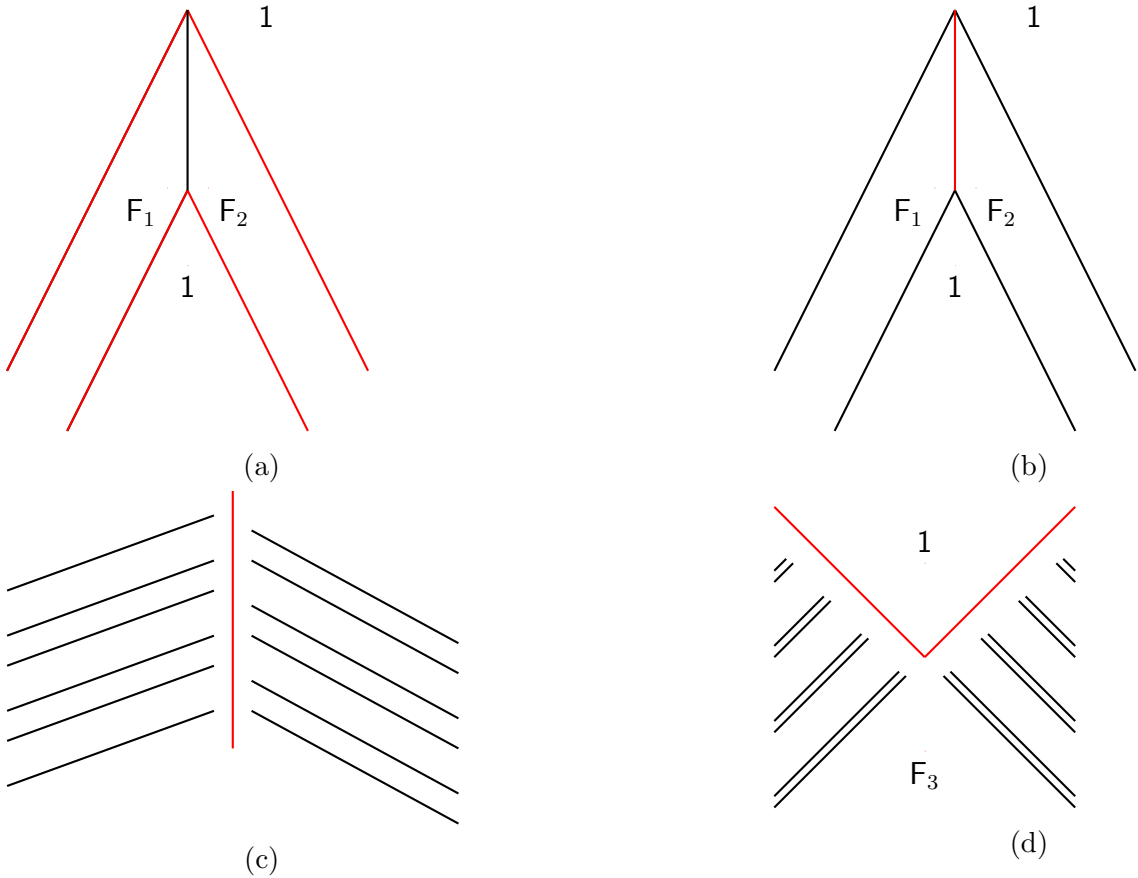

The following approach to measure the incompatibility between non-stress free junctions has been proposed in \cite{BallS1}. Assuming that $\mt F_1,\mt F_2\in\R^{3\times3}$ are such that $\rank(\mt F_1-\mt F_2)>1$, and that $\mt F_2^{-T}\mt F_1^T\mt F_1\mt F_2^{-1}$ has middle eigenvalue one, \cite[Prop. 4]{BallJames1} guarantees the existence of two rotations $\mt R_1,\mt R_2\in SO(3)$ such that $\rank(\mt F_1-\mt R_i\mt F_2) = 1$ for $i=1,2$. The incompatibility of $\mt F_1,\mt F_2$ can hence be measured by taking the minimum between the rotation angle of $\mt R_1$, and the rotation angle of $\mt R_2$. This is in agreement with the experimental results in \cite{BallS1,Inamura} where the observed non stress-free junctions are the ones where $\min\{angle(\mt R_1),angle(\mt R_2)\}$ is small. Another way to measure how far three deformations gradients, say $\mt F_1,\mt F_2,\mt 1$ are to form a triple junction, that is to be all pairwise rank one connected, can be found in \cite{FDP3}. However, in the case for example of \Tn \cite{Inamura} these approaches do not allow to predict when two martensitic variants will form a $V_I$ or a $V_{II}$ junction. {Indeed, experiments show that some martensitic variants tend to meet only in $V_I$ junctions, while others form just $V_{II}$ junctions (see e.g., \cite[Table 4]{Inamura})}.
%it seems that in this material, for some deformation gradients $V_I$ junctions are energetically strongly preferable, while for other deformation gradients $V_{II}$ junctions are. %Furthermore, a rigidity result by Reshetnyak \cite{Res} states that every continuous function with gradient in $SO(3)$ almost everywhere is affine. Therefore

The aim of this work is to study $V_{II}$ junctions and their stability in the context of elasto-plasticity. The paper is organised as follows: in Section \ref{Nonlin} we recall the nonlinear elasticity theory for martensitic phase transitions, and we introduce a simplified energy functional $I$ to describe the physical phenomenon when plastic shears occur. This energy functional is very general as it describes the transformation to all possible martensitic variants and all possible slip systems for body centred cubic austenite (as in \Tn). In Section \ref{Rigid} we give a partial explanation of why we observe non stress-free junctions of $V_{II}$ type or like the ones in Figure \ref{BallS}. Our explanation is the following: these type of junctions usually form when two different plates of martensite, with deformation gradients $\mt F_1,\mt F_2$, nucleate at different points in the domain, and expand until they meet (see Figure \ref{nucleazione VII} and Figure \ref{nucleazione BS}). We hence consider a bounded domain $\Omega\subset\R^3$ as in Figure \ref{domini} and two martensitic variants represented by their stretch tensors $\mt U_1,\mt U_2\in\R^{3\times3}_{Sym^+}$. We prove that, under some further geometric hypotheses which are verified by the non stress-free junctions in \Tn \cite{Inamura} and in Ni\textsubscript{65}Al\textsubscript{35} \cite{BallS1}, there exists a one-to-one map $\vc y\in W^{1,\infty}(\Omega;\R^3)$ satisfying
\beq
\label{diffInclIntro}
\begin{cases}
\nabla \vc y(\vc x) \in \bigl(SO(3)\mt U_1\cup SO(3)\mt U_2\bigr)^{qc},\qquad&\text{a.e. $\vc x\in\Omega$},\\
\vc y(\vc x) = \mt F_1\vc x,\qquad&\text{on $\Gamma_1,$}\\
\vc y(\vc x) = \mt F_2\vc x,\qquad&\text{on $\Gamma_2,$}
\end{cases}
\eeq
with $\mt F_1,\mt F_2\in \bigl(SO(3)\mt U_1\cup SO(3)\mt U_2\bigr)^{qc}$ if and only if $\rank(\mt F_1-\mt F_2)\leq1.$ Therefore, no stress-free microstructure built with the two martensitic variants $\mt U_1,\mt U_2$ can fill the domain $\Omega$ and match the previously nucleated plates $\mt F_1,\mt F_2$. 

In Section \ref{ShearComp} we study when two simple shears $\mt S_1,\mt S_2\in\R^{3\times3}$ are such that 
\beq
\label{ShearComp}
\rank(\mt F_1\mt S_1-\mt F_2\mt S_2)\leq 1, 
\eeq 
given $\mt F_1,\mt F_2$ with $\rank(\mt F_1-\mt F_2)=2$. 

In Section \ref{Minim} we give a mathematical characterisation of $V_{II}$ junctions as junctions reflecting \eqref{IncompII}, where the compatibility between $\mt F_1,\mt F_2$ is achieved thanks to single slips (and hence thanks to plastic effects). We also give sufficient conditions for $V_{II}$ junctions to be strict weak local minimisers for the simplified energy $I$ introduced in Section \ref{Nonlin}. 

In Section \ref{Appl} we study the possibility to form $V_{II}$ junctions in a one parameter family of deformation gradients, which approximates well the phase transformation in \Tn. The obtained results are discussed at the end of the section, and seem to be in good agreement with experimental observations. Finally, in Section \ref{ConcludingRmk} we give some concluding remarks and possible directions to extend the present work.
% 
%
%
% Surprisingly, taking in account just the physically relevant shears, that is the ones {\color{red}such that the crystalline structure of the austenite is preserved}, in the case of \Tn\,%and of Ni\textsubscript{65}Al\textsubscript{35} 
%there exists just two couples of shears $(\mt S_1,\mt S_2)$ and $(\mt S_2^{-1},\mt S_1^{-1})$ such that \eqref{ShearComp} is satisfied. In Section \ref{Minim} we give sufficient hypotheses to show that ...are weak local minimisers for the energy functional $I$. More precisely,...
%In conclusion, comparing our results with experimental observations... we deduce...
%We apply the theory developed in Section \ref{ShearComp} and in Section \ref{Minim} to \Tn. {\color{red} NO restriction on the number of wells and on the slip systems.}

\begin{figure}
\centering

\begin{subfigure}{.45\textwidth}
\begin{tikzpicture}[scale=0.80]
   \draw[black, thick] ({-1,-2}) -- ({-3},{-6});
   \draw[dashed,black, thick] (-1,-2) -- (0,0);
   \draw[black, thick] ({1,-2}) -- ({3},{-6});
   \draw[dashed,black, thick] (1,-2) -- (0,0);

   \draw[black, ->] ({-2,-4+1}) -- ({-0.5},{-1+1});
   \draw[black, ->] ({2,-4+1}) -- ({0.5},{-1+1});

   \draw[black, <->] ({-2.5-1,-5+0.5}) -- ({-2.5+1},{-5-0.5});
   \draw[black, <->] ({2.5-1,-5-0.5}) -- ({2.5+1},{-5+0.5});

\filldraw [red] ({2.5*cos(90)},{2.5*sin(90)-6}) circle (0pt) node[blue,anchor=north west,black] {$\mt 1$};
\filldraw [red] ({0.5*sin(240)+2.5*cos(240)-1.5},{-0.5*cos(240)+2.5*sin(240)-3}) circle (0pt) node[red,anchor=north east,black] {$\mt F_1$};
\filldraw [red] ({-0.5*sin(300)+2.5*cos(300)+1.5},{0.5*cos(300)+2.5*sin(300)-3}) circle (0pt) node[blue,anchor=north west,black] {$\mt F_2 $};
\end{tikzpicture}
\caption{\label{nucleazione VII}}
\end{subfigure}%
\begin{subfigure}{.35\textwidth}
\begin{tikzpicture}[scale=0.50]
   \draw[black, thick] ({0+0.5},{0}) -- ({4+0.5},{4});
   \draw[black, thick] ({0-0.5},{0}) -- ({-4-0.5},{4});
 \foreach \x in {1,2,3,4}{
\draw[black, thick] ({-4-0.5},{(\x-1)*2-4}) -- ({ -0.5 - (\x-1)-0.5},{-0.5 + (\x-1)});
}
 \foreach \x in {1,2,3,4}{
\draw[black, thick] ({-4-0.5},{(\x-1)*2-4+0.3}) -- ({ -0.4 - (\x-1) - 0.4*0.7071-0.5},{-0.5 + (\x-1)+0.4- 0.4*0.7071});
}
 \foreach \x in {1,2,3,4}{
\draw[black, thick] ({+4+0.5},{(\x-1)*2-4}) -- ({ +0.5 + (\x-1)+0.5},{-0.5 + (\x-1)});
}
 \foreach \x in {1,2,3,4}{
\draw[black, thick] ({+4+0.5},{(\x-1)*2-4+0.3}) -- ({ +0.4 + (\x-1) + 0.4*0.7071+0.5},{-0.5 + (\x-1)+0.4- 0.4*0.7071});
}
\filldraw [red] ({0},{3.5}) circle (0pt) node[blue,anchor=south,black] {$\mt 1$};
   \draw[black, dashed, thick] ({0},{1}) -- ({4},{4+1});
   \draw[black, dashed, thick] ({0},{1}) -- ({-4},{4+1});
      \draw[black, dashed, thick] ({0},{2}) -- ({4},{4+2});
   \draw[black, dashed, thick] ({0},{2}) -- ({-4},{4+2});
   \draw[black, ->] ({-2,2}) -- ({-0.5},{3.5});
   \draw[black, ->] ({2,2}) -- ({0.5},{3.5});
\end{tikzpicture}
\caption{\label{nucleazione BS}}
\end{subfigure}%
\caption{\label{domini} Formation of $V_{II}$ junctions in \Tn \cite{Inamura} and of non stress-free junctions in Ni\textsubscript{65}Al\textsubscript{35} \cite{BallS1}, respectively represented in Figure \ref{nucleazione VII} and Figure \ref{nucleazione BS}. 
In the former, it is experimentally observed that two different plates of martensite $\mt F_1,\mt F_2$ nucleate in an austenite domain and propagate until they meet. When the thickness of the two martensite plates increases, a $V_{II}$ junction is formed. In the latter, two different laminates of martensite nucleate at two different points of the sample and expand until they coalesce \cite{BallS1}. Further expansion leads to a non stress-free junction. In both cases the average deformation gradient in the martensite regions is very close to be rank one connected to the identity matrix, consistently with the moving mask approximation in \cite{FDP2}. In the pictures, the arrows represent the directions of expansion of the phase boundaries.
}
\end{figure}

%
%Two single martensitic variants, denoted by their corresponding deformation gradients $\mt F_1,\mt F_2\in\R^{3\times3},$ which are compatible with each other, and such that $\mt F_1$ is compatible with $\mt 1$, the deformation gradient in austenite, but $\rank(\mt F_2-1)>1$. These are called $V_I$ junctions and can be found for example in \cite{Inamura13,InamItoIII}. A similar situation occurs for 

\begin{figure}
\centering
\begin{subfigure}{.45\textwidth}
\begin{tikzpicture}[scale=0.40]
   \draw [step=0.05,dashed, black,thick,domain=0:180] plot ({5*cos(\x)}, {5*sin(\x)});
   \draw [step=0.05, black,thick,domain=180:240] plot ({5*cos(\x)}, {5*sin(\x)});
   \draw [step=0.05,black,thick,domain=300:360] plot ({5*cos(\x)}, {5*sin(\x)});
   \draw[red, thick] ({5*cos(240)},{5*sin(240)}) -- (0,0);
   \draw[blue, thick] ({5*cos(300)},{5*sin(300)}) -- (0,0);

   \draw [step=0.05, black,thick,domain=0:240] plot ({5*cos(\x)}, {5*sin(\x)+4});
   \draw [step=0.05,black,thick,domain=300:360] plot ({5*cos(\x)}, {5*sin(\x)+4});
   \draw[red, thick] ({5*cos(240)},{5*sin(240)+4}) -- (0,4);
   \draw[blue, thick] ({5*cos(300)},{5*sin(300)+4}) -- (0,4);
   \draw[blue, thick] ({5*cos(300)},{5*sin(300)+4}) -- ({5*cos(300)},{5*sin(300)});
   \draw[red, thick] ({5*cos(240)},{5*sin(240)}) -- ({5*cos(240)},{5*sin(240)+4});
   \draw[black, thick] ({5*cos(180)},{5*sin(180)}) -- ({5*cos(180)},{5*sin(180)+4});
   \draw[black, thick] ({5*cos(0)},{5*sin(0)}) -- ({5*cos(0)},{5*sin(0)+4});
   \draw[red, thick] (-0.02,0) -- (-0.02,4);
   \draw[blue, thick] (0.02,0) -- (0.02,4);
\filldraw [red] ({2.5*cos(90)},{2.5*sin(90)+4}) circle (0pt) node[blue,anchor=north west,black] {$\Omega$};
\filldraw [red] ({0.5*sin(240)+2.5*cos(240)+1},{-0.5*cos(240)+2.5*sin(240)+2}) circle (0pt) node[red,anchor=north east,black] {$\Gamma_1$};
\filldraw [red] ({-0.5*sin(300)+2.5*cos(300)-1},{0.5*cos(300)+2.5*sin(300)+2}) circle (0pt) node[blue,anchor=north west,black] {$\Gamma_2 $};
\end{tikzpicture}
\end{subfigure}%
\begin{subfigure}{.35\textwidth}
\begin{tikzpicture}[scale=0.50]
   \draw [step=0.05, black,thick,domain=0:240] plot ({5*cos(\x)}, {5*sin(\x)});
   \draw [step=0.05,black,thick,domain=300:360] plot ({5*cos(\x)}, {5*sin(\x)});
   \draw[red, thick] ({5*cos(240)},{5*sin(240)}) -- (0,0);
   \draw[blue, thick] ({5*cos(300)},{5*sin(300)}) -- (0,0);
  \draw [->,ultra thick] ({3.5*cos(300)},{3.5*sin(300)}) -- ({1.5*sin(300)+3.5*cos(300)},{-1.5*cos(300)+3.5*sin(300)}); 
  \draw [->,ultra thick] ({3.5*cos(240)},{3.5*sin(240)}) -- ({-1.5*sin(240)+3.5*cos(240)},{1.5*cos(240)+3.5*sin(240)}); 
\filldraw [red] ({-1*sin(240)+3.5*cos(240)},{1*cos(240)+3.5*sin(240)}) circle (0pt) node[anchor=north east,black] {$\vc n_1 $};
\filldraw [red] ({1*sin(300)+3.5*cos(300)},{-1*cos(300)+3.5*sin(300)}) circle (0pt) node[anchor=north west,black] {$\vc n_2 $};
\filldraw [red] ({0.5*sin(240)+2.5*cos(240)},{-0.5*cos(240)+2.5*sin(240)}) circle (0pt) node[red,anchor=north east,black] {$\Gamma_1$};
\filldraw [red] ({-0.5*sin(300)+2.5*cos(300)},{0.5*cos(300)+2.5*sin(300)}) circle (0pt) node[blue,anchor=north west,black] {$\Gamma_2 $};
\filldraw [red] ({2.5*cos(90)},{2.5*sin(90)}) circle (0pt) node[blue,anchor=north west,black] {$\Omega$};
\end{tikzpicture}
\end{subfigure}%
\caption{\label{domini} Representation of $\Omega,$ $\Gamma_1$ and $\Gamma_2$ as defined in \eqref{Omegadef} (on the left), and their projection on the plane spanned by $\vc n_1,\vc n_2$ (on the right).}
\end{figure}
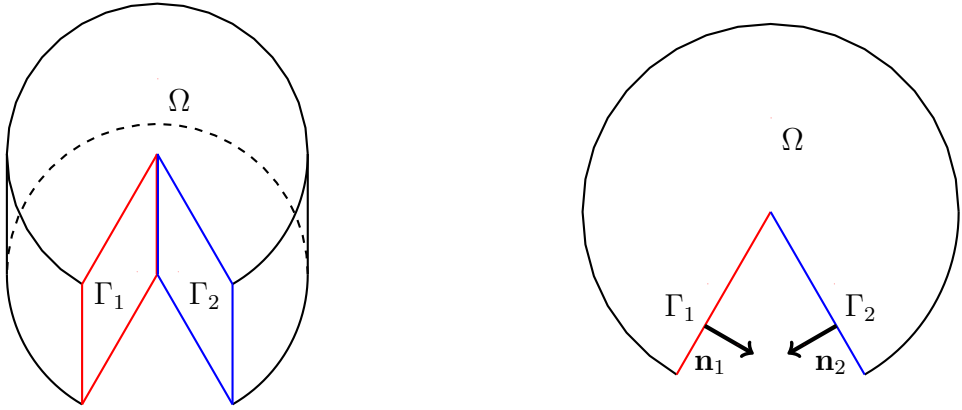

\section{A model for martensitic transformations with plastic shears}
\label{Nonlin}
The most successful mathematical theory to describe martensitic phase transitions at a continuum level is based on the theory of nonlinear elasticity and was first introduced in \cite{BallJames1}. This model has been successfully used to understand laminates and other microstructures (see \cite{BallJames1,Batt}), as much as the shape-memory effect (see \cite{BattSme}), and, more recently, hysteresis (see \cite{JamesMuller}).

In the nonlinear elasticity model, changes in the crystal lattice are interpreted as elastic deformations in the continuum mechanics framework, and legitimised by the Cauchy-Born hypothesis. The deformations minimize hence a free energy
\beq
\label{energia}
\EE(\vc y,\theta)  = \mint W_e(\nabla \vc y(\vc x),\theta)\,\mathrm d\vc x.
\eeq
%is based on the idea of looking at changes in the crystal lattice as elastic deformations in the continuum mechanics framework, and legitimised by the Cauchy-Born hypothesis. Following \cite{BallJames1}, we hence assume that the deformations minimize a free energy of the type
%\beq
%\label{energia}
%\EE(\vc y,\theta)  = \mint W_e(\nabla \vc y(\vc x),\theta)\,\mathrm d\vc x.
%\eeq
Here, $\theta$ denotes the temperature of the crystal, the domain (open and connected) $\Omega$ stands for the region occupied by a single crystal in the undistorted defect-free austenite phase at the transition temperature $\theta=\theta_T$, while $\vc y(\vc x)$ denotes the position of the particle $\vc x\in\Omega$ after the deformation of the lattice has occurred. By $ W_e$ we denote the free-energy density, depending on the temperature $\theta$ and the deformation gradient $\nabla \vc y$. The behaviour of $ W_e$ on $\theta$ must reflect the phase transition, that is when $\theta<\theta_T$ and $\theta>\theta_T$, the energy is respectively minimised by martensite and austenite. At $\theta=\theta_T$ all phases are energetically equivalent. 

Below, we assume $\theta<\theta_T$ to be fixed, and we consider $ W_e$ to be defined by (omitting for ease of notation the dependence on $\theta$)
\[
 W_e(\mt F) =
\begin{cases}
0,\qquad &\text{ if  $\mt F \in \bigcup_{i=1}^N SO(3)\mt U_i$,}\\
+\infty,\qquad &\text{otherwise},
\end{cases}
\]
where $\mt U_i = \mt U_i(\theta)\in\R^{3\times3}_{Sym^+}$ are the $N$ positive definite symmetric matrices corresponding to the transformation from austenite to the $N$ variants of martensite at temperature $\theta$. Here and below $\R^{3\times3}_{Sym^+}$ represents the set of $3\times3$ symmetric and positive definite matrices. 
We remark that, defined $\mathcal{P}_a,\mathcal{P}_m$ as the point groups of austenite and martensite respectively (i.e., the sets of rotations that map the austenite and martensite lattices back to themselves), and denoting by $\#$ their cardinality, we have $N = \frac{\#\mathcal{P}_a}{\#\mathcal{P}_m}$. Also, for each $\mt U_i,\mt U_j$ there exists $\mt R\in \mathcal{P}_a$ such that $\mt R^T\mt U_j\mt R = \mt U_i$, so that $\mt U_i,\mt U_j$ share the same eigenvalues. We point out that this energy satisfies frame indifference. That is, for all $\mt F\in\R^{3\times 3}$ and all rotations $\mt R\in SO(3)$, $ W_e(\mt R \mt F)= W_e(\mt F)$, reflecting the invariance of the free-energy density under rotations. Furthermore, $ W_e$ respects lattice symmetries, i.e., $ W_e(\mt {FQ})= W_e(\mt F)$ for all $\mt F\in\mathcal \R^{3\times3}$ and all rotations $\mt Q\in\mathcal{P}_{a}$. Such a $ W_e$ has been already considered for example in \cite{BallJames1,BallChuJames,BallKoumatosQC,FDP1} and corresponds to the physical situation where the elastic constants are infinity, which, as remarked in \cite{BallChuJames}, is usually a reasonable approximation when studying martensitic phase transitions with no external (or at least small) load. Considering $ W_e$ to be $+\infty$ out of the energy wells is also known as the elastically rigid approximation, and is often used in the context of elasto-plasticity since elastic effects in metals are usually much smaller than plastic ones (see e.g., \cite{OR99}).
\\

We now want to take in account the presence of plastic effects in the nonlinear elasticity model. Following %\cite{Kro, Lee, CDK, ReinaConti}
\cite{ReinaConti,ReinaContiS} and references therein, we use the multiplicative decomposition of the deformation gradient
$$
\nabla \vc y = \mt F^e\mt F^p, 
$$
where $\mt F^e, \mt F^p$ respectively represent the elastic and the plastic component of the deformation gradient. The former describes the part of the deformation gradient which is reversible, while the latter captures the irreversible deformations given by the slip of {atoms along planes.} In solid crystals, {atoms can slip just in particular directions on particular planes.} For this reason, $\mt F^p$ must be of the form
$$
\mt F^p  = \mt 1+ s\vcg\phi\otimes \vcg{\psi}
$$
where $s\in\R$, $\vcg \phi\in\R^3$, $\vcg\psi\in\mathbb{S}^2$, $\vcg\phi\cdot\vcg\psi=0$, and $\vcg \phi\otimes \vcg \psi\in \mathcal{S}\subset\R^{3\times 3}.$ Here, $\vcg\phi$ is called slip direction and $\vcg\psi$ is called the slip plane, while $s$ is the amount of shear. The set $\mathcal S$ is the set of all possible {slip systems}. For
% face centred cubic austenite, for example, $\mathcal S$ contains the matrices $\vcg \phi\otimes\vcg\psi$ where $\vcg\phi$ is of type $\langle \bar 1,1,0\rangle$ and $\vcg\psi$ is of type $\{1,1,1\}$. Similarly, for 
body centred cubic austenite, which is the case of \Tn, %both \Tn\, and Ni\textsubscript{65}Al\textsubscript{35},
there are six planes of type $\{1,1,0\}$ each with two orthogonal $\langle \bar 1,1,1\rangle$ $\langle \bar 1,1,-1\rangle$ directions, twenty-four planes $\{1,2,3\}$ and twelve planes $\{1,1,2\}$ each with one orthogonal $\langle \bar 1,1,1\rangle$ direction. 

Following the approach of \cite{CDK,AD,DRMD} and references therein, we adopt the time discrete variational approach to elasto-plasticity \cite{OR99}, restricting ourselves to the first time step where most of the plastic events take place. %This is a simplification of reality, but
We further assume cross hardening \cite{AD}, which means that activity in one slip system suppresses the activity in all other slip systems at the same point. For this reason, we choose a plastic energy density $W_p$ of the type
\[
W_p:=
\begin{cases}
f(|s|), \qquad &\text{ if $\mt F^p = \mt 1+s\vcg\phi\otimes\vcg\psi,$ and $\vcg\phi\otimes\vcg\psi\in\mathcal{S}$,}\\
+\infty, \qquad &\text{ otherwise,}
\end{cases}
\]
where $f\colon[0,\infty)\to[0,\infty)$ is assumed to be continuous, strictly monotone and to satisfy $f(0)=0.$ Here, as for $W_e$, $W_p$ could be finite and continuous. This approximation however simplifies the analytical study of the energy and allows to neglect any dependence of the results on the shape of the energy density out of its minima. We are now ready to introduce an elasto-plastic energy density $W$ defined as
\[
W(\mt F):= \min \bigl\{W_e(\mt F^e) + W_p(\mt F^p) : \mt F^e\mt F^p =\mt F\bigr\},
\]
and an energy functional $I$ for the system
\beq
\label{Idef}
I(\vc y,\Omega) = \mint W(\nabla\vc y)\,\mathrm d\vc x.
\eeq
We remark that the energy $I$ is not weakly lower semicontinuous and in general minimisers do not exist.

\section{A rigidity result for the two well problem}
\label{Rigid}
In this section, we study the existence of solutions to Problem \eqref{diffInclIntro}. As explained in the introduction, this gives a way to justify the formation of non stress-free junctions between martensitic plates.
 
Let $\vc n_1,\vc n_2\in \mathbb{S}^2$, $\vc n_1\times\vc n_2\neq 0$ and let us set $\vc n_\perp := \frac{\vc n_1\times\vc n_2}{|\vc n_1\times\vc n_2|}$. For $R>0$, we define (see Figure \ref{domini})
\beq
\begin{split}
\label{Omegadef}
\Omega &:= \bigl\{\vc x\in\R^3\colon \min\{\vc x\cdot\vc n_1,\vc x\cdot\vc n_2\} < 0 %\bigr\}\cup\bigl\{\vc x\in\R^3\colon \vc x\cdot\vc n_2 < 0\bigr\}\cap\bigl\{\vc x\in\R^3\colon 
,\;\vc x\cdot\vc n_\perp \in (0,1)\text{ and }|\vc x-\vc n_\perp(\vc n_\perp\cdot \vc x)|<R \bigr\},\\
\Gamma_1 &: = \bigl\{\vc x\in\partial\Omega\colon \vc x\cdot\vc n_1= 0 \text{ and }\vc x\cdot\vc n_2>0 \bigr\},\\
\Gamma_2 &:= \bigl\{\vc x\in\partial\Omega\colon \vc x\cdot\vc n_2= 0 \text{ and }\vc x\cdot\vc n_1>0  \bigr\}.
\end{split}
\eeq
Theorem \ref{Incompatible wells} below states that, under suitable boundary conditions, the differential inclusion \eqref{diffInclIntro} has no solution. More precisely, under our assumptions, the boundary conditions on $\Gamma_1,\Gamma_2$ need to satisfy a compatibility condition, which is unexpected and strongly dictated by the structure of the two well problem. Also, in order to have no solution to the two well problem, we do not need to impose boundary conditions on the whole boundary of the domain, but just on a corner of it (namely, on $\Gamma_1\cup\Gamma_2$). By the work in \cite{MullerSverak} we know that, under suitable boundary conditions, there are infinitely many solutions to the differential inclusion $\nabla \vc y(\vc x) \in \bigl(SO(3)\mt U_1\cup SO(3)\mt U_2\bigr)$, a.e. $\vc x\in\Omega$. 
Our result provides an example of boundary conditions where the convex-integration techniques used in \cite{MullerSverak} cannot be applied. Further, our result holds also for the relaxed differential inclusion $\nabla \vc y(\vc x) \in \bigl(SO(3)\mt U_1\cup SO(3)\mt U_2\bigr)^{qc},$ a.e. $\vc x\in\Omega$. 
The proof relies on a result by Ball and James \cite{BallJamesPlane} which states that, after a suitable change of coordinates, in the two well problem there exists one direction (in the proof below $\vc u_2$) where the martensitic deformation coincides with a constant elongation/contraction composed with a constant rotation. The proof exploits the fact that this direction and this rotation must be coherent across the whole domain and compatible with the boundary conditions. The result reads as follows:
%We can prove the following theorem: 
\begin{theorem}
\label{Incompatible wells}
Let $\mt U_1,\mt U_2\in \R^{3\times3}_{Sym^+}$ such that there exists $\hat{\vc e}\in\mathbb{S}^2$ satisfying
\beq
\label{ehatholds}
\mt U_1 = \bigl(2\hat{\vc e}\otimes\hat{\vc e}-\mt 1	\bigr)\mt U_2 \bigl(2\hat{\vc e}\otimes\hat{\vc e}-\mt 1	\bigr). 
\eeq
Suppose further that $\vc u_*:=\hat{\vc e}\times \mt U_1^2\hat{\vc e}$ is such that $\vc u_*\times \vc n_\perp\neq \vc 0$. Then, there exists $\vc y\in W^{1,\infty}(\Omega;\R^3)$ such that $\vc y$ is $1-1$ in $\Omega$,
\beq
\label{defKqc}
\nabla \vc y(\vc x) \in K^{qc} := \bigl(SO(3)\mt U_1\cup SO(3)\mt U_2\bigr)^{qc},\qquad\text{a.e. $\vc x\in\Omega$},
\eeq
and
\[
\vc y(\vc x) = 
\begin{cases}
\mt F_1\vc x,\qquad&\text{on $\Gamma_1,$}\\
\mt F_2\vc x,\qquad&\text{on $\Gamma_2,$}
\end{cases}
\]
for some $\mt F_1,\mt F_2\in K^{qc}$, if and only if 
there exists $\vc d\in\R^3$ such that
\beq
\label{conclusion}
\mt F_1-\mt F_2 = \vc d\otimes (\vc u_*\times \vc n_\perp).
%\rank(\mt F_1-\mt F_2)\leq 1 ,\qquad \text{and, if $\rank(\mt F_1-\mt F_2)= 1$, }\quad \min\bigl\{\vc u_*\cdot \vc n_1,\vc u_*\cdot \vc n_2\bigr\} < 0
%%\min\bigl\{((\mt U^{-1}_1\hat{\vc e})\times(\mt U_1\hat{\vc e}))\cdot \vc n_1,((\mt U^{-1}_1\hat{\vc e})\times(\mt U_1\hat{\vc e}))\cdot \vc n_2\bigr\} < 0.
\eeq
\end{theorem}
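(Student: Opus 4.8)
The statement is an equivalence, so I would treat the two implications separately, with all the work in the ``only if'' direction.

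For the ``if'' direction I would argue by an explicit two-piece construction. Set $\vc N:=\vc u_*\times\vc n_\perp$ and assume $\mt F_1-\mt F_2=\vc d\otimes\vc N$. Since $\vc N\cdot\vc n_\perp=0$, the plane $\Pi_0:=\{\vc x:\vc x\cdot\vc N=0\}$ contains the common edge $\Gamma_1\cap\Gamma_2$ (which runs along $\vc n_\perp$ through the origin). I would then define $\vc y$ to be $\mt F_1\vc x$ on the side of $\Pi_0$ carrying $\Gamma_1$ and $\mt F_2\vc x$ on the side carrying $\Gamma_2$. Continuity across $\Pi_0$ is automatic because the jump $\mt F_1-\mt F_2$ is rank one with conormal $\vc N$; the inclusion $\nabla\vc y\in K^{qc}$ holds since $\nabla\vc y$ takes only the two values $\mt F_1,\mt F_2\in K^{qc}$; and the two boundary conditions are met as soon as $\Pi_0$ separates $\Gamma_1$ from $\Gamma_2$, which is exactly where $\vc u_*\times\vc n_\perp\neq\vc 0$ is used (so that $\Pi_0$ is a genuine plane cutting the wedge along its edge rather than degenerating). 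The only nonroutine point is injectivity: because $\det\mt F_1=\det\mt F_2=\det\mt U_1=:\delta>0$ and the two affine pieces agree on $\Pi_0$, the map is an orientation-preserving fold, and a short monotonicity/degree argument gives that $\vc y$ is one-to-one on $\Omega$.

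For the ``only if'' direction I would first pin down the rigid direction. A direct computation from $\mt U_1=\mt Q\mt U_2\mt Q$, with $\mt Q:=2\hat{\vc e}\otimes\hat{\vc e}-\mt 1$ the $180^\circ$ rotation about $\hat{\vc e}$, shows that $\vc u_*=\hat{\vc e}\times\mt U_1^2\hat{\vc e}$ spans $\ker(\mt U_1^2-\mt U_2^2)$, i.e. $\mt U_1^2\vc u_*=\mt U_2^2\vc u_*=:\vc c_0$. Consequently $\mt F^T\mt F\vc u_*=\vc c_0$ for \emph{every} $\mt F\in SO(3)\mt U_1\cup SO(3)\mt U_2$, so each well sends $\vc u_*$ to a vector of the single length $\rho:=|\mt U_1\vc u_*|$. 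I would then invoke the Ball--James result \cite{BallJamesPlane}: on the connected $\Omega$ any $\vc y$ with $\nabla\vc y\in K^{qc}$ is rigid along $\vc u_*$, so there is a constant vector $\vc w$ (a fixed stretch $\rho$ composed with a fixed rotation) with $\nabla\vc y\,\vc u_*=\vc w$ a.e. Since the quasiconvex function $\mt F\mapsto|\mt F\vc u_*|^2$ attains its maximum $\rho^2$ on $K^{qc}$ precisely along our solution, a gradient-Young-measure/Jensen argument upgrades this to the \emph{linear} conservation law $\nabla\vc y^T\vc w=\vc c_0$, i.e. $\nabla(\vc w\cdot\vc y)\equiv\vc c_0$ on $\Omega$; applying the same reasoning to $\cof\nabla\vc y$ (whose $2\times2$ minors are null Lagrangians and which is rigid along the companion direction $\vc v_*\in\ker(\mt U_1^{-2}-\mt U_2^{-2})$) yields a second conservation law in the deformed configuration.

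The two boundary conditions are then exploited as follows. Continuity of $\vc y$ on the shared edge $\Gamma_1\cap\Gamma_2$ (directed along $\vc n_\perp$) immediately forces $\mt F_1\vc x=\mt F_2\vc x$ there, hence $(\mt F_1-\mt F_2)\vc n_\perp=\vc 0$. Evaluating the conservation laws against the affine traces on $\Gamma_1$ and on $\Gamma_2$ (each a half-plane through the origin) gives relations of the form $\mt F_i^T\vc w=\vc c_0+\alpha_i\vc n_i$ together with their cofactor analogue; combining these with $(\mt F_1-\mt F_2)\vc n_\perp=\vc 0$ and the transversality $\vc u_*\times\vc n_\perp\neq\vc 0$ should force the remaining relation $(\mt F_1-\mt F_2)\vc u_*=\vc 0$. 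Once both $(\mt F_1-\mt F_2)\vc u_*=\vc 0$ and $(\mt F_1-\mt F_2)\vc n_\perp=\vc 0$ are established, $\mt F_1-\mt F_2$ annihilates $\mathrm{span}(\vc u_*,\vc n_\perp)$ and must therefore equal $\vc d\otimes\vc N$ with $\vc N\parallel\vc u_*\times\vc n_\perp$, which is the assertion.

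I expect the main obstacle to be exactly the passage securing $(\mt F_1-\mt F_2)\vc u_*=\vc 0$. Producing $\nabla\vc y\,\vc u_*=\vc w$ is the cited rigidity, but the real difficulty is that the data are prescribed only on the corner $\Gamma_1\cup\Gamma_2$ and the rigid $\vc u_*$-characteristics issuing from $\Gamma_1$ never reach $\Gamma_2$ (they leave through the free outer boundary). Hence the condition on $\vc u_*$ cannot be read off a single characteristic and has to be extracted from the \emph{global} linear and cofactor conservation laws together with the edge condition; checking that these genuinely suffice, and that the degenerate alignments $\vc u_*\cdot\vc n_i=0$ do not spoil the separation/transversality, is the crux of the proof.
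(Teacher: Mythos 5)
Your overall frame is in fact the paper's: the necessity argument hinges on the Ball--James plane-strain rigidity along $\vc u_*$ (the paper implements it after the linear change of variables $\mt L$, you invoke it directly), the edge condition $(\mt F_1-\mt F_2)\vc n_\perp=\vc 0$ is obtained identically, and your ``if'' construction is essentially the paper's sufficiency proof (injectivity is immediate there because both pieces have the same positive determinant, so the glued map shears rather than folds). The genuine gap is exactly the step you flag at the end, and the mechanism you propose for it does not work: the conservation-law/trace identities cannot force $(\mt F_1-\mt F_2)\vc u_*=\vc 0$. Concretely, the trace of $\nabla(\vc w\cdot\vc y)=\vc c_0$ on $\Gamma_i$ gives $\mt F_i^T\vc w=\vc c_0+\alpha_i\vc n_i$, i.e. $\vc w=\mt F_i\vc u_*+\alpha_i\mt F_i^{-T}\vc n_i$, and since $|\vc w|^2=\vc u_*\cdot\vc c_0=|\mt F_i\vc u_*|^2$ one only gets the dichotomy $\alpha_i=0$ or $\alpha_i=-2(\vc u_*\cdot\vc n_i)/|\mt F_i^{-T}\vc n_i|^2$ --- these two roots are precisely the paper's $\lambda$ and $\mu$. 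The cofactor law adds nothing, because $\cof\nabla\vc y\,\vc c_0=\det(\mt U_1)\,\vc w$ is algebraically equivalent to the linear law. And no further combination with the edge condition and well membership excludes the second root: take $\mt F_2=\mt U_1$, $\vc w=\mt U_1\vc u_*$, and $\mt F_1=\mt R\mt U_1$, where $\mt R$ is a rotation with axis $\mt U_1\vc n_\perp$ carrying $\vc w$ to $\vc w+\alpha_1\mt U_1^{-1}\vc n_1$ with the nonzero root $\alpha_1$ (such an $\mt R$ exists because $\mt U_1^{-1}\vc n_1\cdot\mt U_1\vc n_\perp=\vc n_1\cdot\vc n_\perp=0$, so the two vectors have equal length and equal component along the axis). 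This pair lies in $K^{qc}$, satisfies the edge condition and every one of your trace identities, yet $\rank(\mt F_1-\mt F_2)=\rank(\mt R-\mt 1)=2$, violating \eqref{conclusion}. So whatever rules out the reflection root must see more than those identities; by the theorem such a pair cannot be the boundary datum of any admissible $\vc y$, but your argument cannot detect this.

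What the paper uses, and what is missing from your sketch, is an interior propagation argument. Since $\nabla\vc y\,\vc u_*=\vc w$ is constant, the affine datum on $\Gamma_1$ is transported along the $\vc u_*$-characteristics into the wedge $\Omega_1\subset\Omega$ they sweep out (this needs $\vc u_*\cdot\vc n_1\neq0$; when $\vc u_*\cdot\vc n_1=0$ the spurious root vanishes and there is nothing to prove), so $\vc y=\mt G\vc x$ on $\Omega_1$ for a single $\mt G\in K^{qc}$ with $\mt G\vc u_*=\vc w$. The Hadamard jump condition across $\Gamma_1$ gives $\mt G-\mt F_1=\vc c\otimes\vc n_1$; the constraint $\det\mt G=\det\mt F_1$ (every element of $K^{qc}$ has determinant $\det\mt U_1$) gives $\vc c\cdot\mt F_1^{-T}\vc n_1=0$; and combining with $(\vc u_*\cdot\vc n_1)\vc c=\alpha_1\mt F_1^{-T}\vc n_1$ forces $\alpha_1=0$. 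The same argument at $\Gamma_2$ gives $\alpha_2=0$, hence $\mt F_1\vc u_*=\vc w=\mt F_2\vc u_*$, which together with the edge condition yields \eqref{conclusion}. Note that your worry that the characteristics issuing from $\Gamma_1$ never reach $\Gamma_2$ is beside the point: they only need to reach the interior of $\Omega$; it is the constancy of $\vc w$ (equivalently, of the single rotation $\mt Q$ in the plane-strain representation over the connected domain) that couples the two boundary pieces, not a characteristic joining them.
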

\begin{proof}
\textit{Necessity.} We first notice that $\Omega$ is Lipschitz, and therefore by Morrey's imbeddings $\vc y\in C^{0,1}(\overline{\Omega};\R^3)$ (see e.g., \cite{Adams}). Therefore, $\vc y$ is continuous on the line $\vc n_\perp$, that is
\beq
\label{rk1firststep}
(\mt F_1 - \mt F_2)\vc n_\perp = \vc 0.
\eeq 
Now, given \eqref{ehatholds}, \cite[Prop. 12]{JamesHyst} guarantees the existence of $\mt R\in SO(3)$, $\vc b\in\R^3$, $\vc m\in\mathbb{S}^2$ such that
\beq
\label{Twinning}
\mt R\mt U_2 = \mt U_1 + \vc b\otimes \vc m.
\eeq
Without loss of generality, we can take from standard twinning theory (see e.g., \cite{Batt}) $\vc m = \hat{\vc e},$ $\vc b = 2\Bigl(\frac{\mt U_1^{-1}\hat{\vc e}}{|\mt U_1^{-1}\hat{\vc e}|^2} -\mt U_1\hat{\vc e}\Bigr)$. The same results can be achieved by taking the only other solution of \eqref{Twinning}, that is $\vc b = \mt U_1\hat{\vc e},$ $\vc m = 2\Bigl(\hat{\vc e}- \frac{\mt U_1^2\hat{\vc e}}{|\mt U_1\hat{\vc e}|^2}\Bigr)$. We remark that by \eqref{Twinning} we have that $\det \mt U_2=\det \mt U_1 +\mt U_1^{-1}\vc m\cdot\vc b$ and hence, as $\det\mt U_1 = \det \mt U_2$, $\mt U_1^{-1}\vc m\cdot\vc b=0.$
Following the strategy of \cite{BallJames2}, let us define the orthonormal system of coordinates
$$
\vc u_1:=\frac{\mt U_1^{-1}\vc m}{|\mt U_1^{-1}\vc m|} ,\qquad \vc u_3 := \frac{\vc b}{|\vc b|}, \qquad\vc u_2 := \vc u_3\times\vc u_1,
$$
and let 
$$
\mt L:= \mt U^{-1}_1\bigl( \mt 1 -\delta\vc u_3\otimes \vc u_1\bigr),\qquad \delta:= \frac{1}{2}|\mt U_1^{-1}\vc m||\vc b|.
$$
Therefore, setting $\vc z(\vc x):=\vc y(\mt L\vc x)$ the problem becomes equivalent to finding a $1-1$ map $\vc z\in W^{1,\infty}(\mt L^{-1}\Omega;\R^3)$ such that
\beq
\label{defKqcT}
\nabla \vc z(\vc x) \in \bigl(SO(3)\mt S^-\cup SO(3)\mt S^+\bigr)^{qc},\qquad\text{a.e. $\vc x\in \Omega^L$},
\eeq
with $\mt S^\pm = \mt 1 \pm \delta\vc u_3\otimes\vc u_1,$ and
\beq
\label{bcz}
\vc z(\vc x) = 
\begin{cases}
\mt F_1\mt L\vc x,\qquad&\text{for every $\vc x\in\Gamma_1^L,$}\\
\mt F_2\mt L\vc x,\qquad&\text{for every $\vc x\in\Gamma_2^L.$}
\end{cases}
\eeq
Here,
$$
\Omega^L:= \bigl\{\vc x\in\R^3\colon \mt L\vc x\in\Omega \bigr\},\quad 
\Gamma_1^L:= \bigl\{\vc x\in\R^3\colon \mt L\vc x\in\Gamma_1 \bigr\},\quad 
\Gamma_2^L:= \bigl\{\vc x\in\R^3\colon \mt L\vc x\in\Gamma_2 \bigr\}.
$$
Following \cite{BallJames2}, we can characterise the set $K_L:=\bigl(SO(3)\mt S^-\cup SO(3)\mt S^+\bigr)^{qc}$ 
as
\[
%\bigl(SO(3)\mt S^-\cup SO(3)\mt S^+\bigr)^{qc} 
K_L= \Set{
\mt F\in\R^{3\times3}
 \bigg|\; \text{\parbox{3.5in}{\centering 
$\mt F^T\mt F = \alpha\vc u_1\otimes\vc u_1 + \vc u_2\otimes\vc u_2 + \gamma \vc u_3\otimes \vc u_3 + \beta \vc u_1\odot\vc u_3,$ $0<\alpha\leq 1+\delta^2$, $0<\gamma\leq 1$, $\alpha\gamma-\beta^2 = 1$
 }}},
\]
%in \eqref{defKqcT} as the set of matrices $\mt F\in \R^{3\times 3}$ such that
%\beq 
%\label{qcSpm}
%\mt F^T\mt F = \alpha\vc u_1\otimes\vc u_1 + \vc u_2\otimes\vc u_2 + \gamma \vc u_3\otimes \vc u_3 + \beta \vc u_1\odot\vc u_3, %\qquad\text{a.e.}
%\eeq
%for some $\alpha,\beta,\gamma$ satisfying
%\beq
%\label{qcSpm2}
%0<\alpha\leq 1+\delta^2,\quad 0<\gamma\leq 1,\quad \alpha\gamma-\beta^2 = 1,
%\eeq
and where we denoted $\vc u_1\odot\vc u_3 = \vc u_1\otimes\vc u_3+\vc u_3\otimes\vc u_1$. Let us now define 
$$s_i:=\vc x\cdot \vc u_i,\qquad\alpha_i := \mt L^T\vc n_1\cdot \vc u_i,\qquad\beta_i := \mt L^T\vc n_2\cdot \vc u_i,
$$
and remark that \cite{BallJamesPlane} together with the definition of $K_L$ yield
\beq
\label{plainstrain}
%\vc z = \mt Q\left[\begin{array}{ c } z_1(s_1,s_3) \\ s_2 \\ z_3(s_1,s_3) \end{array}\right],
\vc z = \mt Q\bigl(z_1(s_1,s_3)\vc u_1 + s_2 \vc u_2 + z_3(s_1,s_3) \vc u_3\bigr),
\eeq
for some Lipschitz scalar functions $z_1,z_2$ and some $\mt Q\in SO(3)$. Assume now that $\alpha_3\neq 0$, the other cases %case where $\alpha_3=0$ and $\alpha_i\neq0$ for $i=1,2$ 
can be treated similarly to deduce \eqref{sottovedremo} below. In this case, the fact that $\vc z(\vc x) = \mt F_1\mt L\vc x$ on $\Gamma_1^L$ (cf. \eqref{bcz}) together with $\mt 1 = \vc u_1\otimes \vc u_1 +\vc u_2\otimes \vc u_2+\vc u_3\otimes \vc u_3 $ imply that
$$
\vc u_2^T\mt Q^T \vc z = s_2 = \vc u_2^T\mt Q^T\mt F_1\mt L \Bigl(s_1\vc u_1 + s_2\vc u_2 - \frac{\vc u_3}{\alpha_3}(\alpha_1 s_1+\alpha_2s_2)\Bigr)  ,
$$
where $(s_1,s_2)$ are coordinates on $\Gamma_1^L$, that is
\beq
\label{possibilis}
(s_1,s_2)\in \bigl\{(t_1,t_2)\in\R^2\colon t_1=\vc u_1\cdot\vc x,\, t_2=\vc u_2\cdot\vc x, \vc x\in \Gamma_1^L\bigr\}.%\quad s_3\in\bigl\{s\in\R\colon s=\vc u_3\cdot\vc x,\, \vc x\in \Gamma_2^L\bigr\}.
%s_1\in \bigl\{s\in\R\colon s=\vc u_1\cdot\vc x,\, \vc x\in \Gamma_1^L\bigr\},\quad s_3\in\bigl\{s\in\R\colon s=\vc u_3\cdot\vc x,\, \vc x\in \Gamma_2^L\bigr\}.
\eeq
Therefore, varying $s_1$ and $s_2$ in an open interval we deduce that
\begin{align*}
\vc u_2^T\mt Q^T\mt F_1\mt L \Bigl(\vc u_1 - \vc u_3\frac{\alpha_1}{\alpha_3}\Bigr) = 0,\\
\vc u_2^T\mt Q^T\mt F_1\mt L \Bigl(\vc u_2 - \vc u_3\frac{\alpha_2}{\alpha_3}\Bigr) = 1.
\end{align*}
There exists hence $\lambda\in\R$ such that
$$
\bigl(\mt L^T\mt F_1^T\mt Q-\mt 1\bigr) \vc u_2 = -\frac{\lambda}{\alpha_3} \bigl(\alpha_3\vc u_1-\alpha_1\vc u_3\bigr)\times \bigl(\alpha_2\vc u_3-\alpha_3\vc u_2\bigr) = \lambda \mt L^T\vc n_1,
$$
that is
\beq
\label{sottovedremo}
\mt Q\vc u_2 = \mt F_1^{-T}\mt L^{-T}\bigl (\vc u_2 + \lambda \mt L^T\vc n_1\bigr).
\eeq
Taking the norm on both sides, we deduce that $\lambda$ must satisfy
\beq
\label{eqlambda1}
1 = |\mt F_1^{-T}\mt L^{-T}\vc u_2|^2+ \lambda^2|\mt F_1^{-T}\vc n_1|^2 + 2\lambda\bigl(\mt L^{-1}\mt F_1^{-1}\mt F_1^{-T}\mt L^{-T}\vc u_2\bigr)\cdot \mt L^T\vc n_1.
\eeq
We notice that $\mt F_1\in K^{qc}$ implies that $\mt F_1\mt L \in K_L$ and hence $\mt L^T\mt F_1^T\mt F_1\mt L\vc u_2 = \vc u_2$. This yields 
$$ 
\bigl (\mt L^T\mt F_1^T\mt F_1\mt L\bigr)^{-1}\vc u_2=\mt L^{-1}\mt F_1^{-1}\mt F_1^{-T}\mt L^{-T}\vc u_2 = \vc u_2.
$$
Therefore, $\mt F_1^{-T}\mt L^{-T}\vc u_2\cdot\mt F_1^{-T}\mt L^{-T}\vc u_2=1$ and \eqref{eqlambda1} simplifies to 
$$
0 = \lambda^2|\mt F_1^{-T}\vc n_1|^2 + 2\alpha_2\lambda,
$$
that is $\lambda=0$ or $\lambda = -\frac{2\alpha_2}{|\mt F_1^{-T}\vc n_1|^2}$.  In the same way, we can show that
\beq
\label{sottovedremo2}
\mt Q\vc u_2 = \mt F_2^{-T}\mt L^{-T}\bigl (\vc u_2 + \mu \mt L^T\vc n_2\bigr),
\eeq
with $\mu=0$ or $\mu = -\frac{2\beta_2}{|\mt F_2^{-T}\vc n_2|^2}$. We now claim that, even if $\alpha_2,\beta_2 \neq 0$, the only possible solution is $\lambda = \mu = 0$. Indeed, let $\alpha_2\neq 0$ (the case $\beta_2\neq 0$ can be treated similarly), and let us notice that
\begin{align*}
z_1(s_1,s_3) = \vc u_1\mt Q^T \mt F_1\mt L \Bigl(s_1\vc u_1 + s_3\vc u_3 - \frac{\vc u_2}{\alpha_2}(\alpha_1 s_1+\alpha_3s_3)\Bigr) ,\\ 
z_3(s_1,s_3) = \vc u_3\mt Q^T \mt F_1\mt L \Bigl(s_1\vc u_1 + s_3\vc u_3 - \frac{\vc u_2}{\alpha_2}(\alpha_1 s_1+\alpha_3s_3)\Bigr) ,
\end{align*}
for every $s_1,s_3$ as in \eqref{possibilis}. As a consequence, $z_1,z_3$ are linear on the boundary, and hence are linear on the set
$$
\Omega_1:=\Bigl\{\vc x \in \Omega_L\colon \vc x\cdot \mt L^T\vc n_1\leq 0,\, \frac{((\mt L\vc n_1\times \mt L\vc n_2)\times \vc u_2)\cdot \vc x }{\sign \alpha_2}\leq 0	\Bigr\}.
$$
This is the subset of $\Omega_L$ where the boundary condition is propagated along the characteristic lines in direction $\vc u_2$. 
Therefore, given \eqref{defKqcT}, we deduce the existence of $\mt G\in K_L$ such that $\vc z(\vc x)=\mt G\vc x$ in $\Omega_1$. A version of the Hadamard jump condition (see e.g., \cite[Prop. 1]{BallJames1}) yields 
\beq
\label{rankoneFG}
\mt G -\mt F_1\mt L = \vc c\otimes \mt L^T\vc n_1,
\eeq
for some $\vc c\in\R^3$. The fact that $\mt G\in K_L$ together with \eqref{plainstrain} imply
$$
\mt Q^T \mt G\vc u_2 =  \vc u_2.
$$
Exploiting \eqref{sottovedremo} and \eqref{rankoneFG} we deduce
\beq
\label{intermedioLambda0}
\mt F_1^{-T}\mt L^{-T}(\vc u_2 + \lambda \mt L^T\vc n_1) = \mt F_1\mt L\vc u_2 + \alpha_2\vc c .
\eeq
Now, polar decomposition implies $\mt F_1\mt L = \mt R_1\mt V_1$, for some $\mt R_1\in SO(3)$, $\mt V_1\in\R^{3\times 3}_{Sym^+}.$ As $\mt F_1\mt L\in K_L$ we also have $\mt V_1\vc u_2 = \vc u_2$ and $\mt V_1^{-1}\vc u_2 = \vc u_2$, as well as $(\mt F_1\mt L)^{-T}\vc u_2 = \mt R_1\vc u_2.$ Thus, \eqref{intermedioLambda0} becomes
\beq
\label{defc}
\vc c = \frac{\lambda}{\alpha_2}  \mt F_1^{-T}\vc n_1.
\eeq
At the same time, the fact that $\mt G,\mt F_1\mt L \in K_L$ implies that $\det \mt G = \det (\mt F_1\mt L) = 1$. But \eqref{rankoneFG} entails,
$$
\det \mt G = \det (\mt F_1\mt L)(1 + \mt L^{-1}\mt F_1^{-1}\vc c\cdot \mt L^T\vc n_1 )=  \det( \mt F_1\mt L) \Bigl(1 + \frac{\lambda}{\alpha_2}|\mt F_1^{-T}\vc n_1|^2 \Bigr),
$$
which implies that $\lambda=0$. The same argument can be applied to prove $\mu =0.$ Therefore, \eqref{sottovedremo} and \eqref{sottovedremo2} simplify to
$$
\mt Q\vc u_2 = \mt F_1^{-T}\mt L^{-T}\vc u_2 = \mt R_1 \vc u_2 = \mt F_1\mt L\vc u_2,\quad\text{ and }\quad \mt Q\vc u_2 = \mt F_2^{-T}\mt L^{-T}\vc u_2 = \mt R_2 \vc u_2 = \mt F_2\mt L\vc u_2
$$
from which we deduce 
\beq
\label{ranksecondSt}
\bigl(\mt F_1-\mt F_2\bigr) \mt L\vc u_2 = 0.
\eeq
Here $\mt R_2\in SO(3)$ is given by the polar decomposition of $\mt F_2\mt L$, and is such that $\mt F_2\mt L = \mt R_2\mt V_2$ for some $\mt V_2\in\R^{3\times3}_{Sym^+}.$ Now, as $\vc u_*\parallel \mt L\vc u_2,$ the hypothesis that $\vc u_*\times \vc n_\perp\neq 0$ implies that $\vc u_2$ and $\vc n_\perp$ are linearly independent. As a consequence, \eqref{rk1firststep} and \eqref{ranksecondSt} imply 
$$
\rank (\mt F_1-\mt F_2)\leq 1,
$$
and \eqref{conclusion}.\\
\textit{Sufficiency.} Let us define 
\[
\vc z(\vc x) =
\begin{cases}
\mt F_1\mt L\vc x,&\qquad\text{in $\Omega_1$,}\\
\mt F_2\mt L\vc x,&\qquad\text{in $\Omega\setminus\Omega_1$.}
\end{cases}
\]
It is easy to check that $\vc z$ satisfies \eqref{defKqcT}--\eqref{bcz}, proving the statement.
\end{proof}
\begin{remark}
\rm
Let $\mt F_1,\mt F_2$ be the deformation gradients measured experimentally in \Tn\,(see \cite{Inamura} or Section \ref{Appl} below) or in Ni\textsubscript{65}Al\textsubscript{35} \cite{BallS1,BallS2}. By \eqref{IncompII} we have $\mt F_1=\mt 1+\vc b_1\otimes \vc m_1$, $\mt F_2=\mt 1+\vc b_2\otimes \vc m_2$ for some $\vc b_1,\vc b_2\in\R^3$ and $\vc m_1,\vc m_2\in\mathbb{S}^2$ such that $\rank (\mt F_1-\mt F_2) = 2$. Taking $\vc n_1=\vc m_1$ and $\vc n_2=\vc m_2$ we have that $\vc u_*\times\vc n_\perp \neq 0$ is verified, and therefore Theorem \ref{Incompatible wells} implies that no stress-free junction involving just two martensitic variants can be observed in \Tn, nor in Ni\textsubscript{65}Al\textsubscript{35} between the nucleated plates $\mt F_1,\mt F_2$.
\end{remark}
\begin{remark}
\rm
The result is independent of the shape of $\partial\Omega\setminus (\Gamma_1\cup\Gamma_2).$
\end{remark}
\begin{remark}
By \cite[Prop. 12]{JamesHyst}, \eqref{ehatholds} is equivalent to the existence of $\mt R\in SO(3)$, $\vc b,\vc m\in\R^3$ satisfying \eqref{Twinning}. If \eqref{ehatholds} fails, then, under some further physically relevant restrictions on the parameters of $\mt U_1,\mt U_2$, \cite{DKMS} implies that $K=K^{qc}$, and that $\vc y$ is affine. 
\end{remark}
\begin{remark}
\label{RemarkPiccolo}
\rm
A similar result holds if we replace $\Omega$ with 
$$
\Omega_C := \bigl\{\vc x\in\R^3\colon %\min\{\vc x\cdot\vc n_1,\vc x\cdot\vc n_1\} < 0 %\bigr\}\cup\bigl\{\vc x\in\R^3\colon \vc x\cdot\vc n_2 < 0\bigr\}\cap\bigl\{\vc x\in\R^3\colon \vc x\in\Omega^c,\;
\vc x\cdot\vc n_\perp \in (0,1)\text{ and }|\vc x-\vc n_\perp(\vc n_\perp\cdot \vc x)|<R \bigr\}\setminus \overline{\Omega},
$$
for which we refer to Figure \ref{DomainSmall}. In this case, however, necessary and sufficient conditions are \eqref{conclusion} and, if $\vc d\neq\vc 0,$ 
$$
\bigr(\vc u_* \cdot \vc n_1\bigl) \bigr( \vc u_* \cdot \vc n_2\bigl) \geq 0.
%\bigr((\vc u_*\times\vc n_\perp)\cdot \vc n_1\bigl) \bigr((\vc u_*\times\vc n_\perp)\cdot \vc n_2\bigl) \leq 0.
$$
This latter condition is to guarantee that the information carried by the characteristic lines in direction $\vc u_*$ from the boundary conditions do not overlap. 
\end{remark}

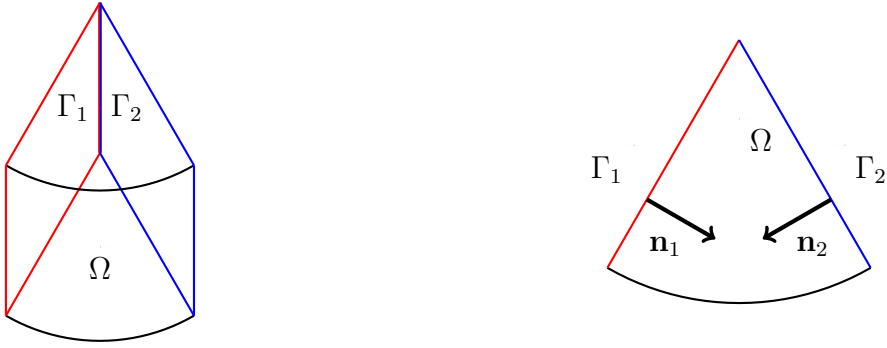
\begin{figure}
\centering
\begin{subfigure}{.45\textwidth}
\begin{tikzpicture}[scale=0.50]
   \draw [step=0.05, black,thick,domain=240:300] plot ({5*cos(\x)}, {5*sin(\x)});
   \draw[red, thick] ({5*cos(240)},{5*sin(240)}) -- (0,0);
   \draw[blue, thick] ({5*cos(300)},{5*sin(300)}) -- (0,0);
   \draw [step=0.05, black,thick,domain=240:300] plot ({5*cos(\x)}, {5*sin(\x)+4});
   \draw[red, thick] ({5*cos(240)},{5*sin(240)+4}) -- (0,4);
   \draw[blue, thick] ({5*cos(300)},{5*sin(300)+4}) -- (0,4);
   \draw[blue, thick] ({5*cos(300)},{5*sin(300)+4}) -- ({5*cos(300)},{5*sin(300)});
   \draw[red, thick] ({5*cos(240)},{5*sin(240)}) -- ({5*cos(240)},{5*sin(240)+4});
   \draw[red, thick] (-0.02,0) -- (-0.02,4);
   \draw[blue, thick] (0.02,0) -- (0.02,4);
\filldraw [red] ({2.5*cos(90)},{2.5*sin(90)-5}) circle (0pt) node[blue,anchor=north,black] {$\Omega$};
\filldraw [red] ({2.5*cos(90)},{2.5*sin(90)-1.3}) circle (0pt) node[red,anchor=east,black] {$\Gamma_1$};
\filldraw [red] ({2.5*cos(90)},{2.5*sin(90)-1.3}) circle (0pt) node[blue,anchor=west,black] {$\Gamma_2 $};
\end{tikzpicture}
\end{subfigure}%
\begin{subfigure}{.45\textwidth}
\begin{tikzpicture}[scale=0.70]
   \draw [step=0.05, black,thick,domain=240:300] plot ({5*cos(\x)}, {5*sin(\x)});
%   \draw [step=0.05,black,thick,domain=300:360] plot ({5*cos(\x)}, {5*sin(\x)});
   \draw[red, thick] ({5*cos(240)},{5*sin(240)}) -- (0,0);
   \draw[blue, thick] ({5*cos(300)},{5*sin(300)}) -- (0,0);
  \draw [->,ultra thick] ({3.5*cos(300)},{3.5*sin(300)}) -- (-{-1.5*sin(300)+3.5*cos(300)},-{1.5*cos(300)+3.5*sin(300)}); 
  \draw [->,ultra thick] ({3.5*cos(240)},{3.5*sin(240)}) -- (-{1.5*sin(240)+3.5*cos(240)},-{-1.5*cos(240)+3.5*sin(240)}); 
\filldraw [red] (-{1*sin(240)+3.5*cos(240)},-{-1*cos(240)+3.5*sin(240)}) circle (0pt) node[anchor=north east,black] {$\vc n_1 $};
\filldraw [red] (-{-1*sin(300)+3.5*cos(300)},-{1*cos(300)+3.5*sin(300)}) circle (0pt) node[anchor=north west,black] {$\vc n_2 $};
\filldraw [red] ({2*sin(270)+2.5*cos(270)},{-2*cos(270)+2.5*sin(270)+0.5}) circle (0pt) node[red,anchor=north east,black] {$\Gamma_1$};
\filldraw [red] ({-2*sin(270)+2.5*cos(270)},{2*cos(270)+2.5*sin(270)+0.5}) circle (0pt) node[blue,anchor=north west,black] {$\Gamma_2 $};
\filldraw [red] ({-1.5*cos(90)},{-1.5*sin(90)}) circle (0pt) node[blue,anchor=north west,black] {$\Omega$};
\end{tikzpicture}
\end{subfigure}%
\caption{\label{DomainSmall}Representation of the domain considered in Remark \ref{RemarkPiccolo}. {This domain reflects the formation of incompatible junctions as in Figure \ref{nucleazione BS}.}}
\end{figure}

\begin{remark}
\label{RemarkFail}
\rm
In general, the statement of Theorem \ref{Incompatible wells} does not hold when $\vc u_*\times \vc n_\perp = \vc 0$. Consider for example 
$$
\mt U_1 = \diag (\eta_1,\eta_2, \eta_3),\qquad\mt U_2 = \diag (\eta_2,\eta_1, \eta_3),
$$
for some $\eta_1,\eta_2>0.$ These deformation gradients describe in a suitable basis an orthorhombic to monoclinic transformation. Let further $\mt F_1=\mt U_1,$ $\mt F_2=\mt U_2$,
$$ 
\vc e_1 := [1 0 0]^T,\qquad\vc e_2 := [0 1 0]^T,\qquad\vc e_3 := [0 0 1]^T,
$$
and
\begin{align*}
\vc b_1 = \frac{\sqrt2(\eta_1-\eta_2)}{\eta_1+\eta_2}(-\eta_1\vc e_1 +\eta_2\vc e_2),&\qquad \vc b_2 = \frac{\sqrt{\eta_1^2+\eta_2^2}(\eta_1-\eta_2)}{\eta_1+\eta_2}(\vc e_1 + \vc e_2),\\
\vc m_1 = \frac{1}{\sqrt{2}}(\vc e_1 +\vc e_2),&\qquad \vc m_2 = \frac{1}{\sqrt{\eta_1^2+\eta_2^2}}(\eta_2\vc e_1 -\eta_1\vc e_2).
\end{align*}
We choose $\vc n_1,\vc n_2 \in \mathbb{S}^2$ such that 
$$
\vc n_1 \cdot \vc e_3 = \vc n_2\cdot\vc e_3 = 0,\qquad (\vc e_2-\vc e_1)\cdot \vc n_1 \leq 0,\qquad (\eta_2\vc e_1+\eta_1\vc e_2)\cdot \vc n_2 \leq 0   ,
%\max\{(\vc e_1-\vc e_2)\cdot \vc n_1, (\vc e_1-\vc e_2)\cdot \vc n_2\} \leq 0,\qquad \max\{(\eta_2\vc e_1+\eta_1\vc e_2)\cdot \vc n_1, (\eta_2\vc e_1+\eta_1\vc e_2)\cdot \vc n_2\} \leq 0   ,
$$
so that the situation becomes fully two-dimensional (cf. Figure \ref{Remark36}). Indeed, $\vc u_* = \vc n_\perp = \vc e_3.$ Then, we can construct $\vc y\in W^{1,\infty}(\Omega;\R^3)$ as 
\[
\vc y(\vc x) = 
\begin{cases}
\mt F_1 \vc x,\qquad&\text{if $\vc x\cdot \vc m_1\leq 0,$}\\
\bigl(\mt F_1 + \vc b_1\otimes \vc m_1 \bigr)\vc x,\qquad&\text{if $0<\vc x\cdot \vc m_1,\,0<\vc x\cdot \vc m_2,$}\\
\mt F_2 \vc x,\qquad&\text{if $\vc x\cdot \vc m_1\leq 0,$}
\end{cases}
\]
where continuity is guaranteed by the fact that $ \mt F_1 + \vc b_1\otimes\vc m_1-\mt F_2 = \vc b_2\otimes\vc m_2.$ In this case, following \cite{dolzman}, $\nabla\vc y \in K^{qc}$ if and only if $\mt B :=\mt F_1 + \vc b_1\otimes \vc m_1$ satisfies
$$
\det \mt B = \det \mt U_1,\qquad |\mt B(\vc e_1\pm \vc e_2)|^2\leq \eta_1^2+\eta_2^2.
$$
It can be checked that both the first and the second property are satisfied for every $\eta_1,\eta_2>0.$ Therefore, if $\vc u_* \times \vc n_\perp = \vc 0,$ \eqref{conclusion} can fail. We remark that, in this situation, the key ingredient is not the type of transformation (represented here by its stretch tensors $\mt U_1,\mt U_2$), but  the two-dimensional structure of the problem. Indeed, in this case, both the boundary conditions imposed on $\Gamma_1,\Gamma_2$ (which in direction $\vc e_3$ are both a constant elongation/contraction of magnitude $\eta_3$) and the domain (whose shape does not depend on the $\vc e_3$ axis) make the problem essentially two-dimensional.	
\end{remark}
\begin{figure}
\centering
\begin{tikzpicture}[scale=0.50]
   \draw [step=0.05, black,thick,domain=0:240] plot ({5*cos(\x)}, {5*sin(\x)});
   \draw [step=0.05,black,thick,domain=300:360] plot ({5*cos(\x)}, {5*sin(\x)});
   \draw[red, thick] ({5*cos(240)},{5*sin(240)}) -- (0,0);
   \draw[blue, thick] ({5*cos(300)},{5*sin(300)}) -- (0,0);
  \draw [->,ultra thick] ({3.5*cos(300)},{3.5*sin(300)}) -- ({1.5*sin(300)+3.5*cos(300)},{-1.5*cos(300)+3.5*sin(300)}); 
  \draw [->,ultra thick] ({3.5*cos(240)},{3.5*sin(240)}) -- ({-1.5*sin(240)+3.5*cos(240)},{1.5*cos(240)+3.5*sin(240)}); 
\filldraw [red] ({-1*sin(240)+3.5*cos(240)},{1*cos(240)+3.5*sin(240)}) circle (0pt) node[anchor=north east,black] {$\vc n_1 $};
\filldraw [red] ({1*sin(300)+3.5*cos(300)},{-1*cos(300)+3.5*sin(300)}) circle (0pt) node[anchor=north west,black] {$\vc n_2 $};
\filldraw [red] ({0.5*sin(240)+2.5*cos(240)},{-0.5*cos(240)+2.5*sin(240)}) circle (0pt) node[red,anchor=north east,black] {$\Gamma_1$};
\filldraw [red] ({-0.5*sin(300)+2.5*cos(300)},{0.5*cos(300)+2.5*sin(300)}) circle (0pt) node[blue,anchor=north west,black] {$\Gamma_2 $};
\filldraw [red] ({2.5*cos(90)+6},{2.5*sin(90)}) circle (0pt) node[blue,anchor=north west,black] {$\Omega$};

   \draw[green, thick] ({5*cos(135)},{5*sin(135)}) -- (0,0);
     \draw [->,ultra thick] ({3.5*cos(135)},{3.5*sin(135)}) -- ({1.5*sin(135)+3.5*cos(135)},{-1.5*cos(135)+3.5*sin(135)}); 
\filldraw [red] ({1*sin(135)+3.5*cos(135)},{-1*cos(135)+3.5*sin(135)}) circle (0pt) node[anchor=north west,black] {$\vc m_1 $};

   \draw[yellow, thick] ({5*cos(30)},{5*sin(30)}) -- (0,0);
  \draw [->,ultra thick] ({3.5*cos(30)},{3.5*sin(30)}) -- ({-1.5*sin(30)+3.5*cos(30)},{1.5*cos(30)+3.5*sin(30)}); 
\filldraw [red] ({-1*sin(30)+3.5*cos(30)},{1*cos(30)+3.5*sin(30)}) circle (0pt) node[anchor=north east,black] {$\vc m_2$};
\end{tikzpicture}
\caption{\label{Remark36} %
Reduction to a two dimensional situation where Theorem \ref{Incompatible wells} fails, as shown in Remark \ref{RemarkFail}.
}
\end{figure}
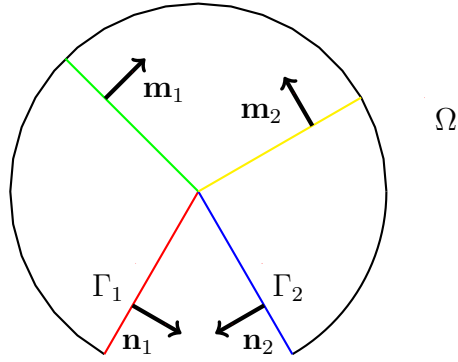

\section{Plastic junctions}
\label{ShearComp}
In this section we want to investigate when, given two matrices $\mt F_1,\mt F_2\in\R^{3\times 3},$ with $\rank (\mt F_1-\mt F_2)=2$, there exist two simple shears $\mt S_i = \mt 1 + s_i\vcg\phi_i\otimes\vcg\psi_i$, $\vcg\phi_i\otimes\vcg\psi_i\in\mathcal{S}$, $i=1,2$, such that $\rank(\mt F_1\mt S_1-\mt F_2\mt S_2)\leq 1.$ {These results are useful for the mathematical characterisation of $V_{II}$ junctions given in the next section. Here and below}, we denote by $\mathcal{S}$ the set of admissible slip systems (or a suitable subset of it), and by $\mathcal{M}$ the set of martensitic variants $\bigcup_{i=1}^N\mt U_i$ (or a suitable subset of it). 

Under our hypotheses on $\mt F_1,\mt F_2,$ there exist $\vc b_1,\vc b_2\in\R^3$ and $\vc m_1,\vc m_2\in\mathbb{S}^2$ such that 
$$
\mt F_2 = \mt F_1 + \vc b_1\otimes \vc m_1+ \vc b_2\otimes \vc m_2.
$$
Therefore, our problem becomes equivalent to finding $\vcg\phi_1\otimes\vcg\psi_1,\,\vcg\phi_2\otimes\vcg\psi_2\in\mathcal{S}$ and $s_1,s_2\in\R$ such that
\beq
\label{rango1eqdasolve}
\rank\bigl(s_1\mt F_1\vcg\phi_1\otimes\vcg\psi_1 - \vc b_1\otimes \vc m_1 - \vc b_2\otimes \vc m_2	- s_2\mt F_2\vcg\phi_2\otimes\vcg\psi_2	\bigr) \leq 1.
\eeq
Lemma \ref{NecessityCond} below gives necessary conditions for the existence of solutions to \eqref{rango1eqdasolve}. There and throughout this section, $\hat{\vcg\phi}_i$ can be interpreted as $\mt F_i\vcg\phi_i$.
\begin{lemma}
\label{NecessityCond}
Let $\vc a_1,\vc a_2,\hat{\vcg\phi}_1,\hat{\vcg\phi}_2,\vc n_1,\vc n_2,\vcg \psi_1,\vcg \psi_2\in\R^3$ and $\rank\bigl(	\vc a_1\otimes \vc n_1 - \vc a_2\otimes \vc n_2 \bigr)=2$. Then, a necessary condition for the existence of $s_1,s_2\in\R$ such that
\beq
\label{rankEq}
\rank\bigl(	\vc a_1\otimes \vc n_1 - \vc a_2\otimes \vc n_2 + s_1\hat{\vcg\phi}_1\otimes \vcg \psi_1 - s_2\hat{\vcg\phi}_2\otimes \vcg \psi_2\bigr) \leq 1
\eeq
is that at least one of the following four conditions hold:
\begin{align*}
\hat{\vcg\phi}_1\cdot(\vc a_1\times\vc a_2)=\hat{\vcg\phi}_2\cdot(\vc a_1\times\vc a_2)=0,\qquad \hat{\vcg\phi}_1\cdot(\vc a_1\times\vc a_2)=\vcg \psi_1\cdot(\vc n_1\times\vc n_2)=0,
\\
\hat{\vcg\phi}_2\cdot(\vc a_1\times\vc a_2)=\vcg \psi_2\cdot(\vc n_1\times\vc n_2)=0,\qquad \vcg \psi_1\cdot(\vc n_1\times\vc n_2)=\vcg \psi_2\cdot(\vc n_1\times\vc n_2)=0.
\end{align*}
\end{lemma}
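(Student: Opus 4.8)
The plan is to convert the rank condition into the vanishing of all $2\times2$ minors and then read it off in a frame adapted to the geometry of $\vc a_1\otimes\vc n_1-\vc a_2\otimes\vc n_2$. Writing $\mt M:=\vc a_1\otimes\vc n_1-\vc a_2\otimes\vc n_2+s_1\hat{\vcg\phi}_1\otimes\vcg\psi_1-s_2\hat{\vcg\phi}_2\otimes\vcg\psi_2$, I use $\rank\mt M\le1\iff\cof\mt M=\mt 0$. Since $\rank(\vc a_1\otimes\vc n_1-\vc a_2\otimes\vc n_2)=2$, both $\{\vc a_1,\vc a_2\}$ and $\{\vc n_1,\vc n_2\}$ are independent, so $\vc A:=\vc a_1\times\vc a_2$ and $\vc N:=\vc n_1\times\vc n_2$ are nonzero; I set $p_i:=\hat{\vcg\phi}_i\cdot\vc A$ and $q_i:=\vcg\psi_i\cdot\vc N$, which are exactly the quantities in the four conditions. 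Splitting $\R^3$ along the column plane $P_c:=\mathrm{span}\{\vc a_1,\vc a_2\}$ (normal $\vc A$) and the row plane $P_r:=\mathrm{span}\{\vc n_1,\vc n_2\}$ (normal $\vc N$), I write $\mt M$ in block form $\mt M=\bigl(\begin{smallmatrix}\mt B&\vc f\\\vc g^{T}&h\end{smallmatrix}\bigr)$.

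The unperturbed term $\vc a_1\otimes\vc n_1-\vc a_2\otimes\vc n_2$ maps $P_r$ isomorphically onto $P_c$, annihilates $\vc N$, and has range in $P_c$, so it contributes only an invertible $2\times2$ block $\mt B_0$. Decomposing $\hat{\vcg\phi}_i=\hat{\vcg\phi}_i^{\parallel}+|\vc A|^{-2}p_i\vc A$ and $\vcg\psi_i=\vcg\psi_i^{\parallel}+|\vc N|^{-2}q_i\vc N$ with $\hat{\vcg\phi}_i^{\parallel}\in P_c$, $\vcg\psi_i^{\parallel}\in P_r$, I obtain $\mt B=\mt B_0+s_1\hat{\vcg\phi}_1^{\parallel}\otimes\vcg\psi_1^{\parallel}-s_2\hat{\vcg\phi}_2^{\parallel}\otimes\vcg\psi_2^{\parallel}$, $\vc f=|\vc N|^{-2}(s_1q_1\hat{\vcg\phi}_1^{\parallel}-s_2q_2\hat{\vcg\phi}_2^{\parallel})$, $\vc g=|\vc A|^{-2}(s_1p_1\vcg\psi_1^{\parallel}-s_2p_2\vcg\psi_2^{\parallel})$ and $h=|\vc A|^{-2}|\vc N|^{-2}\kappa$ with $\kappa:=s_1p_1q_1-s_2p_2q_2$. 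Expanding the $2\times2$ minors of $\mt M$ yields, among the necessary conditions for $\rank\mt M\le1$, the two clean relations $\det\mt B=0$ and $h\mt B=\vc f\vc g^{T}$ (the remaining minors saying that the rows of $\mt B$ are parallel to $\vc g$ and its columns parallel to $\vc f$).

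The crux is to substitute these expressions into $h\mt B=\vc f\vc g^{T}$. Using $s_1^2p_1q_1-\kappa s_1=s_1s_2p_2q_2$ and $s_2^2p_2q_2+\kappa s_2=s_1s_2p_1q_1$, the four basis tensors recombine and the identity collapses to
\[
\kappa\,\mt B_0=s_1s_2\,(p_2\hat{\vcg\phi}_1^{\parallel}-p_1\hat{\vcg\phi}_2^{\parallel})\otimes(q_2\vcg\psi_1^{\parallel}-q_1\vcg\psi_2^{\parallel}).
\]
The right-hand side has rank at most one, whereas $\mt B_0$ is invertible; hence $\kappa=0$, i.e. $h=0$, and moreover $s_1s_2\,(p_2\hat{\vcg\phi}_1^{\parallel}-p_1\hat{\vcg\phi}_2^{\parallel})\otimes(q_2\vcg\psi_1^{\parallel}-q_1\vcg\psi_2^{\parallel})=\mt 0$.

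From $h=0$ together with $h\mt B=\vc f\vc g^{T}$ we get $\vc f=\vc 0$ or $\vc g=\vc 0$, while the vanishing product gives, when $s_1s_2\neq0$, $p_2\hat{\vcg\phi}_1^{\parallel}=p_1\hat{\vcg\phi}_2^{\parallel}$ or $q_2\vcg\psi_1^{\parallel}=q_1\vcg\psi_2^{\parallel}$. In the generic configuration where both pairs $\{\hat{\vcg\phi}_1^{\parallel},\hat{\vcg\phi}_2^{\parallel}\}$ and $\{\vcg\psi_1^{\parallel},\vcg\psi_2^{\parallel}\}$ are linearly independent, each alternative reads off one of the stated conditions: $\vc f=\vc 0$ or $q_2\vcg\psi_1^{\parallel}=q_1\vcg\psi_2^{\parallel}$ forces $q_1=q_2=0$, and $\vc g=\vc 0$ or $p_2\hat{\vcg\phi}_1^{\parallel}=p_1\hat{\vcg\phi}_2^{\parallel}$ forces $p_1=p_2=0$. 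The main obstacle is the bookkeeping of the remaining degenerate configurations — when an in-plane pair is dependent, or when one shear is trivial ($s_1=0$ or $s_2=0$). Here I invoke invertibility of $\mt B_0$ once more, through the elementary fact that the column space of $\mt B_0+\vc u\otimes\vc v$ can never equal $\mathrm{span}\{\vc u\}$ (otherwise $\mt B_0$ would fail to be onto); when the two shears collapse to a single rank-one block this forces the surviving out-of-plane datum to vanish, so that a trivial shear $s_i=0$ makes the remaining shear entirely in-plane, i.e. $p_i=q_i=0$, producing conditions two and three. Granting these routine rank counts, the disjunction of the four conditions follows.
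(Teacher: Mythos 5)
Your frame-adapted minor analysis is sound as far as it goes: the block decomposition, the list of vanishing minors, and the key identity $\kappa\,\mt B_0=s_1s_2\,(p_2\hat{\vcg\phi}_1^{\parallel}-p_1\hat{\vcg\phi}_2^{\parallel})\otimes(q_2\vcg\psi_1^{\parallel}-q_1\vcg\psi_2^{\parallel})$ are all correct, and they do settle the generic case and (via your ``column space'' fact, applied to rows and columns) the case $s_1=0$ or $s_2=0$ (note the index slip: a trivial $s_i$ forces the \emph{other} shear in-plane). The genuine gap is the closing sentence: the remaining degenerate configurations are not routine consequences of the facts you cite. Take $s_1s_2\neq0$ with both in-plane pairs dependent, $\hat{\vcg\phi}_2^{\parallel}=a\hat{\vcg\phi}_1^{\parallel}\neq\vc 0$ and $\vcg\psi_2^{\parallel}=c\vcg\psi_1^{\parallel}\neq\vc 0$. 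Your parallelism minors plus invertibility of $\mt B_0$ give $\vc f=\vc g=\vc 0$, i.e. $s_1q_1=as_2q_2$ and $s_1p_1=cs_2p_2$, and then $\kappa=0$ factors into the alternatives ($q_2=0$ or $p_2=ap_1$) and ($p_1=0$ or $q_2=cq_1$). The branch $p_2=ap_1\neq0$, $q_2=cq_1\neq0$ is exactly the branch that would falsify the lemma (it has $p_1q_2\neq 0$ and $p_2 q_1 \neq 0$), and none of your stated tools excludes it: it dies only because $\vc g=\vc 0$ then forces $s_1=acs_2$, hence $\mt B=\mt B_0+(s_1-acs_2)\hat{\vcg\phi}_1^{\parallel}\otimes\vcg\psi_1^{\parallel}=\mt B_0$, contradicting the minor $\det\mt B=0$. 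That parameter-collapse argument is a different mechanism from the column-space fact, so ``granting these routine rank counts'' is not a proof; the singly-dependent cases likewise need to be written out.

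The paper's proof shows why none of this bookkeeping is necessary. It writes the condition $\rank\le1$ as $\cof\bigl(\vc a_1\otimes\vc n_1-\vc a_2\otimes\vc n_2+s_1\hat{\vcg\phi}_1\otimes\vcg\psi_1-s_2\hat{\vcg\phi}_2\otimes\vcg\psi_2\bigr)=\mt 0$, expands the cofactor of a sum of rank-one terms into six tensor products of cross products, and takes the Frobenius product with the two test tensors $\hat{\vcg\phi}_1\otimes\vcg\psi_2$ and $\hat{\vcg\phi}_2\otimes\vcg\psi_1$. Every $s$-dependent term contains a factor of the form $(\cdot\times\hat{\vcg\phi}_i)\cdot\hat{\vcg\phi}_i$ or $(\cdot\times\vcg\psi_j)\cdot\vcg\psi_j$ and vanishes identically, leaving exactly $\bigl[(\vc a_1\times\vc a_2)\cdot\hat{\vcg\phi}_1\bigr]\bigl[(\vc n_1\times\vc n_2)\cdot\vcg\psi_2\bigr]=0$ and $\bigl[(\vc a_1\times\vc a_2)\cdot\hat{\vcg\phi}_2\bigr]\bigl[(\vc n_1\times\vc n_2)\cdot\vcg\psi_1\bigr]=0$, i.e. $p_1q_2=p_2q_1=0$, which is precisely the four-fold disjunction --- uniformly in $s_1,s_2$, with no genericity assumptions and no cases. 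If you wish to keep your framework, the efficient repair is to aim directly at the two scalar identities $p_1q_2=0$ and $p_2q_1=0$ (your target is equivalent to them) rather than at the alternative-laden consequences of the individual minors; as it stands, your argument needs the missing degenerate cases completed before it is a proof.
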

\begin{proof}
Since $\cof(\mt F) = 0$ if and only if $\rank(\mt F)\leq 1$, \eqref{rankEq} is equivalent to
\beq
\label{rankCof}
\begin{split}
\mt 0 = -&(\vc a_1\times\vc a_2)\otimes (\vc n_1\times \vc n_2) +s_1 (\vc a_1\times\hat{\vcg\phi}_1)\otimes (\vc n_1\times \vcg \psi_1) 
\\
&- s_2 (a_1\times\hat{\vcg\phi}_2)\otimes (\vc n_1\times \vcg \psi_2)
-s_1 (\vc a_2\times\hat{\vcg\phi}_1)\otimes (\vc n_2\times \vcg \psi_1)
\\
&+s_2 (\vc a_2\times\hat{\vcg\phi}_2)\otimes (\vc n_2\times \vcg \psi_2)
-s_1s_2 (\hat{\vcg\phi}_1\times\hat{\vcg\phi}_2)\otimes (\vcg \psi_1\times \vcg \psi_2).
\end{split}
\eeq
Taking now the scalar product of \eqref{rankCof} with $\hat{\vcg\phi}_1 \otimes \vcg\psi_2$ and $\hat{\vcg\phi}_2\otimes\vcg\psi_1$ we respectively obtain
\beq
\label{equazioneZero}
\bigl[(\vc a_1\times\vc a_2) \cdot \hat{\vcg\phi}_1 \bigr]\bigl[(\vc n_1\times\vc n_2) \cdot \vcg \psi_2 \bigr] = 0,\qquad
\bigl[(\vc a_1\times\vc a_2) \cdot \hat{\vcg\phi}_2 \bigr]\bigl[(\vc n_1\times\vc n_2) \cdot \vcg \psi_1 \bigr] = 0.
\eeq
Recalling that $\rank\bigl(	\vc a_1\otimes \vc n_1 - \vc a_2\otimes \vc n_2 \bigr)=2$ implies that $\vc a_1\times\vc a_2\neq\vc 0$ and $\vc n_1\times\vc n_2\neq\vc 0$, from \eqref{equazioneZero} we deduce the claim.
\end{proof}

In general, the  necessary conditions provided by Lemma \ref{NecessityCond} are not sufficient. In other cases, infinitely many solutions $s_1,s_2$ may exist given two slip systems ${\vcg\phi}_1\otimes\vcg\psi_1,{\vcg\phi}_2\otimes\vcg\psi_2\in\mathcal{S}.$ In Proposition \ref{PropMeet} we prove that, under certain hypotheses on the shear systems which are relevant in the following section, %and \NA, 
there exists a unique couple $(s_1,s_2)$ such that \eqref{rankEq} is satisfied.
{
\begin{proposition}
\label{PropMeet}
Let $\vc a_1,\vc a_2,\hat{\vcg\phi}_1,\hat{\vcg\phi}_2,\vc n_1,\vc n_2,\vcg \psi_1,\vcg \psi_2\in\R^3$. Suppose further that $\rank\bigl(	\vc a_1\otimes \vc n_1 - \vc a_2\otimes \vc n_2 \bigr)=2$. Then, 
\begin{itemize}
\item %if $(\vc a_1\times\vc a_2)\cdot \vcg\phi_1,(\vc a_1\times\vc a_2)\cdot \vcg \phi_2\neq 0$ and $\vcg\psi_1 = \alpha_1\vc n_1+\alpha_2\vc n_2$, $\vcg\psi_2 = \beta_1\vc n_1+\beta_2\vc n_2$ for some $\alpha_1,\alpha_2,\beta_1,\beta_2\in\R$, 
if $\vcg\psi_1 = \alpha_1\vc n_1+\alpha_2\vc n_2$, $\vcg\psi_2 = \beta_1\vc n_1+\beta_2\vc n_2$ for some $\alpha_1,\alpha_2,\beta_1,\beta_2\in\R$, and if 
%$(\alpha_2\vc a_1 + \alpha_1 \vc a_2)\cdot (\hat{\vcg\phi}_1\times\hat{\vcg\phi}_2),\,(\beta_2\vc a_1 + \beta_1 \vc a_2)\cdot (\hat{\vcg\phi}_1\times\hat{\vcg\phi}_2)\neq 0$, 
one out of $(\vc a_1\times\vc a_2)\cdot \hat{\vcg\phi}_2\neq 0,$ $(\vc a_1\times\vc a_2)\cdot \hat{\vcg\phi}_1\neq 0$ holds, then $s_1,s_2\in\R$ are such that \eqref{rankEq} is satisfied if and only if they satisfy
\beq
\label{cond s1 a}
\begin{split}
(\vc a_1\times\vc a_2)\cdot \hat{\vcg\phi}_2 &= s_1 (\alpha_2\vc a_1 + \alpha_1 \vc a_2)\cdot (\hat{\vcg\phi}_1\times\hat{\vcg\phi}_2),\qquad \\
(\vc a_1\times\vc a_2)\cdot \hat{\vcg\phi}_1 &= s_2 (\beta_2\vc a_1 + \beta_1 \vc a_2)\cdot (\hat{\vcg\phi}_1\times\hat{\vcg\phi}_2);
\end{split}
\eeq
\item %if $(\vc n_1\times\vc n_2)\cdot \vcg\psi_1,(\vc n_1\times\vc n_2)\cdot \vcg\psi_2\neq 0$ and $\vcg\phi_1 = \gamma_1\vc a_1+\gamma_2\vc a_2$, $\vcg\phi_2 = \delta_1\vc a_1+\delta_2\vc a_2$ for some $\gamma_1,\gamma_2,\delta_1,\delta_2\in\R$, then $s_1,s_2\in\R$ are such that \eqref{rankEq} is satisfied if and only if they satisfy
if $\hat{\vcg\phi}_1 = \gamma_1\vc a_1+\gamma_2\vc a_2$, $\hat{\vcg\phi}_2 = \delta_1\vc a_1+\delta_2\vc a_2$ for some $\gamma_1,\gamma_2,\delta_1,\delta_2\in\R$, and if 
%$(\gamma_2\vc n_1 + \gamma_1 \vc n_2)\cdot (\vcg \psi_1\times\vcg \psi_2),(\delta_2\vc n_1 + \delta_1 \vc n_2)\cdot (\vcg \psi_1\times\vcg \psi_2)\neq 0$, 
one out of $(\vc n_1\times\vc n_2)\cdot \hat{\vcg\psi}_2\neq 0,$ $(\vc n_1\times\vc n_2)\cdot \hat{\vcg\psi}_1\neq 0$ holds, 
then $s_1,s_2\in\R$ are such that \eqref{rankEq} is satisfied if and only if they satisfy
\beq
\label{cond s1 b}
\begin{split}
(\vc n_1\times\vc n_2)\cdot \vcg \psi_2 &= s_1 (\gamma_2\vc n_1 + \gamma_1 \vc n_2)\cdot (\vcg \psi_1\times\vcg \psi_2),\\
\qquad (\vc n_1\times\vc n_2)\cdot\vcg  \psi_1 &= s_2 (\delta_2\vc n_1 + \delta_1 \vc n_2)\cdot (\vcg \psi_1\times\vcg \psi_2).
\end{split}
\eeq
\item if $\hat{\vcg\phi}_1 = \gamma_1\vc a_1+\gamma_2\vc a_2$, $\hat{\vcg\phi}_2 = \delta_1\vc a_1+\delta_2\vc a_2$ and $\vcg\psi_1 = \alpha_1\vc n_1+\alpha_2\vc n_2$, $\vcg\psi_2 = \beta_1\vc n_1+\beta_2\vc n_2$ for some $\alpha_1,\alpha_2,\beta_1,\beta_2,\gamma_1,\gamma_2,\delta_1,\delta_2\in\R$, then $s_1,s_2\in\R$ are such that \eqref{rankEq} is satisfied if and only if they satisfy
\beq
\label{cond s1 c}
%|\vc a_1\times\vc a_2| 
1= s_1(\alpha_2\gamma_2-\alpha_1\gamma_1) - s_2(\beta_2\delta_2-\beta_1\delta_1) - s_1s_2(\alpha_1\beta_2-\alpha_2\beta_1)(\gamma_1\delta_2-\gamma_2\delta_1).
\eeq
In particular, there may be a one parameter family of solutions.
\end{itemize}
\end{proposition}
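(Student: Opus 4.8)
The plan is to run everything through a single algebraic engine: $\rank\mt M\le1$ iff $\cof\mt M=\mt0$, together with the two-dyad cofactor identity $\cof(\vc p\otimes\vc q+\vc r\otimes\vc s)=(\vc p\times\vc r)\otimes(\vc q\times\vc s)$, which is the two-term specialisation of the expansion used for \eqref{rankCof}. The observation driving the proof is that each hypothesis is tailored so that the matrix $\mt M:=\vc a_1\otimes\vc n_1-\vc a_2\otimes\vc n_2+s_1\hat{\vcg\phi}_1\otimes\vcg\psi_1-s_2\hat{\vcg\phi}_2\otimes\vcg\psi_2$ of \eqref{rankEq} collapses into a sum of \emph{only two} dyads, so that its cofactor becomes a single dyad and \eqref{rankEq} reduces to the vanishing of one vector (first two bullets) or one scalar (third bullet). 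Throughout I use that $\rank(\vc a_1\otimes\vc n_1-\vc a_2\otimes\vc n_2)=2$ forces $\vc a_1\times\vc a_2\ne\vc0$ and $\vc n_1\times\vc n_2\ne\vc0$, as already noted in the proof of Lemma \ref{NecessityCond}.

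For the first bullet I would substitute $\vcg\psi_1=\alpha_1\vc n_1+\alpha_2\vc n_2$, $\vcg\psi_2=\beta_1\vc n_1+\beta_2\vc n_2$ and regroup by $\vc n_1,\vc n_2$, writing $\mt M=\vc A\otimes\vc n_1+\vc B\otimes\vc n_2$ with $\vc A=\vc a_1+s_1\alpha_1\hat{\vcg\phi}_1-s_2\beta_1\hat{\vcg\phi}_2$ and $\vc B=-\vc a_2+s_1\alpha_2\hat{\vcg\phi}_1-s_2\beta_2\hat{\vcg\phi}_2$. Then $\cof\mt M=(\vc A\times\vc B)\otimes(\vc n_1\times\vc n_2)$, so \eqref{rankEq} is equivalent to $\vc A\times\vc B=\vc0$. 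Dotting this vector equation with $\hat{\vcg\phi}_2$ and then with $\hat{\vcg\phi}_1$ annihilates every term containing that same factor and produces exactly the two scalar relations \eqref{cond s1 a}, each linear in a single $s_i$; this is the necessity half, and the nonvanishing of one of the triple products $(\vc a_1\times\vc a_2)\cdot\hat{\vcg\phi}_i$ is what makes $(s_1,s_2)$ unique.

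The step I expect to be the crux is the converse: two scalar equations should not, a priori, force the full three-component identity $\vc A\times\vc B=\vc0$. I would close this gap with a coplanarity argument rather than a brute-force third-coordinate computation. Suppose $\vc A\times\vc B\ne\vc0$; then $\vc A,\vc B$ span the plane $(\vc A\times\vc B)^\perp$, while \eqref{cond s1 a} says precisely that $\hat{\vcg\phi}_1,\hat{\vcg\phi}_2$ are orthogonal to $\vc A\times\vc B$, hence lie in that same plane. Since $\vc a_1=\vc A-s_1\alpha_1\hat{\vcg\phi}_1+s_2\beta_1\hat{\vcg\phi}_2$ and $\vc a_2=-\vc B+s_1\alpha_2\hat{\vcg\phi}_1-s_2\beta_2\hat{\vcg\phi}_2$, the four vectors $\vc a_1,\vc a_2,\hat{\vcg\phi}_1,\hat{\vcg\phi}_2$ all lie in a single plane, which gives $(\vc a_1\times\vc a_2)\cdot\hat{\vcg\phi}_1=(\vc a_1\times\vc a_2)\cdot\hat{\vcg\phi}_2=0$ and contradicts the hypothesis. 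Thus $\vc A\times\vc B=\vc0$, proving sufficiency. The second bullet then follows by the dual argument applied to $\mt M^T$ (equivalently, regrouping $\mt M=\vc a_1\otimes\vc C+\vc a_2\otimes\vc D$ after expanding $\hat{\vcg\phi}_i$ in the basis $\vc a_1,\vc a_2$), interchanging the roles of $\vc a_i,\vc n_i$ and of $\hat{\vcg\phi}_i,\vcg\psi_i$; the nonvanishing of one of $(\vc n_1\times\vc n_2)\cdot\vcg\psi_i$ plays the role of the earlier nondegeneracy condition.

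For the third bullet both substitutions are available simultaneously, so after writing $\hat{\vcg\phi}_i$ in terms of $\vc a_1,\vc a_2$ I would regroup as $\mt M=\vc a_1\otimes\vc C+\vc a_2\otimes\vc D$ with $\vc C,\vc D\in\mathrm{span}\{\vc n_1,\vc n_2\}$ (using the $\vcg\psi$-expansion). Now $\cof\mt M=(\vc a_1\times\vc a_2)\otimes(\vc C\times\vc D)$ and $\vc C\times\vc D=(C_1D_2-C_2D_1)\,(\vc n_1\times\vc n_2)$, where $C_i,D_i$ are the explicit coordinates of $\vc C,\vc D$ in the basis $\vc n_1,\vc n_2$, affine-bilinear in $(s_1,s_2)$. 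No nondegeneracy is needed here: \eqref{rankEq} is simply equivalent to the single scalar equation $C_1D_2-C_2D_1=0$. Expanding this $2\times2$ determinant, the $s_1^2$ and $s_2^2$ terms cancel identically and the constant, linear and bilinear terms reorganise into \eqref{cond s1 c}; since this is one equation in two unknowns, its solution set is generically a one-parameter family, as claimed. This last step is purely computational, the only care being the bookkeeping of the determinant expansion.
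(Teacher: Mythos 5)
Your proof is correct, and at its core it runs on the same engine as the paper's: the cofactor criterion reduces \eqref{rankEq} to the vanishing of a single vector, and dotting that vector with $\hat{\vcg\phi}_1$ and $\hat{\vcg\phi}_2$ gives the necessity of \eqref{cond s1 a}. In fact your $\vc A\times\vc B$ coincides term-by-term with the left-hand side of the paper's key equation \eqref{rankN}; you reach it by regrouping the four dyads into two before taking cofactors, while the paper substitutes the $\vcg\psi_i$-expansions into the full expansion \eqref{rankCof} --- a cosmetic difference. The genuine divergence is the sufficiency step. The paper first deduces from \eqref{cond s1 a} and the nondegeneracy hypothesis that at least one of $\vc a_1\cdot(\hat{\vcg\phi}_1\times\hat{\vcg\phi}_2)\neq0$, $\vc a_2\cdot(\hat{\vcg\phi}_1\times\hat{\vcg\phi}_2)\neq0$ holds, and then checks by explicit computation that \eqref{rankN} tested against that $\vc a_i$ vanishes, so the vector is annihilated by the basis $\{\hat{\vcg\phi}_1,\hat{\vcg\phi}_2,\vc a_i\}$ and is therefore zero. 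You instead argue by contradiction with a coplanarity observation: if $\vc A\times\vc B\neq\vc 0$, then \eqref{cond s1 a} (equivalently, $\hat{\vcg\phi}_1,\hat{\vcg\phi}_2\perp\vc A\times\vc B$) together with $\vc A,\vc B\perp\vc A\times\vc B$ confines $\vc A,\vc B,\hat{\vcg\phi}_1,\hat{\vcg\phi}_2$ --- hence also $\vc a_1,\vc a_2$, which are linear combinations of these four --- to a single plane, which kills both triple products $(\vc a_1\times\vc a_2)\cdot\hat{\vcg\phi}_i$ and contradicts the hypothesis. Your route is computation-free and makes transparent that the nondegeneracy hypothesis is exactly what excludes the degenerate coplanar configuration; the paper's route is constructive, exhibiting the concrete third test vector. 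Both are equally rigorous. Your handling of the remaining bullets (duality via $\mt M^T$ for \eqref{cond s1 b}; double regrouping and expansion of $C_1D_2-C_2D_1$ for \eqref{cond s1 c}, where the $s_1^2$ and $s_2^2$ terms indeed cancel and the signs match) agrees with the paper, which disposes of them as ``similar'' and as a ``direct consequence of \eqref{rankN}'' respectively. One small caveat: your aside that the hypothesis makes $(s_1,s_2)$ unique is only guaranteed when both $(\vc a_1\times\vc a_2)\cdot\hat{\vcg\phi}_1\neq 0$ and $(\vc a_1\times\vc a_2)\cdot\hat{\vcg\phi}_2\neq 0$; under the stated ``one out of'' hypothesis, in a degenerate sub-case the other shear can remain undetermined. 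Since uniqueness is not part of the equivalence being proved, this does not affect correctness.
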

\begin{proof}
We just prove the first case, as the second case can be proved in a similar way, and the third is a direct consequence of \eqref{rankN} below. Assuming $\vcg\psi_1 = \alpha_1\vc n_1+\alpha_2\vc n_2$ and $\vcg\psi_2 = \beta_1\vc n_1+\beta_2\vc n_2$, solving \eqref{rankCof} is equivalent to solving
\beq
\label{rankN}
\vc 0 = -\vc a_1\times\vc a_2+s_1(\alpha_2\vc a_1+\alpha_1\vc a_2)\times\hat{\vcg\phi}_1 -
s_2(\beta_2\vc a_1+\beta_1\vc a_2)\times\hat{\vcg\phi}_2 
%+ s_1\alpha_1\vc a_2\times\vcg\phi_1 - s_2\beta_1\vc a_2\times\vcg\phi_2 
- s_1s_2(\alpha_1\beta_2-\alpha_2\beta_1)\hat{\vcg\phi}_1\times\hat{\vcg\phi}_2.
\eeq
By testing this equation by $\hat{\vcg\phi}_1$ and $\hat{\vcg\phi}_2$ we obtain the necessity of \eqref{cond s1 a}. Now, let us show that, under our assumptions, \eqref{cond s1 a} are also sufficient conditions. In order to do this, it is sufficient to show that, for $s_1,s_2$ as in \eqref{cond s1 a} the equality in \eqref{rankN} tested with $\vcg \rho$, for some $\vcg \rho\in\R^3$ such that $\vcg \rho \cdot (\hat{\vcg\phi}_1\times\hat{\vcg\phi}_2)\neq 0$, holds. Under our assumptions, and assuming \eqref{cond s1 a}, at least one out of $\vc a_1\cdot(\hat{\vcg\phi}_1\times\hat{\vcg\phi}_2) \neq 0$ and $\vc a_2\cdot(\hat{\vcg\phi}_1\times\hat{\vcg\phi}_2) \neq 0$ holds. Suppose without loss of generality the first one, as the other case can be deduced similarly. %We now observe that, if there exist a solution $(s_1,s_2)$, then $\bigl[(\alpha_2\vc a_1 + \alpha_1 \vc a_2)\cdot (\vcg \phi_1\times\vcg \phi_2)\bigr]\bigl[(\beta_2\vc a_1 + \beta_1 \vc a_2)\cdot (\vcg \phi_1\times\vcg \phi_2)\bigr]\neq 0$ by \eqref{cond s1 a} and the assumption $(\vc a_1\times\vc a_2)\cdot \vcg\phi_1,(\vc a_1\times\vc a_2)\cdot \vcg \phi_2\neq 0$. 	
We can thus multiply 
$$
-\vc a_1\times\vc a_2+s_1(\alpha_2\vc a_1+\alpha_1\vc a_2)\times\hat{\vcg\phi}_1 -
s_2(\beta_2\vc a_1+\beta_1\vc a_2)\times\hat{\vcg\phi}_2 
- s_1s_2(\alpha_1\beta_2-\alpha_2\beta_1)\hat{\vcg\phi}_1\times\hat{\vcg\phi}_2
$$
by $\vc a_1$ %$\vc a_1\bigl[(\alpha_2\vc a_1 + \alpha_1 \vc a_2)\cdot (\hat{\vcg\phi}_1\times\hat{\vcg\phi}_2)\bigr]\bigl[(\beta_2\vc a_1 + \beta_1 \vc a_2)\cdot (\hat{\vcg\phi}_1\times\hat{\vcg\phi}_2)\bigr]$ 
and deduce that the resulting number is zero, which concludes the proof of the first statement. 
\end{proof}
}

The results above motivate Definition \ref{DefPlJ} below.  %An important condition required in what follows is given by the following definition:
\begin{definition}
\label{DefPlJ}
Let $\mt R_1,\mt R_2\in SO(3)$ and $\mt V_1,\mt V_2\in \mathcal{M}$ such that $\rank(\mt R_1\mt V_1-\mt R_2\mt V_2) = 2.$ Let also $\bar t_1,\bar t_2\in \R\setminus\{0\}$ and $\vcg \phi_{1}\otimes \vcg\psi_1 ,\vcg \phi_{2}\otimes \vcg\psi_2 \in \mathcal{S}$ be such that
$\mt F_i(s):=\mt R_i\mt V_i(\mt 1+s\vcg \phi_{i}\otimes \vcg\psi_i)$ satisfies
$$
\mt F_1(\bar t_1)-\mt F_2(\bar t_2) = \bar{\vc b}\otimes \vc m,
$$
for some $\bar{\vc b}\in\R^3,\vc m\in \mathbb{S}^2$. Then, we say that $\mt F_1$ and $\mt F_2$ form a plastic junction at $(\bar t_1,\bar t_2)$ for $\mt R_1\mt V_1,\mt R_2\mt V_2$. In this case, we call the plane $\{\vc x\in\R^3: \vc x\cdot \vc m=0\}$ the plastic junction plane.

We say that the plastic junction formed by $\mt F_1$ and $\mt F_2$ at $(\bar t_1,\bar t_2)$ is locally rigid if there exists $\delta>0$ such that, for every $\mt R\in SO(3)\setminus\{\mt 1\}$ with $|\mt R-\mt 1|\leq \delta$, and every $t_1,t_2\in \R$ satisfying $|t_1-\bar{t}_1|+|t_2-\bar{t}_2|\leq \delta$, there exists no $\vc b\in\R^3$ such that
\beq
\label{defRk1mfixed}
\mt R \mt F_1(t_1)-\mt F_2(t_2) = {\vc b}\otimes \vc m.
\eeq
\end{definition}

The following result gives sufficient conditions for a plastic junction to be locally rigid. The notation below refers to the notation of Definition \ref{DefPlJ}.
\begin{proposition}
\label{LocalRigidProp}
Let $\mt F_1$ and $\mt F_2$ form a plastic junction at $(\bar t_1,\bar t_2)$ as defined in Definition \ref{DefPlJ}. Let further %$\vcg\psi_1=\vcg\psi_2$,
$\vcg\psi_1,\vcg\psi_2 \nparallel \vc m$,  
$\cof(\mt R_1\mt V_1-\mt R_2\mt V_2) = \hat{\vc b}\otimes \hat{\vc m}$ for some $\hat{\vc b}\in\R^3\setminus\{\vc0\}$, $\hat{\vc m}\in\mathbb{S}^2$ such that $\hat{\vc m}\cdot \vc m=\hat{\vc m}\cdot \vcg \psi_1=\hat{\vc m}\cdot \vcg \psi_2=0$, and
\beq
\label{ipotesiR}
\Bigl( {\mt R_1\mt V_1\hat{\vc m}} \times\mt R_1\mt V_1	 \bigl(\vc v + \bar{t}_1\vcg\phi_1 (\vcg\psi_1\cdot\vc v)\bigr)\Bigr) \cdot \Bigl( \mt R_1\mt V_1 \vcg\phi_1\times\mt R_2\mt V_2 \vcg\phi_2
 \Bigr)\neq 0,\qquad\text{where $\vc v := \vc m \times\hat{\vc m}$}.
\eeq
Then the plastic junction formed by $\mt F_1$ and $\mt F_2$ at $(\bar t_1,\bar t_2)$ is locally rigid.
\end{proposition}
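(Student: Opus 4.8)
The plan is to reduce the constraint \eqref{defRk1mfixed} to a pair of vector equations obtained by testing against an orthonormal basis of $\vc m^\perp$, and then to recognise \eqref{ipotesiR} as the non-degeneracy of a Jacobian to which the inverse function theorem applies.

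First I would observe that, since $\hat{\vc m}\cdot\vc m=0$, the vectors $\hat{\vc m}$ and $\vc v=\vc m\times\hat{\vc m}$ form an orthonormal basis of $\vc m^\perp$, so that $\mt R\mt F_1(t_1)-\mt F_2(t_2)=\vc b\otimes\vc m$ for some $\vc b\in\R^3$ is equivalent to the two conditions
\begin{align*}
\bigl(\mt R\mt F_1(t_1)-\mt F_2(t_2)\bigr)\hat{\vc m}&=\vc 0,\\
\bigl(\mt R\mt F_1(t_1)-\mt F_2(t_2)\bigr)\vc v&=\vc 0.
\end{align*}
Because $\vcg\psi_1\cdot\hat{\vc m}=\vcg\psi_2\cdot\hat{\vc m}=0$, the slip terms drop out of the first equation, which becomes $\mt R\,\mt R_1\mt V_1\hat{\vc m}=\mt R_2\mt V_2\hat{\vc m}$, independently of $t_1,t_2$. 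Using $\mt A\,\cof(\mt A)^T=\det(\mt A)\mt 1=\mt 0$ for the rank-two matrix $\mt A:=\mt R_1\mt V_1-\mt R_2\mt V_2$, together with $\cof(\mt A)=\hat{\vc b}\otimes\hat{\vc m}$ and $\hat{\vc b}\neq\vc 0$, I would deduce $\mt A\hat{\vc m}=\vc 0$, i.e. $\vc g:=\mt R_1\mt V_1\hat{\vc m}=\mt R_2\mt V_2\hat{\vc m}$, which is nonzero since $\mt V_1$ is positive definite. Hence the first equation reads $\mt R\vc g=\vc g$: for $\mt R$ near $\mt 1$ this forces $\mt R$ to be a rotation $\mt R(\theta)$ about the axis $\vc g$, and in particular any $\mt R$ not fixing $\vc g$ already violates \eqref{defRk1mfixed}.

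It then remains to analyse the second equation along this one-parameter family. I would set $\vc G(\theta,t_1,t_2):=\bigl(\mt R(\theta)\mt F_1(t_1)-\mt F_2(t_2)\bigr)\vc v$, which vanishes at the base point $(0,\bar t_1,\bar t_2)$ since $\vc v\perp\vc m$, and compute its Jacobian there. Using $\mt R'(0)\vc x=\tfrac{\vc g}{|\vc g|}\times\vc x$ and $\mt F_i(t_i)\vc v=\mt R_i\mt V_i\bigl(\vc v+t_i\vcg\phi_i(\vcg\psi_i\cdot\vc v)\bigr)$, the three columns are $\tfrac{1}{|\vc g|}\,\mt R_1\mt V_1\hat{\vc m}\times\mt R_1\mt V_1\bigl(\vc v+\bar t_1\vcg\phi_1(\vcg\psi_1\cdot\vc v)\bigr)$, the vector $(\vcg\psi_1\cdot\vc v)\,\mt R_1\mt V_1\vcg\phi_1$, and $-(\vcg\psi_2\cdot\vc v)\,\mt R_2\mt V_2\vcg\phi_2$. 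Since $\vcg\psi_i\perp\hat{\vc m}$ and $\vcg\psi_i\nparallel\vc m$ force $\vcg\psi_i\cdot\vc v\neq 0$, the determinant of this Jacobian is a nonzero multiple of the left-hand side of \eqref{ipotesiR}, hence nonzero by hypothesis. By the inverse function theorem the base point is the unique zero of $\vc G$ in a neighbourhood, so no $\theta\neq 0$ (with $t_1,t_2$ near $\bar t_1,\bar t_2$) solves the second equation; combined with the previous paragraph this yields the desired $\delta>0$.

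The routine parts are the two cofactor/orthogonality simplifications; the step carrying the real content is the Jacobian computation, where one must verify that the scalar prefactors $\vcg\psi_i\cdot\vc v$ are nonzero and that the resulting triple product is exactly \eqref{ipotesiR}. The main subtlety to keep in mind is the quantifier over all $\mt R\in SO(3)\setminus\{\mt 1\}$: it is handled by noting that the first equation already eliminates every $\mt R$ that does not fix $\vc g$, reducing the problem to the one-parameter family on which the inverse function theorem can be applied.
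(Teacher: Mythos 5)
Your proposal is correct and follows essentially the same route as the paper's proof: testing the rank-one equation against $\hat{\vc m}$ to pin the rotation axis to $\mt R_1\mt V_1\hat{\vc m}$, then testing against $\vc v=\vc m\times\hat{\vc m}$ and showing that the resulting map of $(\theta,t_1,t_2)$ has invertible Jacobian at $(0,\bar t_1,\bar t_2)$ exactly because its determinant is a nonzero multiple (via $\vcg\psi_i\cdot\vc v\neq 0$) of the left-hand side of \eqref{ipotesiR}. The only cosmetic differences are that you obtain $\mt R_1\mt V_1\hat{\vc m}=\mt R_2\mt V_2\hat{\vc m}$ from the identity $\mt A\,\cof(\mt A)^T=\det(\mt A)\mt 1=\mt 0$ instead of the paper's rank-two decomposition of $\mt R_2\mt V_2-\mt R_1\mt V_1$, and you invoke the inverse function theorem where the paper uses the equivalent first-order Taylor expansion.
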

\begin{proof}
Let us first notice that \eqref{defRk1mfixed} can be written as
\beq
\label{Rasse}
\mt R \mt R_1\mt V_1(\mt 1+t_1\vcg \phi_{1}\otimes \vcg\psi_1) - (\mt R_1\mt V_1 + \vc b_1\otimes \vc m_1+ \vc b_2\otimes \vc m_2)(\mt 1+t_2\vcg \phi_{2}\otimes \vcg\psi_2)  = \vc b\otimes\vc m,
\eeq
for some $\vc b_1,\vc b_2\in\R^3\setminus\{\vc0\}$, $\vc m_1,\vc m_2\in\mathbb{S}^2$ such that %$\vc b_1\times\vc b_2 = \hat{\vc b},$ 
$\frac{\vc m_1\times\vc m_2}{|\vc m_1\times\vc m_2|} =  \hat{\vc m}$. Testing \eqref{Rasse} by $\hat{\vc m}$, we deduce that a necessary condition for $\mt R\in SO(3)$ to satisfy \eqref{defRk1mfixed}, is that the rotation axis of $\mt R$ is $\mt R_1\mt V_1\hat{\vc m}$. Furthermore, letting $\vc v := \vc m \times\hat{\vc m}$, a necessary condition for the existence of $\mt R\in SO(3)$ such that \eqref{defRk1mfixed} holds is that
$$
\mt R \mt R_1\mt V_1(\mt 1+t_1\vcg \phi_{1}\otimes \vcg\psi_1)\vc v - (\mt R_1\mt V_1 + \vc b_1\otimes \vc m_1+ \vc b_2\otimes \vc m_2)(\mt 1+t_2\vcg \phi_{2}\otimes \vcg\psi_2)\vc v=\vc 0,
$$
which is \eqref{Rasse} tested by $\vc v$. Let hence 
$\mt R(\theta)\colon[0,2\pi]\to SO(3)$ be the rotation of axis $\mt R_1\mt V_1\hat{\vc m}$ and angle $\theta$. Let us also define the smooth function
$$
\vc f(\theta,t_1,t_2):=\mt R \mt R_1\mt V_1(\mt 1+t_1\vcg \phi_{1}\otimes \vcg\psi_1)\vc v - (\mt R_1\mt V_1 + \vc b_1\otimes \vc m_1+ \vc b_2\otimes \vc m_2)(\mt 1+t_2\vcg \phi_{2}\otimes \vcg\psi_2)\vc v.
$$
Necessary and sufficient condition to have local rigidity is that $\vc f\neq\vc 0$ in a neighbourhood of $(0,\bar{t}_1,\bar{t}_2)$. But 
\begin{align*}
&\frac{\partial}{\partial \theta} \vc f (0,\bar{t}_1,\bar{t}_2) = \frac{\mt R_1\mt V_1\hat{\vc m}}{|\mt R_1\mt V_1\hat{\vc m}|}\times \bigl(\mt R_1\mt V_1 (\vc v + \bar{t}_1\vcg\phi_1 (\vcg\psi_1\cdot\vc v))\bigr),
\\
&\frac{\partial}{\partial t_1} \vc f (0,\bar{t}_1,\bar{t}_2) = (\vcg\psi_1\cdot\vc v) \mt R_1\mt V_1 \vcg\phi_1, 
\qquad 
\frac{\partial}{\partial t_2} \vc f (0,\bar{t}_1,\bar{t}_2) = (\vcg\psi_2\cdot\vc v) \mt R_2\mt V_2 \vcg\phi_2.
\end{align*}
Therefore, if condition \eqref{ipotesiR} is satisfied, $\rank \nabla \vc f (0,\bar{t}_1,\bar{t}_2) = 3$, and hence there exists a neighbourhood of radius $\delta$ of $(0,\bar{t}_1,\bar{t}_2)$ such that for every $\vc w:=(\theta,t_1-\bar{t}_1,t_2-\bar{t}_2)$ with $0<|\vc w|\leq\delta$ 
$$
\vc f(\theta,t_1,t_2) = \nabla \vc f (0,\bar{t}_1,\bar{t}_2)\vc w + o(|\vc w|\delta) \neq \vc 0,
$$
which is the claim.
\end{proof}

\section{Stability of plastic junctions}
\label{Minim}
In this section we give sufficient conditions for plastic junctions to be weak local minimisers of the energy functional $I$. We recall that any Lipschitz continuous map $\vc y$ is a weak local minimiser if there exists $\eps>0$ such that $I(\vcg\rho)\geq I(\vc y)$ for any Lipschitz continuous map $\vcg \rho$ satisfying $\|\vc y-\vcg \rho\|_{W^{1,\infty}_{loc}}\leq \eps$. We start the Section by giving a mathematical definition of $V_{II}$ junctions. Then we state and prove our local stability result in Theorem \ref{StableThm} which gives sufficient conditions for $V_{II}$ junctions to be strict weak local minimisers. At the end of the section we state a stability result for plastic junctions, which relies on the same proof as Theorem \ref{StableThm}.\\
%
%At the end of the section we give a mathematical characterisation of $V_{II}$ junctions as plastic junctions which are weak local minimisers, and reflect Figure \ref{VII}.}\\

The definition of $V_{II}$ junction reads as follows:
\begin{definition}
\label{DefVII}
Let $\mt R_1,\mt R_2\in SO(3)$ and $\mt V_1,\mt V_2\in \mathcal{M}$ be such that $\rank(\mt R_1\mt V_1-\mt R_2\mt V_2) = 2.$ Let also $\bar{\mt F}_1,\bar{\mt F}_2\in\R^{3\times3}$ form a plastic junction at $(\bar{t}_1,\bar{t}_2)$ for $\mt R_1\mt V_1,\mt R_2\mt V_2$ which is locally rigid. Assume further: 
\begin{enumerate}[(1)]
\item\label{HH2} $\bar{\mt F}_1-\bar{\mt F}_2 = \vc b\otimes \vc m$ and $\cof(\mt R_1\mt V_1-\mt R_2\mt V_2) = \hat{\vc b}\otimes \hat{\vc m}$ for some $\vc b,\hat{\vc b}\in\R^3\setminus\{\vc0\}$, $\vc m,\hat{\vc m}\in\mathbb{S}^2$;% and where $\mt R_1\mt V_1,\mt R_2\mt V_2$
\item\label{HH3} (Domain) The domain $\omega$ (cf. Figure \ref{StableFigure}) is defined as $\omega := \{\vc x \in\R^3 : \min\{\vc x\cdot\vc n_1,\vc x\cdot \vc n_2\} < 0 \}$ for some $\vc n_1,\vc n_2\in\mathbb{S}^2$. We also define $\gamma_i:= \{\vc x \in \omega^c : \vc x\cdot\vc n_i= 0 \}$ for $i=1,2;$
\item\label{HH4} (Geometry) $\vc n_1, \vc n_2, \vcg \psi_1, \vcg \psi_2,\vc m \perp \hat{\vc m}.$ Also, (cf. Figure \ref{StableFigure}) there exist $\theta_{\vc m},\theta_{\vcg \psi_1},\theta_{\vcg \psi_2},\theta_{\vc n_2}\in (0,2\pi)$ (or in $(-2\pi,0)$) such that $|\theta_{\vcg \psi_1}|<|\theta_{\vc m}|<|\theta_{\vcg \psi_2}|<|\theta_{\vc n_2}|,$ and 
\begin{align*}
&\mt R_{\hat{\vc m}}(\theta_{\vcg \psi_1}) \gamma_1 \subset\{\vc x\in\R^3\colon\vc x\cdot \vcg \psi_1=0\},\quad \mt R_{\hat{\vc m}}(\theta_{\vc m}) \gamma_1 \subset\{\vc x\in\R^3\colon\vc x\cdot \vc m=0\},\\
&\mt R_{\hat{\vc m}}(\theta_{\vcg \psi_2}) \gamma_1 \subset\{\vc x\in\R^3	\colon\vc x\cdot \vcg \psi_2=0\},\quad \mt R_{\hat{\vc m}}(\theta_{\vc n_2}) \gamma_1= \gamma_2,
\end{align*}
where $\mt R_{\hat{\vc m}}(\theta) \gamma_1$ is the rotation of angle $\theta$ and axis $\hat{\vc m}$ of the half-plane $\gamma_1$. Furthermore, $\mt R_{\hat{\vc m}}(\theta) \gamma_1\subset\omega$ for any $\theta\in(0,\theta_{\vc n_2})$ (resp. $(\theta_{\vc n_2},0)$).
\item \label{HH5}(Structure) $\vc y\in W^{1,\infty}_{loc}(\R^3; \R^3)$ is defined by 
\beq
\label{stabbile}
\vc y(\vc x) = 
\begin{cases}
\bar{\mt F}_1\vc x,\qquad&\text{if $\vc x\in\Omega_1:=\bigl\{ \hat{\vc x}\in\omega\colon \hat{\vc x}\subset\mt R_{\hat{\vc m}}(\theta) \gamma_1,\,\theta\in(\theta_{\vcg \psi_1},\theta_{\vc m})\; \text{(resp. $(\theta_{\vc m},\theta_{\vcg \psi_1})$) }\bigr\}$,}\\
\bar{\mt F}_2\vc x,\qquad&\text{if $\vc x\in\Omega_2:=\bigl\{ \hat{\vc x}\in\omega\colon \hat{\vc x}\subset\mt R_{\hat{\vc m}}(\theta) \gamma_1,\,\theta\in(\theta_{\vc m},\theta_{\vcg \psi_2})\; \text{(resp. $(\theta_{\vcg \psi_2},\theta_{\vc m})$) }\bigr\}$,}\\
\mt R_1\mt V_1\vc x,\qquad&\text{if $\vc x\in\Omega_3:=\bigl\{ \hat{\vc x}\in\omega\colon \hat{\vc x}\subset\mt R_{\hat{\vc m}}(\theta) \gamma_1,\,\theta\in(0,\theta_{\vcg \psi_1})\; \text{(resp. $(\theta_{\vcg \psi_1},0)$) }\bigr\}$,}\\
\mt R_2\mt V_2\vc x,\qquad&\text{if $\vc x\in\Omega_4:=\bigl\{ \hat{\vc x}\in\omega\colon \hat{\vc x}\subset\mt R_{\hat{\vc m}}(\theta) \gamma_1,\,\theta\in(\theta_{\vcg \psi_2},\theta_{\vc n_2})\; \text{(resp. $(\theta_{\vc n_2},\theta_{\vcg \psi_2})$) }\bigr\}$,}\\
\vc x,\qquad&\text{if $\vc x\in\omega^c$.}
\end{cases}
\eeq 
\end{enumerate}
Then, we say that $\vc y$ is a $V_{II}$ junction between $\mt R_1\mt V_1$ and $\mt R_2\mt V_2$.
\end{definition}

%\begin{remark}
%\rm
%{
%In order to apply our theory to \Tn\,the definition of $V_{II}$ junction could be weakened. Indeed, it would be sufficient to assume that $\vc y\in W^{1,\infty}_{loc}(\R^3;\R^3)$ satisfies hypotheses \ref{HH4} and \ref{HH5} of Theorem \ref{StableThm}, that the plastic junction between $\mt R_1\mt V_1$ and $\mt R_2\mt V_2$ is locally rigid, and that $\vc y(\vc x) = \vc x$ in $\omega^c$. However, we prefer to endow the stability property in the notion of $V_{II}$ junction.}
%\end{remark}
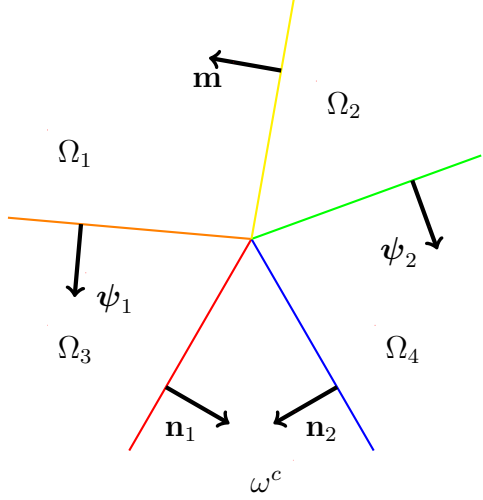
\begin{figure}
\centering
\begin{tikzpicture}[scale=0.65]
   \draw[red, thick] ({5*cos(240)},{5*sin(240)}) -- (0,0);
   \draw[blue, thick] ({5*cos(300)},{5*sin(300)}) -- (0,0);
  \draw [->,ultra thick] ({3.5*cos(300)},{3.5*sin(300)}) -- ({1.5*sin(300)+3.5*cos(300)},{-1.5*cos(300)+3.5*sin(300)}); 
  \draw [->,ultra thick] ({3.5*cos(240)},{3.5*sin(240)}) -- ({-1.5*sin(240)+3.5*cos(240)},{1.5*cos(240)+3.5*sin(240)}); 
\filldraw [red] ({-1*sin(240)+3.5*cos(240)},{1*cos(240)+3.5*sin(240)}) circle (0pt) node[anchor=north east,black] {$\vc n_1 $};
\filldraw [red] ({1*sin(300)+3.5*cos(300)},{-1*cos(300)+3.5*sin(300)}) circle (0pt) node[anchor=north west,black] {$\vc n_2 $};

%   \draw[green, thick] ({5*cos(200)},{5*sin(200)}) -- ({-5*cos(200)},-{5*sin(200)});
   \draw[green, thick] ({-5*cos(200)},{-5*sin(200)}) -- ({0*cos(200)},-{0*sin(200)});
   
      \draw[orange, thick] ({5*cos(200-25)},{5*sin(200-25)}) -- ({-0*cos(200)},-{0*sin(200)});
      
     \draw [->,ultra thick] ({3.5*cos(175)},{3.5*sin(175)}) -- ({-1.5*sin(175)+3.5*cos(175)},{1.5*cos(175)+3.5*sin(175)}); 
\filldraw [red] ({1*sin(175)+3.5*cos(175)},{+1*cos(175)+3.5*sin(175)}) circle (0pt) node[anchor= north west,black] {$\vcg \psi_1 $};
%     \draw [->,ultra thick] ({3.5*cos(20)},{3.5*sin(20)}) -- ({-1.5*cos(20)+3.5*cos(20)},{-1.5*sin(20)+3.5*sin(20)}); 
%\filldraw [red] ({-1*cos(20)+3.5*cos(20)},{-1*sin(20)+3.5*sin(20)}) circle (0pt) node[anchor= south east,black] {$\vcg \psi^\perp$};
     \draw [->,ultra thick] ({3.5*cos(20)},{3.5*sin(20)}) -- ({1.5*sin(20)+3.5*cos(20)},{-1.5*cos(20)+3.5*sin(20)}); 
\filldraw [red] ({1*sin(20)+3.5*cos(20)},{-1*cos(20)+3.5*sin(20)}) circle (0pt) node[anchor= north east,black] {$\vcg \psi_2 $};

   \draw[yellow, thick] ({5*cos(80)},{5*sin(80)}) -- (0,0);
  \draw [->,ultra thick] ({3.5*cos(80)},{3.5*sin(80)}) -- ({-1.5*sin(80)+3.5*cos(80)},{1.5*cos(80)+3.5*sin(80)}); 
\filldraw [red] ({-1*sin(80)+3.5*cos(80)},{1*cos(80)+3.5*sin(80)}) circle (0pt) node[anchor=north east,black] {${\vc m}$};

\filldraw [red] ({1*sin(225)+3.5*cos(225)-1},{-1*cos(225)+3.5*sin(225)}) circle (0pt) node[anchor=north west,black] {$\Omega_3$};

\filldraw [red] ({1*sin(225)+3.5*cos(225)+5.7},{-1*cos(225)+3.5*sin(225)}) circle (0pt) node[anchor=north west,black] {$\Omega_4$};

\filldraw [red] ({1*sin(225)+3.5*cos(225)+4.5},{-1*cos(225)+3.5*sin(225)+5}) circle (0pt) node[anchor=north west,black] {$\Omega_2$};

\filldraw [red] ({1*sin(225)+3.5*cos(225)-1},{-1*cos(225)+3.5*sin(225)+4}) circle (0pt) node[anchor=north west,black] {$\Omega_1$};

\filldraw [red] ({-3*sin(240)+3.5*cos(240)},{3*cos(240)+3.5*sin(240)}) circle (0pt) node[anchor=north east,black] {$\omega^c $};
\end{tikzpicture}
\caption{\label{StableFigure} %
Representation of $\omega$ as defined in Definition \ref{DefVII} \eqref{HH3} projected on the plane orthogonal to $\hat{\vc m}$. Here, $\vcg\psi_1,\vcg\psi_2,\vc m$ and $\Omega_1,\Omega_2,\Omega_3,\Omega_4$ are as in Definition \ref{DefVII} \eqref{HH4}--\eqref{HH5}. We remark that $\overline{\omega} = \overline{\Omega_1\cup\Omega_2\cup\Omega_3\cup\Omega_4}.$
%Two dimensional representation of $\vc y\in W^{1,\infty}(\Omega;\R^3)$ as defined in \eqref{stabbile}.
}
\end{figure}
\begin{remark}
\label{CompatCon1}
\rm
The Hadamard jump condition implies that a necessary condition in order to form a $V_{II}$ junction between $\mt R_1\mt V_1$ and $\mt R_2\mt V_2$ is that 
$$
\rank\bigl(\mt R_1\mt V_1-\mt 1\bigr)\leq 1\qquad\text{ and }\qquad\rank\bigl(\mt R_2\mt V_2 -\mt 1\bigr)\leq 1.
$$
\end{remark}
\begin{remark}
\rm
The hypothesis \ref{HH4} requiring that $\vc n_1,\vc n_2,\vcg \psi_1,\vcg\psi_2,\vc m \perp  \hat{\vc m}$ guarantees the continuity of $\vc y$ along the line $s\hat{\vc m}$ for $s\in\R,$ and justifies the bi-dimensional representation of stable plastic junctions given in Figure \ref{StableFigure}.
\end{remark}

\noindent
Before stating our stability result let us introduce the following definition:
\begin{definition}
\label{DefSep}
Let $s\in \R$, $\mt R_{\mt F}\in SO(3)$, $\mt U\in \mathcal{M}$ and $\vcg \phi_{\mt F}\otimes \vcg\psi_{\mt F}\in \mathcal{S}$%{\color{red}DADEFINIRE!!!!}
. We say that $\mt F = \mt R_{\mt F} \mt U (\mt 1 + s\vc \phi_{\mt F}\otimes \psi_{\mt F})$ enjoys the separation property if there exists $\delta>0$ such that $|\mt F-\mt G|>\delta$ for every $\mt G = \mt R_{\mt G} \mt V (\mt 1 + t\vc \phi_{\mt G}\otimes \psi_{\mt G})$, with $t\in \R$, $\mt R_{\mt G}\in SO(3)$, $\mt V\in \mathcal{M}$, $\vc \phi_{\mt G}\otimes \psi_{\mt G}\in \mathcal{S}$ and where at least one out of $\mt U\neq \mt V$ and $\vc \phi_{\mt F}\otimes \psi_{\mt F}\neq \vc \phi_{\mt G}\otimes \psi_{\mt G}$ holds.  
\end{definition}
\begin{remark}
\label{Rm55}
{\rm If $\mt F$ enjoys the separation property, then in a neighbourhood of $\mt F$ there exists a unique decomposition $\mt F = \mt F^e\mt F^p$ of finite energy.}
\end{remark}
We also introduce the definition of a locally stable $V_{II}$ junction:
\begin{definition}
\label{DefStabY}
We say that a $V_{II}$ junction $\vc y \in W^{1,\infty}_{loc}(\R^3;\R^3)$ is locally stable if there exists $\eps>0$ such that, given any $\vcg \rho \in W^{1,\infty}_{loc}(\R^3;\R^3)$ satisfying \begin{enumerate}[(A)]
\item\label{AA} $\int_{B_r} W(\nabla\vcg\rho)\,\mathrm{d}\vc x <\infty$ for any open ball $B_r$ centred at $\vc 0$ and of arbitrary radius $r>0$,
\item\label{BB} $\| \nabla\vcg\rho - \nabla\vc y\|_{L^{\infty}}\leq \eps$,
\item\label{DD} $\vcg\rho$ is $1-1$,
% \item $\vcg \rho\neq \vc y$,
\end{enumerate}
it holds:
\begin{itemize}
\item[(T1)]\label{TT1} for any measurable $\mathcal{B}\subset\R^3$ bounded
\beq
\label{cheIneq0}
\int_{\mathcal B} \bigl(W(\nabla\vcg\rho)-W(\nabla\vc y) \bigr)\,\mathrm{d}\vc x \geq 0,
\eeq
\item[(T2)]\label{TT2} the equality
\beq
\label{cheIneqPari}
\int_{B_r} \bigl(W(\nabla\vcg\rho)-W(\nabla\vc y) \bigr)\,\mathrm{d}\vc x = 0,
\eeq
holds for any open ball $B_r$ centred at $\vc 0$ and of arbitrary radius $r>0$ if and only if $\vcg \rho = \mt R\vc y +\vc c$ for some $\mt R\in SO(3), \vc c\in\R^3$.
\end{itemize}
%\end{multicols} \eqref{cheIneq0} holds, and \eqref{cheIneqPari} is satisfied for any open ball $B_r$ centred at $\vc 0$ and of arbitrary radius $r>0$ if and only if $\vcg \rho = \mt R\vc y$ for some $\mt R\in SO(3)$.
%
%Let $\vc y\in W^{1,\infty}_{loc}(\omega;\R^3)$ be as in Theorem \ref{StableThm}. Let further all the hypotheses in Theorem \ref{StableThm} be satisfied. Then, we say that $\vc y$ is a locally stable plastic junction between $\mt R_1\mt V_1$ and $\mt R_2\mt V_2$. If further $\vc y\in W^{1,\infty}_{loc}(\R^3;\R^3)$ is a $V_{II}$ junction, we say that $\vc y$ is a locally stable $V_{II}$ junction between $\mt R_1\mt V_1$ and $\mt R_2\mt V_2$.
%
%({\color{red} Le ho gia definite le VII junctions? Se si come? Questa definizione e coerente??})
\end{definition}

\begin{remark}
As pointed out in Section \ref{Nonlin}, the energy density $W$ is invariant under rigid motions. That is, given any $\vcg \rho\in W^{1,\infty}_{loc}(\R^3;\R^3)$, any $\mt R\in SO(3)$ and any $\vc c\in\R^3$, we have that $\vcg \rho$ and $\mt R\vcg \rho +\vc c$ have the same energy. As we are not imposing any boundary condition on the variations $\vcg \rho$ in Definition \ref{DefStabY}, any $\vcg \rho =\mt R\vc y+\vc c$, that is a rigid motion of a $V_{II}$ junction $\vc y$, has the same energy as $\vc y$. According to Definition \ref{DefStabY} a locally stable $V_{II}$ junction is a strict weak local minimiser modulo rigid motions. 
\end{remark}

We are now ready to state and prove our stability theorem for $V_{II}$ junctions. The result relies on three main ingredients: first, we assume that any possible small variation $\vcg \rho$ of our $V_{II}$ junction (described by the map $\vc y$) has locally finite energy. This, together with the structure of the energy density $W$ and the separation property (introduced in Definition \ref{DefSep}) give a structure to the gradient of $\vcg \rho$ (cf. Remark \ref{Rm55}). Second, we exploit the result of \cite{BallJamesPlane} characterising plane strains. Indeed, by using this result, we are able to prove that an Hadamard jump condition must hold for $\vcg\rho$ at the plastic junction plane $\{\vc x\cdot\vc m=0\}$ of our plastic junction. Third, we use the local rigidity of the plastic junction to prove that our variation $\vcg \rho$ coincides, up to a rotation, with $\vc y$ in a wedge of $\R^3$ (namely $\Omega_1\cup\Omega_2$). Finally we prove (T1)-(T2). The theorem reads as follows:  
\begin{theorem}
\label{StableThm}
Let $\vc y \in W^{1,\infty}_{loc}(\R^3;\R^3)$ be a $V_{II}$ junction as in Definition \ref{DefVII}. Let also $\bar{\mt F}_1,\bar{\mt F}_2$ enjoy the separation property. Then, if {$(\mt V_i^{2}\vcg\phi_i\times\vcg\psi_i)\cdot\vc m\neq 0,$} for $i=1,2$, the $V_{II}$ junction is locally stable in the sense of Definition \ref{DefStabY}.
%
%there exists $\eps>0$ such that 
%\beq
%\label{cheIneq0}
%\int_{\mathcal B} \bigl(W(\nabla\vcg\rho)-W(\nabla\vc y) \bigr)\,\mathrm{d}\vc x \geq 0,
%\eeq
%for any measurable $\mathcal{B}\subset\omega$ bounded, and every $\vcg \rho \in W^{1,\infty}_{loc} (\omega;\R^3)$ such that 
%%\begin{multicols}{2}
%\begin{enumerate}[(A)]
%\item\label{AA} $\int_{\omega\cap B_r} W(\nabla\vcg\rho)\,\mathrm{d}\vc x <\infty$ for any open ball $B_r$ centred at $\vc 0$ and of arbitrary radius $r>0$,
%\item  $\| \nabla\vcg\rho - \nabla\vc y\|_{L^{\infty}}\leq \eps$,
%\item $\vcg\rho$ is $1-1$,
%\item\label{DD} $\vcg \rho\neq \vc y$.
%\end{enumerate}
%%\end{multicols}
%\noindent
%If further $\vcg\rho=\vc y$ on $\gamma_1\cup\gamma_2$, there exists $r_0>0$ such that
%\beq
%\label{cheIneq}
%\int_{\omega\cap B_r} \bigl(W(\nabla\vcg\rho)-W(\nabla\vc y) \bigr)\,\mathrm{d}\vc x > 0,
%\eeq
%for any open ball $B_r$ centred at $\vc 0$ and of arbitrary radius $r>r_0$. 
%{\color{blue}Finally, if hypothesis \rm{(4')} is replaced by \eqref{HH5} in Definition \ref{DefVII}, we have that \eqref{cheIneq0} holds, and $\vcg \rho \in W^{1,\infty}_{loc}(\R^3;\R^3)$ satisfies}
%\beq
%\label{cheIneqPari}
%\int_{B_r} \bigl(W(\nabla\vcg\rho)-W(\nabla\vc y) \bigr)\,\mathrm{d}\vc x = 0,
%\eeq
%{\color{blue}for any open ball $B_r$ centred at $\vc 0$ and of arbitrary radius $r>0$ if and only if $\vcg \rho = \mt R\vc y$ for some $\mt R\in SO(3)$.}}
\end{theorem}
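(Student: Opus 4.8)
The plan is to follow the three–part strategy announced before the statement: rigidify the admissible competitors with the energy structure and the separation property, promote this to a Hadamard jump at the plastic junction plane via the plane–strain result of \cite{BallJamesPlane}, and then use local rigidity to pin the competitor. First (Step 1, pointwise structure of $\nabla\vcg\rho$), hypothesis (A) gives that a.e.\ $\nabla\vcg\rho(\vc x)=\mt R(\vc x)\mt V(\vc x)(\mt 1+s(\vc x)\vcg\phi(\vc x)\otimes\vcg\psi(\vc x))$ with $\mt V\in\mathcal M$ and $\vcg\phi\otimes\vcg\psi\in\mathcal S$. I would take $\eps$ in Definition \ref{DefStabY} smaller than the separation constant of $\bar{\mt F}_1,\bar{\mt F}_2$ and than the gap between distinct wells, so that (B) together with Remark \ref{Rm55} pins the variant and the slip system to those of $\vc y$ in each region: on $\Omega_i$ ($i=1,2$) one has $\nabla\vcg\rho=\mt R(\vc x)\mt V_i(\mt 1+s_i(\vc x)\vcg\phi_i\otimes\vcg\psi_i)$ with $\mt R$ near $\mt R_i$ and $s_i$ near $\bar t_i$, while on $\Omega_3,\Omega_4$ (resp.\ $\omega^c$) the gradient stays in $SO(3)\mt V_1$, $SO(3)\mt V_2$ (resp.\ near $\mt 1$).

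Step 2 (plane strain and the Hadamard jump) is the crux. By hypothesis \ref{HH4} the configuration is two–dimensional: $\vc n_1,\vc n_2,\vcg\psi_1,\vcg\psi_2,\vc m\perp\hat{\vc m}$ and $\omega$ is a prism with axis $\hat{\vc m}$; moreover $\cof(\mt R_1\mt V_1-\mt R_2\mt V_2)=\hat{\vc b}\otimes\hat{\vc m}$ forces $\mt R_1\mt V_1\hat{\vc m}=\mt R_2\mt V_2\hat{\vc m}$, and together with Remark \ref{CompatCon1} (which gives $\mt R_i\mt V_i-\mt 1=\vc a_i\otimes\vc n_i$ with $\vc n_i\perp\hat{\vc m}$) one finds $\mt R_i\mt V_i\hat{\vc m}=\hat{\vc m}$, so $\vc y$ is a plane strain with neutral direction $\hat{\vc m}$. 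Since $\vcg\psi_i\cdot\hat{\vc m}=0$, on $\Omega_i$ one still has $\nabla\vcg\rho\,\hat{\vc m}=\mt R(\vc x)\mt V_i\hat{\vc m}$, a rotation of a fixed unit vector; this common neutral direction is what I would exploit to apply \cite{BallJamesPlane}, after a shear–straightening change of variables $\mt L$ (as in the proof of Theorem \ref{Incompatible wells}) that recovers the eigenvalue–one eigenvector of the Cauchy–Green tensor required by the result. This produces a single global rotation $\mt Q$ under which $\vcg\rho$ becomes independent of the $\hat{\vc m}$–coordinate, and the transversality hypothesis $(\mt V_i^2\vcg\phi_i\times\vcg\psi_i)\cdot\vc m\neq0$ (the analogue of $\vc u_*\times\vc n_\perp\neq0$) guarantees that the characteristics carrying the boundary data are transverse to the junction plane, forcing $\nabla\vcg\rho$ to be constant, say $\mt G_i$, on $\Omega_i$. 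Continuity of $\vcg\rho$ across $\{\vc x\cdot\vc m=0\}$ then gives the Hadamard jump $\mt G_1-\mt G_2=\vc b'\otimes\vc m$.

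Step 3 (local rigidity) and conclusion. Writing $\mt G_1=\mt Q\mt R\,\bar{\mt F}_1(t_1)$ and $\mt G_2=\mt Q\,\bar{\mt F}_2(t_2)$ with $\mt R$ near $\mt 1$ and $t_i$ near $\bar t_i$, the jump of Step 2 is exactly \eqref{defRk1mfixed} modulo the global frame $\mt Q$. Since \eqref{ipotesiR} follows from our transversality assumption, Proposition \ref{LocalRigidProp} makes the junction locally rigid, excluding $\mt R\neq\mt 1$, and the uniqueness of the junction parameters (Proposition \ref{PropMeet}) forces $t_1=\bar t_1$, $t_2=\bar t_2$; hence $\nabla\vcg\rho=\mt Q\bar{\mt F}_i$ on $\Omega_i$. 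For (T1) it then suffices to note that on $\Omega_3,\Omega_4,\omega^c$ one has $W(\nabla\vc y)=0\leq W(\nabla\vcg\rho)$ pointwise, while on $\Omega_1\cup\Omega_2$ frame indifference gives $W(\nabla\vcg\rho)=W(\bar{\mt F}_i)=W(\nabla\vc y)$, so the integrand in \eqref{cheIneq0} is non–negative everywhere. For (T2), equality on a ball forces the integrand to vanish a.e., hence $W(\nabla\vcg\rho)=0$ on the zero–energy regions; closeness gives $\nabla\vcg\rho\in SO(3)\mt V_i$ (resp.\ near $SO(3)$), and continuity across all interfaces stitches the local rotations into the single $\mt Q$, yielding $\vcg\rho=\mt Q\vc y+\vc c$, the converse being immediate from frame indifference.

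The hard part is Step 2, namely producing the constant–gradient (Hadamard) structure on $\Omega_1,\Omega_2$ from the merely pointwise finite–energy structure of Step 1. Unlike Theorem \ref{Incompatible wells}, where the admissible gradients lay in a fixed two–well quasiconvex set, here the shear amount $s_i(\vc x)$ varies, so the Cauchy–Green tensors trace a one–parameter curve with no common eigenvector; the plane–strain rigidity therefore must be applied after a careful shear–straightening change of variables and leans essentially on the transversality hypothesis to rule out the genuinely two–dimensional degeneracy exhibited in Remark \ref{RemarkFail}.
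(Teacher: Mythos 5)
There is a genuine gap in your Step 2, and it is precisely at the point you yourself identify as "the crux". You claim that the plane-strain structure from \cite{BallJamesPlane} plus the transversality hypothesis $(\mt V_i^{2}\vcg\phi_i\times\vcg\psi_i)\cdot\vc m\neq 0$ already "forces $\nabla\vcg\rho$ to be constant, say $\mt G_i$, on $\Omega_i$", and only afterwards do you invoke the Hadamard jump and local rigidity. This is false: the plane-strain result only makes $\nabla\vcg\rho$ constant along the characteristic lines in the direction $\vc u_2^{(i)}\parallel\mt V_i(\mt V_i^{2}\vcg\phi_i\times\vcg\psi_i)$ (note: this, and not $\hat{\vc m}$, is the neutral direction, and it differs for $i=1,2$); transversality only guarantees that these characteristics meet the junction plane, so that traces there are well defined. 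The gradient may still vary in the two remaining coordinates: for instance, a laminate in $\Omega_i$ with layer normal $\vcg\psi_i$ and varying shear amount $t_i(\vc x)$ near $\bar t_i$ is curl-free (shears on the same slip system are mutually rank-one connected), is a plane strain with the same neutral direction, satisfies (A)--(C) of Definition \ref{DefStabY}, and has non-constant gradient. Your appeal to "characteristics carrying the boundary data" is imported from Theorem \ref{Incompatible wells}, but it is vacuous here: Definition \ref{DefStabY} imposes \emph{no} boundary conditions on $\vcg\rho$ (this is exactly why stability is only modulo rigid motions), so there is no boundary data for the characteristics to carry.

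The argument must therefore run in the opposite order, which is what the paper does: (i) use transversality to define the traces of $\nabla\vc z^{(i)}$ on $\{\vc x\cdot\vc m=0\}$ and to propagate values along characteristics; (ii) use continuity of $\vcg\rho$ and the weak Hadamard jump condition to get a \emph{pointwise a.e.} rank-one relation between the two traces on the junction plane; (iii) with $\eps$ calibrated so that the deviations $|\hat{\mt R}_1^T\hat{\mt R}_2-\mt 1|+|t_1-\bar t_1|+|t_2-\bar t_2|$ stay below the local-rigidity radius $\delta^*$, apply local rigidity of the plastic junction \emph{at a.e. point of the plane} to conclude $t_i=\bar t_i$ and that the two rotations agree, so the traces equal $\mt R_0(\vc x)\bar{\mt F}_i$ for a common measurable rotation field $\mt R_0$; (iv) propagate this back along the characteristics into all of $\Omega_i$; and only then (v) invoke Reshetnyak's rigidity theorem to conclude that $\mt R_0$ is constant. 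Constancy of the gradient is thus the \emph{output} of Hadamard plus local rigidity plus Reshetnyak (a theorem you never invoke for $\Omega_1\cup\Omega_2$), not an input available before them. Two smaller corrections: the change of variables that makes \cite{BallJamesPlane} applicable is $\tilde{\vc z}^{(i)}(\vc x)=\vc z^{(i)}(\mt V_i^{-1}\vc x)$, not the shear-straightening map $\mt L$ of Theorem \ref{Incompatible wells}; and after this change of variables the Cauchy--Green tensors of the whole one-parameter shear family \emph{do} share the fixed eigenvalue-one eigenvector $\vc u_2^{(i)}$ (contrary to your closing paragraph), which is exactly what makes the plane-strain theorem applicable despite the varying shear amount. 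Your Steps 1 and 3, and the (T1)/(T2) bookkeeping, are essentially correct once Step 2 is repaired in this order.
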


%\begin{definition}
%\label{DefStabY}
%We say that a $V_{II}$ junction $\vc y \in W^{1,\infty}_{loc}(\R^3;\R^3)$ is locally stable if, given any $\vcg \rho \in W^{1,\infty}_{loc}(\R^3;\R^3)$ satisfying \eqref{AA}--\eqref{DD}, \eqref{cheIneq0} holds, and \eqref{cheIneqPari} is satisfied for any open ball $B_r$ centred at $\vc 0$ and of arbitrary radius $r>0$ if and only if $\vcg \rho = \mt R\vc y$ for some $\mt R\in SO(3)$.
%
%Let $\vc y\in W^{1,\infty}_{loc}(\omega;\R^3)$ be as in Theorem \ref{StableThm}. Let further all the hypotheses in Theorem \ref{StableThm} be satisfied. Then, we say that $\vc y$ is a locally stable plastic junction between $\mt R_1\mt V_1$ and $\mt R_2\mt V_2$. If further $\vc y\in W^{1,\infty}_{loc}(\R^3;\R^3)$ is a $V_{II}$ junction, we say that $\vc y$ is a locally stable $V_{II}$ junction between $\mt R_1\mt V_1$ and $\mt R_2\mt V_2$.
%
%({\color{red} Le ho gia definite le VII junctions? Se si come? Questa definizione e coerente??})
%\end{definition}

\begin{remark}
\rm
In Definition \ref{DefVII}, Definition \ref{DefStabY} and hence in the statement of Theorem \ref{StableThm} we consider an unbounded domain. This domain can be interpreted as a blow-up close to  the line given by $\overline{\gamma}_1\cap \overline{\gamma}_2$, where the incompatibility occurs. Mathematically, this choice is motivated by the argument in the proof, which relies on rigidity for plain strains. More precisely, this leads to the fact that the deformation gradient on the plane of compatibility $\{\vc x\cdot\vc m=0\}$ is propagated in $\Omega_1$ along the characteristic lines in direction $(\mt V_1^{2}\vcg\phi_1\times\vcg\psi_1)$, and in $\Omega_2$ along the lines in direction $(\mt V_2^{2}\vcg\phi_2\times\vcg\psi_2)$. A similar theorem could be proved on any connected Lipschitz domain $\Omega$ %with $\partial\Omega\cap \{\vc x\cdot \vc n_i=0\}$ of strictly positive $\mathscr H^2$ measure for both $i=1,2$, and 
such that for every $\vc x\in \Omega\cap \Omega_i$, $i=1,2$, $\vc x+s(\mt V_i^{2}\vcg\phi_i\times\vcg\psi_i)\in\Omega$ for every $s\in[0,s^*_i]$, where $s_i^*\in\R$ is such that $(\vc x+s_i^*(\mt V_i^{2}\vcg\phi_i\times\vcg\psi_i))\cdot \vc m=0.$ This last condition guarantees that the information is transported by the characteristic lines $(\mt V_i^{2}\vcg\phi_i\times\vcg\psi_i)$ from the plane of compatibility $\{\vc x\cdot \vc m=0\}$ to every point in the domain.
\end{remark}

\begin{proof}
Let $\delta_1,\delta_2>0$ be as in Definition \ref{DefSep} such that $\bar{\mt F}_1,\bar{\mt F}_2$ respectively enjoy the separation property. Let also $\delta_3:=\frac12 \min\bigl\{\|\mt R\mt U-\mt V\| : {\mt U\neq\mt V\in \mathcal{M}\cup\{\mt 1\}, \mt R\in SO(3)}\bigr\}$, and let us take $\eps_0 = \min\bigl\{\delta_1,\delta_2, \delta_3\bigr\}$. 
%Therefore, by hypothesis \eqref{HH1}, given any $\vcg \rho\in W^{1,\infty}_{loc}(\omega;\R^3)$ such that $\| \nabla\vcg\rho - \nabla\vc y\|_{L^{\infty}}\leq \eps_0$ it must satisfy
Consider now any $\vcg \rho\in W^{1,\infty}_{loc}(\R^3;\R^3)$ satisfying \eqref{AA}--\eqref{DD} in Definition \ref{DefStabY}. Then, since the energy is locally finite, by the separation property we have, 
%\ttr{Given any $\vcg \rho\in W^{1,\infty}_{loc}(\omega;\R^3)$ such that $\| \nabla\vcg\rho - \nabla\vc y\|_{L^{\infty}}\leq \eps_0$ we can assume without loss of generality that the energy density of $\vcg \rho$, namely $W(\nabla\vcg\rho)$, is locally finite, otherwise the result follows trivially. Therefore, by hypothesis \eqref{HH1}, }
\[
\nabla \vcg \rho(\vc x) = 
\begin{cases}
\nabla\vc z^{(1)},\qquad&\text{in $\Omega_1$,}\\
\nabla\vc z^{(2)},\qquad&\text{if $\Omega_2$,}
\end{cases}
\]
for some locally Lipschitz continuous $\vc z^{{(1)}},\vc z^{{(2)}}$ such that
\beq
\label{ZZZ}
\nabla \vc z^{(1)}(\vc x)=\hat{\mt R}_1(\vc x)\mt V_1 (\mt 1 + t_1(\vc x)\vcg \phi_1 \otimes\vcg \psi_1 ),\qquad \nabla\vc z^{(2)}=\hat{\mt R}_2(\vc x)\mt V_2 (\mt 1 + t_2(\vc x)\vcg \phi_2 \otimes\vcg \psi_2 ),
\eeq
for some measurable $t_i\colon\Omega_i\to\R$, and $\hat{\mt R}_i\colon \Omega_i\to SO(3)$, $i=1,2.$ Define now $\tilde{\vc z}^{(i)}(\vc x) := \vc z^{(i)}(\mt V_i^{-1}\vc x)$. We notice that,
$$
\det \nabla \tilde{\vc z}^{(i)} = 1, \qquad  (\nabla\tilde{\vc z}^{(i)})^T
(\nabla\tilde{\vc z}^{(i)}) = \mt 1 +  t_i(\vc x)\mt V_i\vcg \phi_i \odot \mt V_i^{-1} \vcg \psi_i + t_i^2(\vc x)|\mt V_i\vcg \phi_1|^2\mt V_i^{-1}\vcg \psi_i \otimes\mt V_i^{-1}\vcg \psi_i, 
$$
where $\vc u\odot\vc v = \vc u\otimes \vc v+\vc v\otimes\vc u$ for any $\vc u,\vc v\in\R^3$. It follows then by \cite[Thm. 3.1]{BallJamesPlane} that $\tilde{\vc z}^{(i)}$ is a plain strain, and we can hence deduce the existence of $\mt Q_1,\mt Q_2\in SO(3)$ such that
$$
\tilde{\vc z}^{(i)} = \mt Q_i\bigl(\tilde{z}^{(i)}_1(s_1^{(i)},s_3^{(i)}) \vc u_1^{(i)} + s_2^{(i)}\vc u^{(i)}_2  + \tilde{z}^{(i)}_3(s_1^{(i)},s_3^{(i)}) \vc u^{(i)}_3 \bigr),
%\qquad \tilde{\vc z}^{(2)} = \mt Q_2\left[\begin{array}{ c } \tilde{z}^{(2)}_2(s_1,s_3) \\ s_2 \\ \tilde{z}^{(2)}_3(s_1,s_3) \end{array}\right],
$$
for some Lipschitz functions $\tilde{z}_1^{(i)},\tilde{z}_3^{(i)}$, and where
$$
\vc u_1^{(i)}: = \frac{\mt V_i^{-1}\vcg\psi_i}{|\mt V_i^{-1}\vcg\psi_i|},\qquad\vc u_3^{(i)}: = \frac{\mt V_i\vcg\phi_i}{|\mt V_i\vcg\phi_i|},\qquad \vc u_2^{(i)} = \vc u_3^{(i)}\times\vc u_1^{(i)}, \qquad s^{(i)}_j = \vc x\cdot \vc u^{(i)}_j.
$$
Now, given the fact that the $\tilde{\vc z}^{(i)}$ are Lipschitz continuous and that {$(\mt V_i^{2}\vcg\phi_i\times\vcg\psi_i)\cdot\vc m\neq 0,$} (and hence $\vc u_2^{(i)}\cdot \mt V_i^{-1}\vc m\neq 0$) the value of $\nabla\tilde{\vc z}^{(i)}$ is well defined on the plane $\{\vc x\cdot \mt V_i^{-1}\vc m = 0\}$. Indeed, 
\beq
\label{traceZ}
\nabla\tilde{\vc z}^{(i)}(\vc x) = \nabla\tilde{\vc z}^{(i)}(\vc x + r\vc u_2^{(i)})
\eeq
for almost every $\vc x \in \{\vc x\cdot \mt V_i^{-1}\vc m = 0\}$ and almost every $s\in\R$ such that $\vc x + s\vc u_2^{(i)}\in \mt V_i^{-1}\Omega_i$.
%
%$r\sign(\vc u_2^{(1)}\cdot \mt V^{-1}_1\vc m)>0$ for $i=1$, and such that $r\sign(\vc u_2^{(2)}\cdot \mt V^{-1}_2\vc m)<0$ if $i=2$. 
As a consequence, the value of $\nabla \vc z^{(1)},\nabla \vc z^{(2)}$ on $\{\vc x\cdot \vc m = 0\}$ is well defined, and is respectively in $L^\infty(\gamma_1;\R^{3\times 3})$, $L^\infty(\gamma_2;\R^{3\times 3})$. By the continuity of $\vcg \rho$ and a weak version of the Hadamard jump condition (see \cite[Remark 10]{FDP2}) we deduce that 
\beq
\label{rank1forZ}
%\rank\bigl(	
\nabla \vc z^{(1)}(\vc x) - \nabla \vc z^{(2)}	(\vc x)
%\bigr)\leq 1
= \hat{\vc b}(\vc x)\otimes \vc m
,\qquad \text{a.e. $\vc x\in\{\hat{\vc x}\in\omega\colon\hat{\vc x}\cdot \vc m = 0\}$},
\eeq
for some measurable $\hat{\vc b}\colon \{\hat{\vc x}\in\omega\colon\hat{\vc x}\cdot \vc m = 0\}\to\R^3$. 

We now claim that this implies the existence of $\mt R_0\in SO(3)$ such that  $\nabla \vc z^{(i)} (\vc x) = \mt R_0 \mt F_i$ a.e. in $\Omega_i$, $i=1,2$. Indeed,
% since $\tilde{\vc z}^{(i)}$ with $i=1,2$ are plain strains and {$\mt V_i^{-1}\vc m\cdot\vc u_2^{(i)}\neq 0$, \eqref{traceZ} together with \eqref{BB} yield} $|\nabla \vc z^{(i)}-\mt F_i|\leq \eps_0$ for a.e. $\vc x\in\{\hat{\vc x}:\hat{\vc x}\cdot \vc m = 0\}$. Let us consider the smooth functions
let us consider the smooth functions
$$
f_i(\mt R,t) = |\mt R\mt R_i\mt V_i(\mt 1+t\vcg\phi_i\otimes\vcg\psi_i)-\mt R_i\mt V_i(\mt 1+\bar{t}_i\vcg\phi_i\otimes\vcg\psi_i)|,\qquad i=1,2,
$$
and let $\delta^*$ be as in Definition \ref{DefPlJ}. Since the $f_i$'s are continuous, $f_i\to\infty$ as $|t|\to\infty$ and $f_i=0$ if and only if $\mt R=\mt 1$ and $t=\bar{t}_i$, there exists $\eps_1>0$ such that $f_i\leq \eps_1$ implies $|\mt R-\mt 1| + |t-\bar{t}_i|\leq \frac12\delta^*$ for $i=1,2$. Let us hence fix $\eps :=\min\{\eps_0,\eps_1\}$. 
Therefore, if by \eqref{BB} $|\nabla\vc z^{(i)}-\nabla\vc y|\leq \eps$ a.e. in $\Omega_i$ with $i=1,2$, then by \eqref{ZZZ} $|\hat{\mt R}_1^T(\vc x)\hat{\mt R}_2(\vc x) -\mt 1|  + |t_1(\vc x)-\bar{t}_1|+ |t_2(\vc x)-\bar{t}_2|\leq \delta^*$ a.e. in $\Omega_i$. As a consequence, since $\tilde{\vc z}^{(i)}$ with $i=1,2$ are plain strains and $\mt V_i^{-1}\vc m\cdot\vc u_2^{(i)}\neq 0$, \eqref{traceZ} implies that $|\hat{\mt R}_1^T(\vc x)\hat{\mt R}_2(\vc x) -\mt 1|  + |t_1(\vc x)-\bar{t}_1|+ |t_2(\vc x)-\bar{t}_2|\leq \delta^*$ for a.e. $\vc x\in\{\hat{\vc x}\in\omega\colon\hat{\vc x}\cdot \vc m = 0\}$. 
By the fact that $\mt F_1,\mt F_2$ form a plastic junction which is locally rigid together with \eqref{ZZZ} and \eqref{rank1forZ}, it must hold $\mt R_1^T\mt R_2=\mt 1,$ $t_1=\bar{t}_1,$ $t_2=\bar{t}_2$. Therefore we deduce that there exists a measurable function $\mt R_0\colon \{\hat{\vc x}\in\omega\colon\hat{\vc x}\cdot \vc m = 0\}\to SO(3)$ such that $\nabla\vc z^{(i)}=\mt R_0(\vc x)\mt F_i$ a.e. on $\{\hat{\vc x}\in\omega\colon\hat{\vc x}\cdot \vc m = 0\}$ and for $i=1,2$. By exploiting once more \eqref{ZZZ} and \eqref{traceZ}, we deduce that $\nabla\vc z^{(i)}=\mt R_0(\vc x)\mt F_i$ a.e. in $\Omega_i.$
%
%Now, given any $\vc x\in\Omega_{i},$ $i=1,2$, there exists $\vc x_0\in\{\hat{\vc x}\in\omega\colon\hat{\vc x}\cdot \vc m = 0\}$ and $s_0\in\R$ such that $\vc x = \vc x_0 + s_0\mt V_i^{-2}\vcg\psi$.
%Again, since $\vc z^{(i)}$ are plain strains and {$\mt V_i^{-1}\vc m\cdot\vc u_2^{(i)}\neq 0$}, \eqref{traceZ} holds. 
But a result by Reshetnyak (see e.g., \cite{Res,BallJames1}) implies that $\mt R_0$ must be constant, concluding the proof of the claim.
%
%{\color{green}Since our configuration is stable ??by Proposition \ref{PropMeet} $t_1,t_2$ are constant anf given by \eqref{cond s1 a} on the set $\{\vc x\cdot \vc m = 0\}$.} 

As a consequence, since $\tilde{\vc z}^{(i)}$ is a plain strain and linear, ${\vc z}^{(i)}$ must be linear in $\Omega_i$, with $i=1,2$, and of the form \eqref{ZZZ} with $\hat{\mt R}_1 = {\mt R}_0{\mt R}_1$, $\hat{\mt R}_2 = {\mt R}_0{\mt R}_2$ for some ${\mt R}_0\in SO(3)$. We remark that the energy of $\vcg\rho$ in $\Omega_1\cup\Omega_2$ is independent of $\mt R_0$. This, together with the fact that the energy density $W$ is non-negative imply \eqref{cheIneq0}. We remark that, every time we exploit \eqref{traceZ} we implicitly rely on the fact that, for any $\vc x\in\Omega_i$, there exists $\vc x_0\in \{\hat{\vc x}: \hat{\vc x}\cdot\vc m=0\}$ and $r_0\in \R$ such that $\vc x = \vc x_0 + r_0\mt V_i^{-1}\vc u_2^{(i)}$ and that $\Omega_i$ is convex.

%Suppose now that $\vcg\rho = \vc y$ on $\gamma_1\cup\gamma_2$ and consider once more $i=1,2$.
%Then, since $\eps\leq \delta_3$ either there exists $\hat{\Omega}_{2+i}\subset\Omega_{2+i}$ of positive measure where $\mt F^p\neq 0$ and hence the plastic energy in this set is strictly bigger than zero, or $\nabla\vcg \rho = \mt R_i^w\mt V_i$ for some measurable map $\mt R_i^w\colon\Omega_{2+i}\to SO(3)$. But another application of the rigidity result by Reshetnyak (see e.g., \cite{Res,BallJames1}) implies that $\mt R_i^w$ is constant, while the boundary conditions imply that $\mt R_i^w = \mt R_i$. We now rule out this possibility, thus leaving as only option the exstence of $\hat{\Omega}_{2+i}\subset\Omega_{2+i}$ of positive measure where $\mt F^p\neq 0$, which would imply that \eqref{cheIneq} must hold. Indeed, if $\vcg\rho=\vc y$ on $\Omega_3\cup\Omega_4$, then the Hadamard jump condition implies
%\beq
%\label{rango1finale}
%%\rank\bigl(
%\mt R_i\mt V_i - \hat{\mt R}_0\mt R_i\mt V_i(\mt 1 + t_i\vcg\phi_i\otimes\vcg\psi) %\bigr)\leq 1
%=\hat{\vc b}\otimes \vcg\psi_i
%,\qquad i=1,2, 
%\eeq
%for some $\hat{\vc b}\in\R^3$. Following \cite[Prop. 4]{BallJames1}, this is possible if and only if $\hat{\mt R}_0=\mt 1$, and thus $\vcg \rho = \vc y$, contradicting \eqref{DD} and leading to \eqref{cheIneq}.}
%
%\ttr{Finally, let us replace hypothesis \eqref{HH5} with {(4')} in Definition \ref{DefVII}. Assume also 
Assume now that \eqref{cheIneqPari} holds. This, together with the fact that $\|\nabla\vcg \rho-\nabla\vc y\|_{L^\infty_{loc}}<\eps$, the shape of $W$ and the rigidity result by Reshetnyak imply 
\begin{align*}
&\nabla\vcg \rho = \mt R_{a}\mt R_1 \mt V_1,\qquad &&\text{in $\Omega_3$},\\
&\nabla\vcg \rho = \mt R_{b}\mt R_2 \mt V_2,\qquad &&\text{in $\Omega_4$},\\
&\nabla\vcg \rho = \mt R_{c},\qquad &&\text{in $\omega^c$},
\end{align*}
for some $\mt R_a,\mt R_b,\mt R_c \in SO(3).$ Again, by the Hadamard jump condition applied to $\vcg \rho$ on the planes $\{\vc x\cdot\vcg \psi_i=0\},\{\vc x\cdot\vc n_i=0\}$ and by \cite[Prop. 4]{BallJames1} we have $\mt R_a=\mt R_b = \mt R_c = \mt R_0$, which leads to the claim of the theorem. 
\end{proof}
An interesting consequence of the proof of Theorem \ref{StableThm} is the following rigidity result for plastic junctions:
\begin{theorem}
\label{StableThm2}
Let $\mt R_1\mt V_1,\mt R_2\mt V_2$ be as in Definition \ref{DefPlJ}, and $\bar{\mt F}_1,\bar{\mt F}_2\in\R^{3\times3}$ form a plastic junction at $(\bar{t}_1,\bar{t}_2)$ for $\mt R_1\mt V_1,\mt R_2\mt V_2$ which is locally rigid.  
Assume further \eqref{HH2}--\eqref{HH4} in Definition \ref{DefStabY} and:
\begin{enumerate}
\item[\textit{(4')} \label{HH5b}](Local minimiser) $\vc y\in W^{1,\infty}_{loc}(\Omega_1\cup\Omega_2; \R^3)$ is defined by 
\beq
\label{stabbile}
\vc y(\vc x) = 
\begin{cases}
\mt F_1\vc x,\qquad&\text{if $\vc x\in\Omega_1:=\bigl\{ \hat{\vc x}\in\omega\colon \hat{\vc x}\subset\mt R_{\hat{\vc m}}(\theta) \gamma_1,\,\theta\in(\theta_{\vcg \psi_1},\theta_{\vc m})\; \text{(resp. $(\theta_{\vc m},\theta_{\vcg \psi_1})$) }\bigr\}$,}\\
\mt F_2\vc x,\qquad&\text{if $\vc x\in\Omega_2:=\bigl\{ \hat{\vc x}\in\omega\colon \hat{\vc x}\subset\mt R_{\hat{\vc m}}(\theta) \gamma_1,\,\theta\in(\theta_{\vc m},\theta_{\vcg \psi_2})\; \text{(resp. $(\theta_{\vcg \psi_2},\theta_{\vc m})$) }\bigr\}$,}
\end{cases}
\eeq
\item[\textit{(5)}]\label{HH1} $\bar{\mt F}_1,\bar{\mt F}_2$ enjoy the separation property.
\end{enumerate}
Then, if {$(\mt V_i^{2}\vcg\phi_i\times\vcg\psi_i)\cdot\vc m\neq 0,$} for $i=1,2$, there exists $\eps>0$ such that every $\vcg \rho \in W^{1,\infty}_{loc} (\Omega_1\cup\Omega_2;\R^3)$ satisfying
%\begin{multicols}{2}
\begin{enumerate}[a)]
\item$\int_{(\Omega_1\cup\Omega_2)\cap B_r} W(\nabla\vcg\rho)\,\mathrm{d}\vc x <\infty$ for any open ball $B_r$ centred at $\vc 0$ and of arbitrary radius $r>0$,
\item  $\| \nabla\vcg\rho - \nabla\vc y\|_{L^{\infty}}\leq \eps$,
\item $\vcg\rho$ is $1-1$,
\end{enumerate}
is of the form $\vcg \rho = \mt R\vc y+\vc c$ for some $\mt R\in SO(3), \vc c\in\R^3$.
\end{theorem}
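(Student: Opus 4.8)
The plan is to extract and adapt the rigidity core of the proof of Theorem \ref{StableThm}, since the geometric hypotheses (HH2)--(HH4) of Definition \ref{DefVII}, the separation property, and the characteristic condition $(\mt V_i^{2}\vcg\phi_i\times\vcg\psi_i)\cdot\vc m\neq 0$ are precisely those used there to conclude that $\nabla\vcg\rho$ equals a single fixed rotation of $\bar{\mt F}_i$ on each wedge. First I would fix $\eps$ small, combining the separation thresholds of $\bar{\mt F}_1,\bar{\mt F}_2$, the minimal gap $\delta_3$ between distinct energy wells, and the threshold controlling the continuity-plus-local-rigidity step below. By the finite-energy hypothesis (a), the structure of $W$, and the separation property (5), condition (b) forces $\nabla\vcg\rho$ to stay in a single admissible branch on each $\Omega_i$: there are locally Lipschitz maps $\vc z^{(i)}$ with
\[
\nabla\vc z^{(i)}(\vc x)=\hat{\mt R}_i(\vc x)\,\mt V_i\bigl(\mt 1+t_i(\vc x)\,\vcg\phi_i\otimes\vcg\psi_i\bigr),
\]
for measurable $\hat{\mt R}_i\colon\Omega_i\to SO(3)$ and $t_i\colon\Omega_i\to\R$.

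Next I would pass to plane strains exactly as in Theorem \ref{StableThm}: setting $\tilde{\vc z}^{(i)}(\vc x):=\vc z^{(i)}(\mt V_i^{-1}\vc x)$, one checks $\det\nabla\tilde{\vc z}^{(i)}=1$ and that $(\nabla\tilde{\vc z}^{(i)})^T\nabla\tilde{\vc z}^{(i)}$ has the simple-shear form, so \cite[Thm. 3.1]{BallJamesPlane} gives that $\tilde{\vc z}^{(i)}$ is a plane strain with invariant direction $\vc u_2^{(i)}$ (in the notation of the proof of Theorem \ref{StableThm}). The hypothesis $(\mt V_i^{2}\vcg\phi_i\times\vcg\psi_i)\cdot\vc m\neq 0$, which is equivalent to $\vc u_2^{(i)}\cdot\mt V_i^{-1}\vc m\neq 0$, guarantees that $\nabla\tilde{\vc z}^{(i)}$ is constant along the characteristic lines in direction $\vc u_2^{(i)}$, and hence that $\nabla\vc z^{(i)}$ has a well-defined $L^\infty$ trace on the plastic-junction plane $\{\vc x\cdot\vc m=0\}$.

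Then, using the continuity of $\vcg\rho$ and the weak Hadamard jump condition \cite[Remark 10]{FDP2}, I would obtain the rank-one relation $\nabla\vc z^{(1)}-\nabla\vc z^{(2)}=\hat{\vc b}(\vc x)\otimes\vc m$ a.e. on $\{\vc x\in\omega\colon\vc x\cdot\vc m=0\}$. For $\eps$ small the local rigidity of the plastic junction (Definition \ref{DefPlJ}) then forces $\hat{\mt R}_1^T\hat{\mt R}_2=\mt 1$, $t_1=\bar t_1$, $t_2=\bar t_2$ on that plane, so that $\nabla\vc z^{(i)}=\mt R_0(\vc x)\bar{\mt F}_i$ there for a common measurable $\mt R_0$; propagating back along the characteristics gives this identity a.e. in $\Omega_i$. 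Reshetnyak's theorem (see e.g. \cite{Res,BallJames1}) then forces $\mt R_0$ to be a single constant rotation, whence $\nabla\vcg\rho=\mt R_0\bar{\mt F}_i$ in $\Omega_i$ and, by continuity, $\vcg\rho=\mt R_0\vc y+\vc c$ for some $\vc c\in\R^3$.

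Unlike in Theorem \ref{StableThm}, no energy comparison of type (T1)--(T2) is needed and the outer regions $\Omega_3,\Omega_4,\omega^c$ never enter, since here the domain is only $\Omega_1\cup\Omega_2$; this is exactly why the conclusion is the stronger statement that $\vcg\rho$ is a rigid motion of $\vc y$ rather than merely a minimiser. The main obstacle I anticipate is justifying the characteristic-line propagation on the wedge: one must ensure that every $\vc x\in\Omega_i$ is joined to the junction plane $\{\vc x\cdot\vc m=0\}$ by a segment in direction $\mt V_i^{-1}\vc u_2^{(i)}$ lying inside $\Omega_i$, which follows from the convexity of $\Omega_i$ together with the geometric constraints in (HH4) --- precisely the point flagged in the remark following Theorem \ref{StableThm}.
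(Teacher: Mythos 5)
Your proposal is correct and follows essentially the same route as the paper: the paper gives no separate argument for Theorem \ref{StableThm2}, stating explicitly that it relies on the same proof as Theorem \ref{StableThm}, and your reconstruction (separation property $\Rightarrow$ single-branch structure of $\nabla\vcg\rho$, plane-strain rigidity from \cite{BallJamesPlane}, characteristic propagation to the junction plane, weak Hadamard jump condition, local rigidity forcing $t_i=\bar t_i$ and a common rotation, then Reshetnyak) is exactly that argument. You also correctly identify the two points the paper itself flags: that the energy comparison (T1)--(T2) and the regions $\Omega_3,\Omega_4,\omega^c$ play no role here, and that the characteristic-line propagation rests on the convexity of the wedges $\Omega_i$ and the condition $(\mt V_i^{2}\vcg\phi_i\times\vcg\psi_i)\cdot\vc m\neq 0$.
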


\section{$V_{II}$ junctions in \Tn}
\label{Appl}
{In this section we study the presence of $V_{II}$ junctions in cubic to orthorhombic transformations when the stretch tensors have both the middle eigenvalue and the determinant equal to one. This is done under the additional hypothesis that a parameter $\lambda$ of the stretch tensors representing the lattice deformations lies in the physically relevant interval $\lambda\in(1,\sqrt2)$. A similar argument could be applied to study the case when $\lambda<1$. As explained below, this situation is a good approximation of the martensitic transformation in \Tn\,and similar materials. We prove that the existing $V_{II}$ junctions are locally stable in the case where the energy has all the wells, that is where the elastic energy is null on $\bigcup_{i=1}^6 SO(3)\mt U_i$, where $\mt U_i$ are the six matrices transforming a cubic lattice into an orthorhombic one, and where we consider all possible slip systems for body centred cubic austenite. However, the generality of the results leads to many long computations and, for this reason, {in this section some of the hypotheses of Theorem \ref{StableThm} are verified numerically or with the aid of a plot.} At the end of the section we compare the results obtained with experimental results.}\\

The transformation in \Tn\, is from a cubic to an orthorhombic lattice, and therefore the stretch tensors $\mt U_i$ describing the change of lattice vectors are given by
{\footnotesize
\beq
\label{cubictoortho}
\begin{split}
{\mt U}_1 = \left[\begin{array}{ ccc } d & 0 & 0 \\ 0 & \frac{1+\lambda}2 & \frac{\lambda-1}2 \\ 0 & \frac{\lambda-1}2 & \frac{1+\lambda}2 \end{array}\right],\quad
{\mt U}_2 = \left[\begin{array}{ ccc } d & 0 & 0 \\ 0 & \frac{1+\lambda}2 &- \frac{\lambda-1}2 \\ 0 & -\frac{\lambda-1}2 & \frac{1+\lambda}2 \end{array}\right],\quad
{\mt U}_3 = \left[\begin{array}{ ccc } \frac{1+\lambda}2 & 0 & \frac{\lambda-1}2 \\ 0 & d & 0 \\ \frac{\lambda-1}2 & 0 & \frac{1+\lambda}2 \end{array}\right],\\
{\mt U}_4 = \left[\begin{array}{ ccc } \frac{1+\lambda}2 & 0 & -\frac{\lambda-1}2 \\ 0 & d & 0 \\ -\frac{\lambda-1}2 & 0 & \frac{1+\lambda}2 \end{array}\right],\quad
{\mt U}_5 = \left[\begin{array}{ ccc } \frac{1+\lambda}2 & \frac{\lambda-1}2 & 0 \\ \frac{\lambda-1}2 & \frac{1+\lambda}2 & 0 \\ 0 & 0 & d \end{array}\right],\quad
\tilde{\mt U}_6 = \left[\begin{array}{ ccc } \frac{1+\lambda}2 & -\frac{\lambda-1}2 & 0 \\ -\frac{\lambda-1}2 & \frac{1+\lambda}2 & 0 \\ 0 & 0 & d \end{array}\right].
%\end{align*}
\end{split}
\eeq
}
Since in \Tn\, the middle eigenvalue of the transformation matrices $\lambda_2$ is such that (see \cite{Inamura}) $|\lambda_2-1|<4\cdot 10^{-6}$ we implicitly assumed it to be equal to one in \eqref{cubictoortho}. Therefore, the eigenvalues of the $\mt U_i$'s are $d,1,\lambda$, and, coherently with the lattice deformation in \Tn, we assume also that $0<d<1<\lambda.$ A similar analysis could be worked out in the case where $d>1>\lambda>0.$ Under these assumptions, \cite[Prop. 4]{BallJames1} guarantees for every $i=1,\dots,6$ the existence of two couples of vectors $(\vc a_i^-,\vc n_i^-)$ and $(\vc a_i^+,\vc n_i^+)$ such that
$$
\mt R_i^+\mt U_i = \mt 1 + \vc a_i^+\otimes\vc n_i^+,\qquad \mt R_i^-\mt U_i = \mt 1 + \vc a_i^-\otimes\vc n_i^-,
$$
for some $\mt R_i^+,\mt R_i^-\in SO(3).$ The different $\vc a_i^\pm,\vc n_i^\pm$ depending on $\lambda,d$ are given by:
\begin{multicols}{2}
\begin{align*}
&\vc a^+_1 = \alpha(-\gamma,1,1),\quad &&\vc n^+_1 = (\beta,1,1),\\ 
&\vc a^-_1 = \alpha(\gamma,1,1),\quad &&\vc n^-_1 = (-\beta,1,1),\\
&\vc a^+_2 = \alpha(-\gamma,-1,1),\quad &&\vc n^+_2 = (\beta,-1,1),\\ 
& \vc a^-_2 = \alpha(\gamma,-1,1),\quad && \vc n^-_2 = (-\beta,-1,1),\\
&\vc a^+_3 = \alpha(1,-\gamma,1),\quad &&\vc n^+_3 = (1,\beta,1),\\
& \vc a^-_3 = \alpha(1,\gamma,1),\quad &&\vc n^-_3 = (1,-\beta,1),
\end{align*}

\begin{align*}
&\vc a^+_4 = \alpha(-1,-\gamma,1),\quad &&\vc n^+_4 = (-1,\beta,1),\\
& \vc a^-_4 = \alpha(-1,\gamma,1),\quad &&\vc n^-_4 = (-1,-\beta,1),\\
&\vc a^+_5 = \alpha(1,1,-\gamma),\quad &&\vc n^+_5 = (1,1,\beta),\\
& \vc a^-_5 = \alpha(1,1,\gamma),\quad &&\vc n^-_5 = (1,1,-\beta),\\
&\vc a^+_6 = \alpha(-1,1,-\gamma),\quad &&\vc n^+_6 = (-1,1,\beta),\\
& \vc a^-_6 = \alpha(-1,1,\gamma),\quad &&\vc n^-_6 = (-1,1,-\beta),
\end{align*}
\end{multicols}
where 
$$
\alpha = \frac{d(\lambda^2-1)}{2(d+\lambda)}, \qquad \beta = -\frac{\sqrt{2(1 - d^2)}}{\sqrt{\lambda^2 - 1}} ,\qquad \gamma = -\frac{\lambda}{d}\frac{\sqrt{2(1 - d^2)}}{\sqrt{\lambda^2 - 1}}.
$$
As explained in the introduction, in experiments for \Tn\,\cite{Inamura} one observes the nucleation of different plates of martensite $\mt F_i$ with $\mt F_i = \mt 1+\vc a_i^{\sigma_i}\otimes \vc n_i^{\sigma_i}$%and $\mt F_j = \mt 1+\vc a_j^{\sigma_j}\otimes \vc n_j^{\sigma_j}$
, where $\sigma_i\in\{+,-\}$ and $i\in\{1,\dots,6\}$, which expand until they encounter another plate of martensite $\mt F_j$ with similar properties. The nucleation is occurring at the interior of the domain, 
that is, an island of martensite with deformation gradient $\mt F_i$ grows in the middle of an austenite domain with deformation gradient $\mt 1$. Therefore, the deformation in the martensite region must be close to volume preserving, i.e., in a first aproximation $\det \mt F_i = \det \mt U_i=1$, and hence, $d=\lambda^{-1}.$ 
In reality, for \Tn\,some elasto-plastic effects take place to accommodate the nucleation at the interior. However, in order to simplify the analysis below, motivated by the experimental value of $\det \mt U_i$ which is very close to one (the experimental values yield $|\det \mt U_i-1|<1.9\cdot 10^{-3}$ \cite{Inamura}), we assume $d=\lambda^{-1}.$ 
We remark that, the analysis below holds also in the case $d = 0.9661, \lambda = 1.0331$ (the lattice parameters for \Tn) and for every $(d,\lambda)\in (0,1)\times(1,\infty)\setminus\bigcup_{i=1}^N \mathrm{Im}(c_i)$, where $\mathrm{Im}(c_i)$ is the image of $c_i$, and $c_i$ are a finite number $N\in\mathbb{N}$ of polynomial curves $c_i\colon(0,1)\to(1,\infty)$. Furthermore we restrict ourselves to the physically relevant range $\lambda\in(1,\sqrt 2)$. It is worth noticing that when $\lambda=\sqrt{2}$ the cofactor conditions are satisfied, and hence stress free triple junctions are possible (see e.g., \cite{JamesHyst,FDP3}). We now want to find plastic junctions as in Definition \ref{DefPlJ} and where $\mt R_1\mt V_1= \mt 1 +\vc a_1^+\otimes \vc n_1^+$ and $\mt R_2\mt V_2$ is of the form (cf. Remark \ref{CompatCon1})
\beq
\label{rango1Plate}
\mt 1 +\vc a_i^{\sigma_i}\otimes \vc n_i^{\sigma_i}
\eeq 
for $\sigma_i\in\{+,-\}$ and some $i\in\{1,\dots,6\}$. The case where $\mt R_1\mt V_1$ has the form \eqref{rango1Plate} but $(i,\sigma_i)\neq (1,+)$ can be treated similarly, or simply deduced from our case by symmetry. We remark that, under our assumptions,
$$
\vc a_i^{\sigma_i} \times \vc a_j^{\sigma_j}\neq\vc 0,\qquad \vc n_i^{\sigma_i} \times \vc n_j^{\sigma_j}\neq\vc 0,
$$
for any $(i,\sigma_i) \neq (j,\sigma_j) \in \{1,\dots,6\}\times\{+,-\}$. As a consequence $\rank\bigl(\mt R_1\mt V_1-\mt R_2\mt V_2\bigr) = 2$. 

We are now ready to state Theorem \ref{TnThm} which investigates the possibility to form plastic junctions and  $V_{II}$ junctions in a one-parameter family of deformation gradients, and in particular in \Tn. The stability of the existing $V_{II}$ junctions is also proved by verifying the hypotheses of Theorem \ref{StableThm}. The results are compared with experimental results in Section \ref{Comparison with Experiments}. The theorem reads as follows:
%The aim of this Section is to prove the following result:
\begin{theorem}
\label{TnThm}
Let $\lambda\in(1,\sqrt2)$. Let $\mathcal{M}=\bigcup_{i=1}^6 \mt U_i$ and $\mathcal{S}$ be the set of all possible simple slips for body centred cubic lattices. 
Let us also define 
$$
\eta_1 =  \frac{2\lambda^4 + 5\sqrt2 \lambda^3+4\lambda^2-5\sqrt2\lambda-6}{2(2\lambda^4 + 5\sqrt2 \lambda^3 - 4\lambda^2 + 3\sqrt2\lambda+2)},\qquad 
 \eta_2 = \frac{2\lambda^4 + \sqrt2\lambda^3 - 4\lambda^2-\sqrt2\lambda+2 }{2(2\lambda^4 + 5\sqrt2\lambda^3 - 4\lambda^2 + 3\sqrt2\lambda+2)};
$$
and
$$
\xi_1= -\frac{2\lambda^4 - 5\sqrt2\lambda^3 + 4\lambda^2 + 5\sqrt2\lambda-6}{2(2\lambda^4 - 5\sqrt2\lambda^3 - 4\lambda^2 - 3\sqrt2\lambda + 2)},\qquad 
\xi_2 = \frac{2\lambda^4 - \sqrt2\lambda^3 - 4\lambda^2 + \sqrt2\lambda + 2 }{2(2\lambda^4 - 5\sqrt2\lambda^3 - 4\lambda^2 - 3\sqrt2\lambda + 2)};
$$
Then, there exist a plastic junction (in the sense of Definition \ref{DefPlJ}) for $\mt 1 +\vc a_1^+\otimes\vc n_1^+$ and $\mt 1 +\vc a_i^{\sigma_i}\otimes\vc n_i^{\sigma_i}$ with $i\in\{2,\dots,6\}$, $\sigma_i\in\{+,-\}$ if and only if
\begin{enumerate}[(a)]
\item \label{Ca}$(i,\sigma_i)=(3,+)$, $\vcg\psi_1=\vcg\psi_2 = (-1,1,0)$ and
\begin{align*}
\vcg\phi_1 = -(1,1,1), \, \vcg \phi_2 = (1,1,-1),\, (\bar t_1,\bar t_2) = (\eta_1,\eta_2),\quad \text{ or }\\
\vcg\phi_1 = (1,1,-1), \, \vcg \phi_2 = -(1,1,1),\, (\bar t_1,\bar t_2) = (-\eta_2,-\eta_1);
\end{align*}
\item \label{Cb}$(i,\sigma_i)=(4,-)$, $\vcg\psi_1=\vcg\psi_2 = (1,1,0)$ and
\begin{align*}
\vcg\phi_1 = (-1,1,1), \, \vcg \phi_2 = (-1,1,-1),\, (\bar t_1,\bar t_2) = (\xi_1,\xi_2),\quad \text{ or }\\
\vcg\phi_1 = (-1,1,-1), \, \vcg \phi_2 = (-1,1,1),\, (\bar t_1,\bar t_2) = (-\xi_2,-\xi_1);
\end{align*}
\item \label{Cc}$(i,\sigma_i)=(5,+)$, $\vcg\psi_1=\vcg\psi_2 = (-1,0,1)$ and
\begin{align*}
\vcg\phi_1 = -(1,1,1), \, \vcg \phi_2 = (1,-1,1),\, (\bar t_1,\bar t_2) = (\eta_1,\eta_2),\quad \text{ or }\\
\vcg\phi_1 = (1,-1,1), \, \vcg \phi_2 = -(1,1,1),\, (\bar t_1,\bar t_2) = (-\eta_2,-\eta_1);
\end{align*}
\item \label{Cd}$(i,\sigma_i)=(6,-)$, $\vcg\psi_1=\vcg\psi_2 = (1,0,1)$ and
\begin{align*}
\vcg\phi_1 = (-1,1,1), \, \vcg \phi_2 = (-1,-1,1),\, (\bar t_1,\bar t_2) = (\xi_1,\xi_2),\quad \text{ or }\\
\vcg\phi_1 = (-1,-1,1), \, \vcg \phi_2 = (-1,1,1),\, (\bar t_1,\bar t_2) = (-\xi_2,-\xi_1);
\end{align*}
\end{enumerate}
All these plastic junctions can form locally stable $V_{II}$ junctions in the sense of Definition \ref{DefStabY}.
There exists no $V_{II}$ junction (in the sense of Definition \ref{DefVII}) between $\mt 1 +\vc a_1^+\otimes\vc n_1^+$ and $\mt 1 +\vc a_1^{-}\otimes\vc n_1^{-}$.
\end{theorem}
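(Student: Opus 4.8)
The plan is to reduce the entire statement to a finite battery of explicit algebraic computations, feeding them into the abstract results of Sections \ref{ShearComp}--\ref{Minim}. Throughout I would write $\mt F_1 = \mt 1 + \vc a_1^+\otimes\vc n_1^+$ and, for the second plate, $\mt F_2 = \mt 1 + \vc a_i^{\sigma_i}\otimes\vc n_i^{\sigma_i}$, and impose $d=\lambda^{-1}$, so that every vector listed above becomes an explicit algebraic function of the single parameter $\lambda$. Since the remark preceding the theorem gives $\rank(\mt F_1-\mt F_2)=2$, the existence of a plastic junction is, by Definition \ref{DefPlJ}, exactly the solvability of \eqref{rango1eqdasolve} for some admissible pair $\vcg\phi_1\otimes\vcg\psi_1,\vcg\phi_2\otimes\vcg\psi_2\in\mathcal{S}$; because $\mathcal{S}$ is the finite set of $\{110\}$, $\{112\}$ and $\{123\}$ systems of Section \ref{Nonlin}, this is a finite search over slip systems crossed with the ten choices $(i,\sigma_i)$, $i\in\{2,\dots,6\}$.

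For the ``if and only if'' I would first apply Lemma \ref{NecessityCond} with $\vc a_1=\vc a_1^+$, $\vc a_2=\vc a_i^{\sigma_i}$, $\vc n_1=\vc n_1^+$, $\vc n_2=\vc n_i^{\sigma_i}$ and $\hat{\vcg\phi}_j=\mt F_j\vcg\phi_j$. Computing $\vc a_1^+\times\vc a_i^{\sigma_i}$ and $\vc n_1^+\times\vc n_i^{\sigma_i}$ explicitly, the four alternatives of the lemma eliminate all but a short list of candidate slip systems for each pair. For the surviving candidates the claimed slip planes satisfy $\vcg\psi_1,\vcg\psi_2\in\mathrm{span}(\vc n_1,\vc n_2)$ --- for instance in case \eqref{Ca} one has $(-1,1,0)\parallel\vc n_1^+-\vc n_3^+$ --- so the first case of Proposition \ref{PropMeet} applies and \eqref{cond s1 a} determines $(s_1,s_2)$ uniquely. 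Substituting the explicit vectors and simplifying the resulting rational functions of $\lambda$ should yield exactly $(\bar t_1,\bar t_2)=(\eta_1,\eta_2)$, $(\xi_1,\xi_2)$, etc.; every other admissible pair must be shown either to give no solution of \eqref{cond s1 a} or to force $\vcg\phi\otimes\vcg\psi\notin\mathcal{S}$, which closes the ``only if'' direction away from the finitely many exceptional curves $c_i$ (on which extra coincidences could occur).

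To promote each of the four plastic junctions to a locally stable $V_{II}$ junction I would verify, in order, the hypotheses of Proposition \ref{LocalRigidProp}, Definition \ref{DefVII} and Theorem \ref{StableThm}. For local rigidity I compute $\cof(\mt R_1\mt V_1-\mt R_2\mt V_2)=\hat{\vc b}\otimes\hat{\vc m}$, check $\hat{\vc m}\perp\{\vc m,\vcg\psi_1,\vcg\psi_2\}$, and confirm the non-degeneracy \eqref{ipotesiR}, which reduces to showing a single scalar polynomial in $\lambda$ is nonzero on $(1,\sqrt2)$. For Definition \ref{DefVII}\eqref{HH4} I check the coplanarity $\vc n_1,\vc n_2,\vcg\psi_1,\vcg\psi_2,\vc m\perp\hat{\vc m}$ and the angular ordering $|\theta_{\vcg\psi_1}|<|\theta_{\vc m}|<|\theta_{\vcg\psi_2}|<|\theta_{\vc n_2}|$; since all these vectors lie in a fixed plane orthogonal to $\hat{\vc m}$, the ordering follows from comparing four arctangents, most conveniently read off a plot in $\lambda$. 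Finally, for Theorem \ref{StableThm} I verify the separation property of $\bar{\mt F}_1,\bar{\mt F}_2$ (using $\eps_0=\tfrac12\min\|\mt R\mt U-\mt V\|$ exactly as in its proof) together with the transversality condition $(\mt V_i^{2}\vcg\phi_i\times\vcg\psi_i)\cdot\vc m\neq0$, again a scalar check.

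The last assertion, non-existence of a $V_{II}$ junction between $\mt 1+\vc a_1^+\otimes\vc n_1^+$ and $\mt 1+\vc a_1^-\otimes\vc n_1^-$, I would treat separately, since here both plates come from the single variant $\mt U_1$ (so $\mt V_1=\mt V_2=\mt U_1$) and fall outside the range $i\in\{2,\dots,6\}$. Rerunning the Lemma \ref{NecessityCond}/Proposition \ref{PropMeet} analysis for this pair, I expect to find that any admissible solution of \eqref{rango1eqdasolve} violates the angular ordering of Definition \ref{DefVII}\eqref{HH4}, so that the wedge structure \eqref{stabbile} cannot be assembled and no $V_{II}$ junction exists. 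The main obstacle throughout is the sheer volume of the ``only if'' case analysis: one must certify, for every slip system paired across the ten choices of $(i,\sigma_i)$, that the rational functions of $\lambda$ produced by \eqref{cond s1 a} either fail to define an admissible slip or coincide with the listed solutions only on the excluded curves $c_i$. Making this exhaustive rather than merely numerical is where the bookkeeping is heaviest, and is precisely why the paper allows some of these checks to be carried out numerically or with the aid of a plot.
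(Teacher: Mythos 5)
Your treatment of the existence of plastic junctions and of their promotion to locally stable $V_{II}$ junctions follows the paper's verification essentially step by step: Lemma \ref{NecessityCond} prunes the slip systems down to the four cases, the first alternative of Proposition \ref{PropMeet} produces the unique shear amounts $(\bar t_1,\bar t_2)$ as rational functions of $\lambda$, Proposition \ref{LocalRigidProp} gives local rigidity via the scalar condition \eqref{ipotesiR}, and the separation property together with the transversality condition $(\mt V_i^{2}\vcg\phi_i\times\vcg\psi_i)\cdot\vc m\neq 0$ feed into Theorem \ref{StableThm}; like the paper, you accept numerical or graphical certification of the scalar non-vanishing conditions. (A small remark: the paper obtains the second option in each of \eqref{Ca}--\eqref{Cd} not by a separate search but by post-multiplying the rank-one equation by $(\mt 1+t_1\vcg\phi_1\otimes\vcg\psi)^{-1}(\mt 1+t_2\vcg\phi_2\otimes\vcg\psi)^{-1}$, which exchanges the roles of the two slips.)

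The genuine gap is in the final non-existence assertion. You \emph{expect} that every admissible solution of \eqref{rango1eqdasolve} for the pair $\mt 1+\vc a_1^+\otimes\vc n_1^+$, $\mt 1+\vc a_1^-\otimes\vc n_1^-$ violates the angular ordering of Definition \ref{DefVII}\eqref{HH4}; that is not the actual obstruction, and the argument would not close along that route. What happens (and what the paper proves) is that plastic junctions between these two plates \emph{do} exist in abundance: after restricting, via Lemma \ref{NecessityCond} and the requirement $\vcg\psi_1,\vcg\psi_2\perp\hat{\vc m}$, to the four slip systems (I)--(IV), one finds isolated junctions in the cases $(III,IV),(IV,III)$ and whole one-parameter families in the cases $(I,I),(II,II),(I,II),(II,I)$ --- the latter arising from the third, degenerate alternative of Proposition \ref{PropMeet}, which your plan does not anticipate. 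Since along these families the junction normal $\vc m$ itself moves continuously with the parameter (cf. \eqref{Mcaso1last}), there is no reason the angular ordering should fail for all of them. The correct mechanism is failure of \emph{local rigidity}, which Definition \ref{DefVII} demands of the underlying plastic junction: for each candidate one can rotate the first plate by a small rotation of angle $\theta$ about $\hat{\vc m}$ and re-solve for shears $t_1^*,t_2^*$ (the solvability of \eqref{PerV}, or the explicit $\theta$-dependent solution in the case $(III,IV)$) while keeping a rank-one connection with the \emph{same} normal $\vc m$, thereby violating \eqref{defRk1mfixed}. Your proposal never checks local rigidity in this part, so as written it would establish nothing once the expected angular-ordering violation fails to materialise.
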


Figure \ref{FigSs} shows the dependence of $\eta_1,\eta_2$ and $\xi_1,\xi_2$ on $\lambda$.
\begin{figure}
\begin{subfigure}{.45\textwidth}
  \centering
  \includegraphics[width=.9\linewidth]{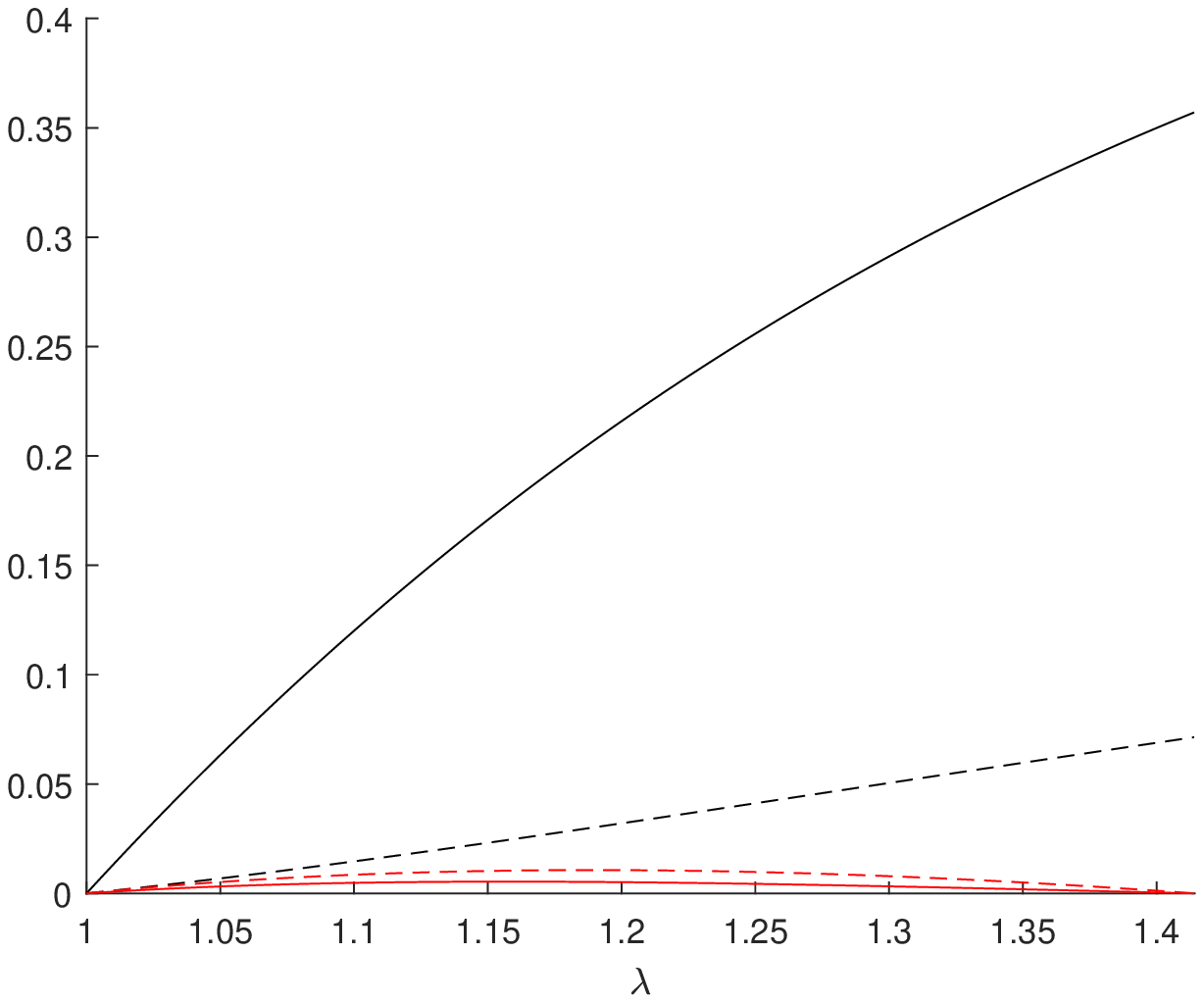}
  \caption{}
  \label{fig99}
\end{subfigure}%
\begin{subfigure}{.45\textwidth}
  \centering
  \includegraphics[width=.9\linewidth]{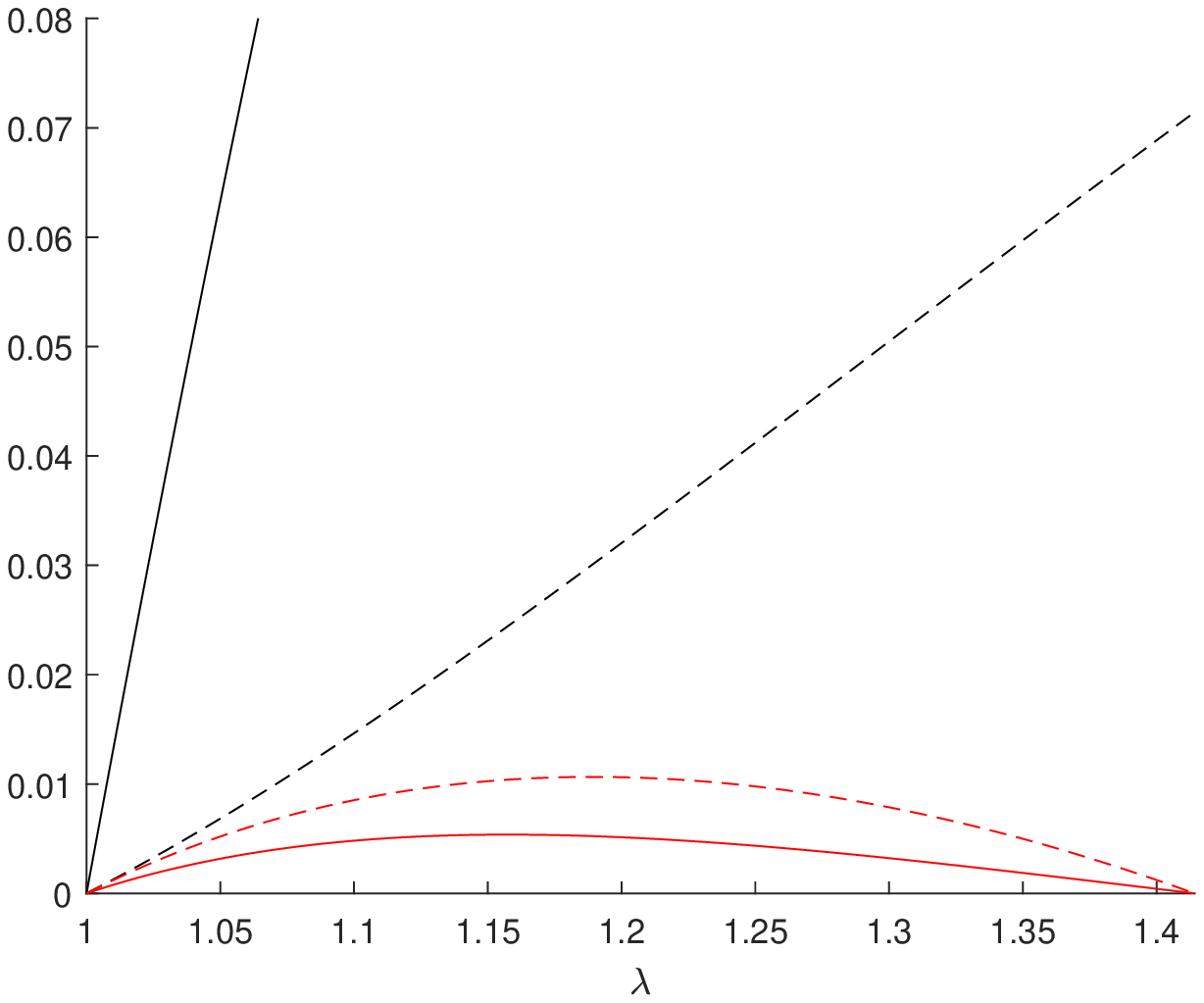}
  \caption{}
  \label{fig1}
\end{subfigure}%
\caption{Plotting the dependence of $\eta_1,\eta_2$ and $\xi_1,\xi_2$ on $\lambda$. In black $\eta_1$ (continuous line) and $\eta_2$ (dashed line). In red $\xi_1$ (continuous line) and $\xi_2$ (dashed line). On the right hand side, the plot is a zoom of the plot on the left.}
\label{FigSs}
\end{figure}
The results in Theorem \ref{TnThm} are compared with experimental observations in Section \ref{Comparison with Experiments}.

\subsection{Verification of Theorem \ref{TnThm}}
The proof investigates first the existence of plastic junctions when $i\in\{2,\dots,6\}$. We then check that this plastic junctions can form a locally stable $V_{II}$ junction. To this aim, we need the verification of the assumptions of Theorem \ref{StableThm}. These are technical and require long and uninteresting computations. Therefore, the verification of some of the assumptions of Theorem \ref{StableThm} is checked numerically or by means of a plot. Finally, we show that no $V_{II}$ junction (according to Definition \ref{DefVII}) exists when $(i,\sigma_i)=(1,-).$ We divide the argument into steps to simplify the presentation. 

\paragraph{Existence of plastic junctions.}
By Lemma \ref{NecessityCond}, and taking in consideration all the slip systems for body centred cubic lattices (see Section \ref{Nonlin}), we can see that the necessary conditions to have plastic junctions for $\mt 1 +\vc a_1^+\otimes\vc n_1^+$ and $\mt 1 +\vc a_i^{\sigma_i}\otimes\vc n_i^{\sigma_i}$ with $i\in\{2,\dots,6\}$, $\sigma_i\in\{+,-\}$ are satisfied by each of the points \eqref{caso2}--\eqref{caso5} below:
\begin{multicols}{2}
\begin{enumerate}[(i)]
\item \label{caso2}$(i,\sigma_i)=(3,+)$ and 
$\vcg \psi = (-1,1,0)$;
\item \label{caso3}$(i,\sigma_i)=(4,-)$ 
and $\vcg \psi = (1,1,0)$;
\item \label{caso4}$(i,\sigma_i)=(5,+)$ 
and $\vcg \psi = (-1,0,1)$;
\item \label{caso5}$(i,\sigma_i)=(6,-)$ 
and $\vcg \psi = (1,0,1)$.%\\[5pt]
\end{enumerate}
\end{multicols}
In all the above cases $\vcg \psi_1=\vcg \psi_2$ and we therefore simplified notation by writing $\vcg \psi.$ We now show that these conditions are sufficient to have plastic junctions. Thanks to Proposition \ref{PropMeet} we can find $t_1,t_2\in\R$ such that
\beq
\label{Rk1eq1s}
\rank\bigl((\mt 1 + \vc a_1^+\otimes\vc n_1^+)(\mt 1 + t_1\vcg \phi_1\otimes\vcg \psi) - (\mt 1 + \vc a_i^{\sigma_i}\otimes\vc n_i^{\sigma_i})(\mt 1 + t_2\vcg \phi_2\otimes\vcg \psi)\bigr) = 1.
\eeq
Here, again, $\vcg\phi_1,\vcg\phi_2$ are the two different Burger's vectors in the plane orthogonal to $\vcg\psi$, among the slip systems for body centred cubic lattices. We recall that, in these cases, for every $\vcg\psi$ there are exactly two (up to sign change) $\vcg \phi$ such that $(\vcg\phi,\vcg\psi)$ is a slip system for body centred cubic lattices. By post-multiplying the above equation by $(\mt 1 + t_1\vcg \phi_1\otimes\vcg \psi)^{-1}(\mt 1 + t_2\vcg \phi_2\otimes\vcg \psi)^{-1}$ we get
\beq
\label{Rk1eq2s}
\rank\bigl((\mt 1 + \vc a_1^+\otimes\vc n_1^+)(\mt 1 - t_2\vcg \phi_2\otimes\vcg \psi) - (\mt 1 + \vc a_i^{\sigma_i}\otimes\vc n_-^{\sigma_i})(\mt 1 - t_1\vcg \phi_1\otimes\vcg \psi)\bigr) = 1.
\eeq
Therefore, if the solution of \eqref{Rk1eq2s} is unique, it can be identified with the unique solution of \eqref{Rk1eq1s}. Some computations conclude the proof of \eqref{Ca}--\eqref{Cd}.

\paragraph{Local rigidity of plastic junctions.}
In order to verify that the constructed plastic junctions are locally rigid (in the sense of Definition \ref{DefPlJ}) we make use of Proposition \ref{LocalRigidProp}. Under our hypotheses, $\cof(\mt R_1\mt V_1-\mt R_2\mt V_2) = (\vc a_1^+\times \vc a_i^{\sigma_i})\otimes (\vc n_1^+\times \vc n_i^{\sigma_i})$, and, in the notation of Proposition \ref{LocalRigidProp}, $\hat{\vc m} = \frac{\vc n_1^+\times\vc n_i^{\sigma_i}}{|\vc n_1^+\times\vc n_i^{\sigma_i}|}$ and $\hat{\vc b} = |\vc n_1^+\times\vc n_i^{\sigma_i}|\,\vc a_1^+\times\vc a_i^{\sigma_i}$. Furthermore, defining
\begin{align*}
M^+_1 &:= - 2\sqrt2 \lambda^5-8\lambda^4+7\sqrt2\lambda^3 + 2\lambda^2 + 3\sqrt2\lambda - 2,\quad M_2^+:= 2\lambda^4 + 7\sqrt2\lambda^3-16\lambda^2+\sqrt2\lambda + 6,\\ 
M_3^+ &:= -2\lambda\bigl(\sqrt2\lambda^4 + 5\lambda^3 - 2\sqrt2\lambda^2 + 3\lambda +\sqrt2 \bigr ),\quad M^-_1:= -\bigl(2\lambda^4 - 7\sqrt2\lambda^3-16\lambda^2-\sqrt2\lambda + 6\bigr),\\
M_2^- &:=  2\sqrt2 \lambda^5-8\lambda^4-7\sqrt2\lambda^3 + 2\lambda^2 - 3\sqrt2\lambda - 2,\quad
M_3^- := 2\lambda\bigl(\sqrt2\lambda^4 - 5\lambda^3 - 2\sqrt2\lambda^2 - 3\lambda +\sqrt2 \bigr ),
\end{align*}
we have that for the first option in the cases \eqref{Ca}--\eqref{Cd} $\vc m$ is respectively parallel to
\beq
\label{glims}
(M_1^+,M_2^+,M_3^+),\quad (M_1^-,M_2^-,M_3^-),\quad (M_1^+,M_3^+,M_2^+),\quad (M_1^-,M_3^-,M_2^-).
\eeq
For the second option in the cases \eqref{Ca}--\eqref{Cd}, $\vc m$ can be deduced by pre-multiplying the vectors in \eqref{glims} by $(\mt 1 + t_2\vcg \phi_2\otimes\vcg \psi)^{-T}(\mt 1 + t_1\vcg \phi_1\otimes\vcg \psi)^{-T}$. We now have all the ingredients to show (see \eqref{ipotesiR})
\beq
\label{funzRigid}
f(\lambda):=\Bigl( {\mt R_1\mt V_1\hat{\vc m}} \times\mt R_1\mt V_1	 \bigl(\vc v + \bar{t}_1\vcg\phi_1 (\vcg\psi\cdot\vc v)\bigr)\Bigr) \cdot \Bigl( \mt R_1\mt V_1 \vcg\phi_1\times\mt R_2\mt V_2 \vcg\phi_2
 \Bigr)\neq 0, \qquad\vc v = \vc m \times \hat{\vc m}.
\eeq
{The easiest way to show this is graphically, by plotting the function $f$ for the cases \eqref{Ca}--\eqref{Cd} in Figure \ref{RigidPict}.
\begin{figure}
\begin{subfigure}{.45\textwidth}
  \centering
  \includegraphics[width=.9\linewidth]{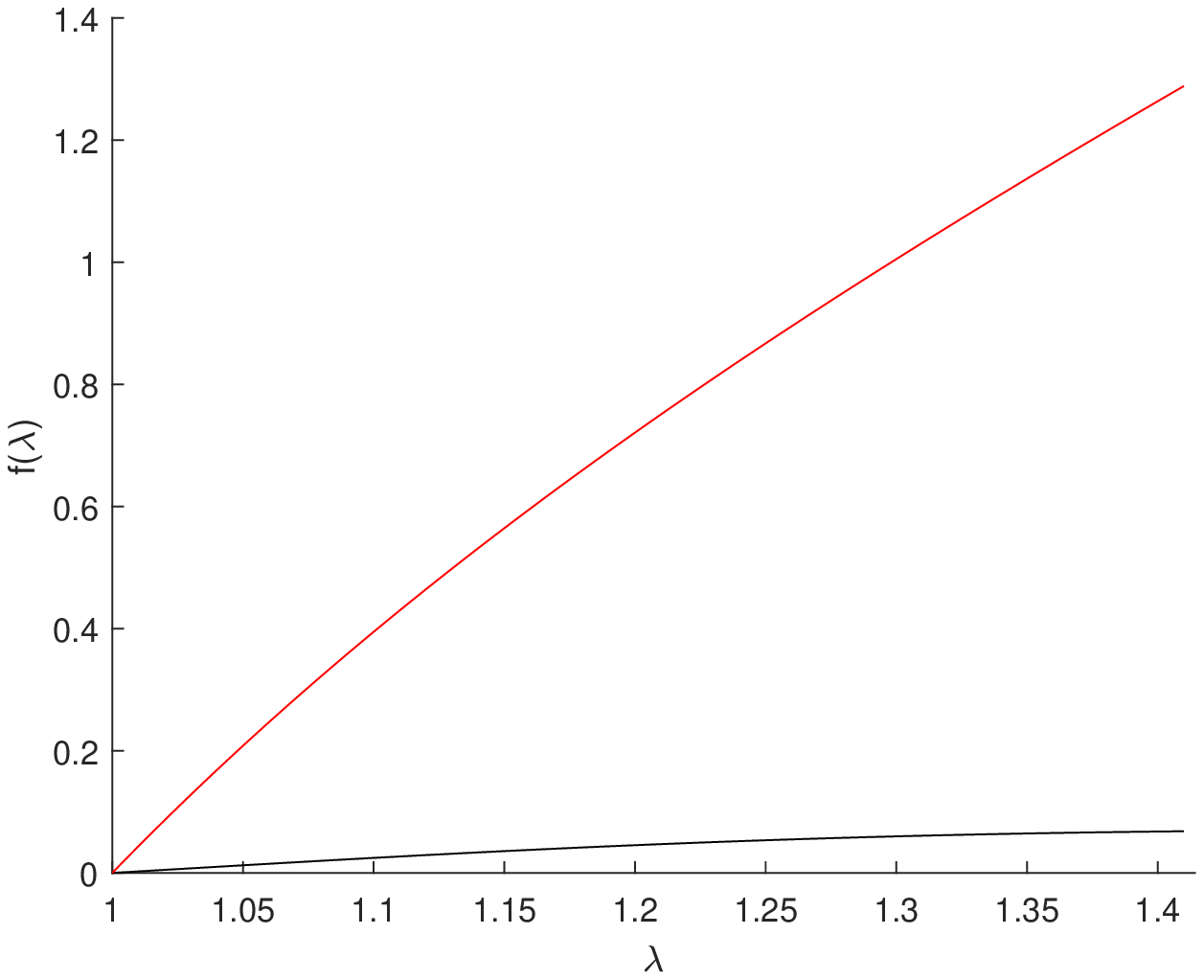}
  \caption{}
  \label{fig99}
\end{subfigure}%
\begin{subfigure}{.45\textwidth}
  \centering
  \includegraphics[width=.9\linewidth]{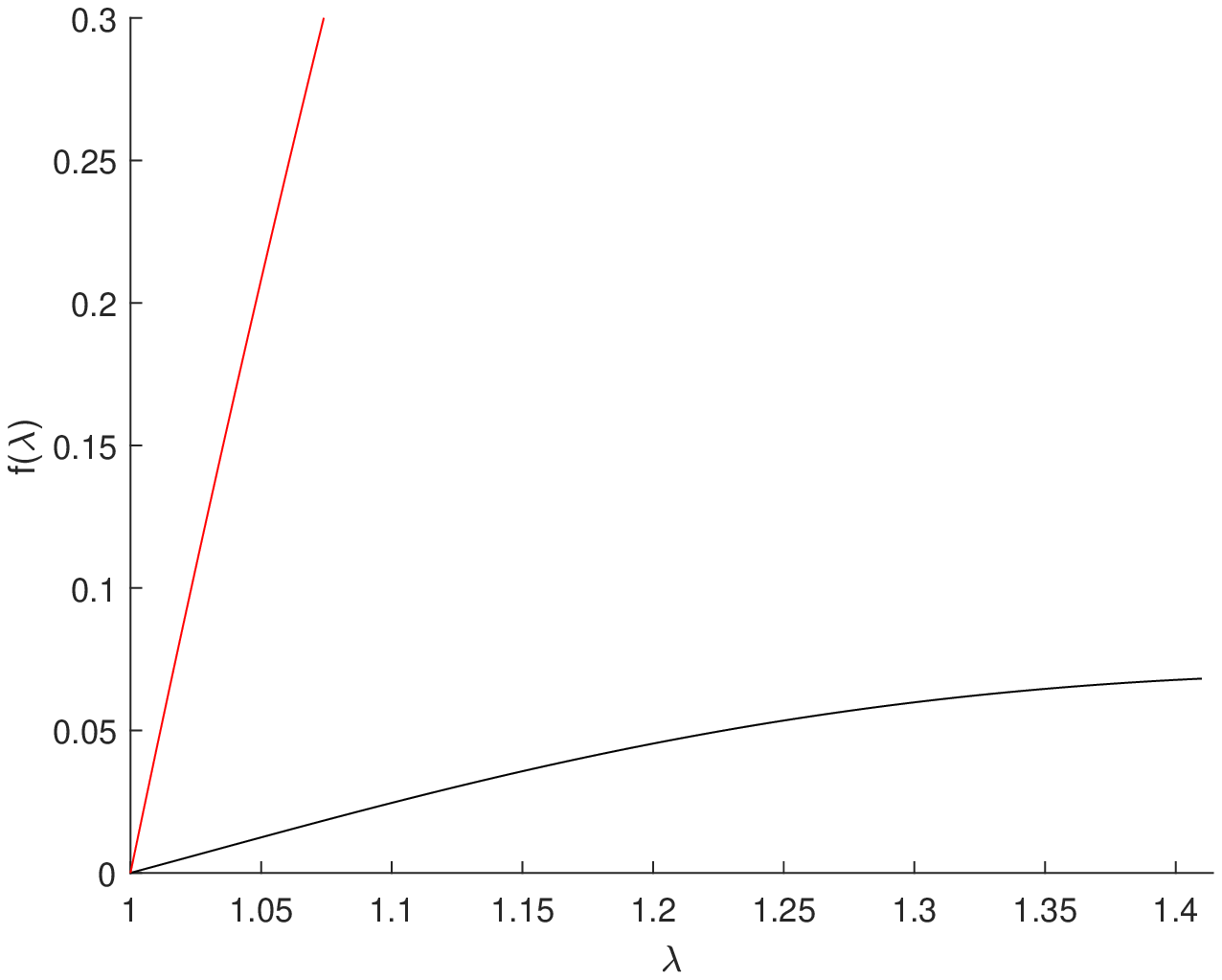}
  \caption{}
  \label{fig1}
\end{subfigure}%
\caption{\label{RigidPict} Plotting $f(\lambda)$ against $\lambda$ where $f$ is as in \eqref{funzRigid}. In black the cases given in \eqref{Ca} and in \eqref{Cc}, while in red the cases given in \eqref{Cb} and in \eqref{Cd}. On the right a zoom of the plot.}
\end{figure}
\paragraph{Separation property.}
Let $\mt F_1 = (\mt 1 + \vc a_1^+\otimes\vc n^+_1)(\mt 1 + \bar t_1\vcg\phi_1\otimes\vcg\psi)$ and $\mt F_2 = (\mt 1 + \vc a_i^{\sigma_i}\otimes\vc n_i^{\sigma_i})(\mt 1 + \bar t_2\vcg\phi_2\otimes\vcg\psi)$, where $(i,\sigma_i),$ $\bar t_1,\bar t_2$ and $\vcg\phi_1,\vcg\phi_2,\vcg\psi$ are as in \eqref{Ca}--\eqref{Cd}. We first claim that for each $\lambda\in(1,\sqrt2)$ there exists $\rho_0>0$ such that
\begin{align}
\label{sepaIneq}
g_1(t):=\bigl|\mt F_1^T\mt F_1 - (\mt 1 + t \vcg \psi_l\otimes\vcg\phi_l)\mt U_j^2(\mt 1 + t \vcg \phi_l\otimes\vcg\psi_l)\bigr|^2\geq \rho_0^2, \\
\label{sepaIneq2}
g_2(t):=\bigl|\mt F_2^T\mt F_2 - (\mt 1 + t\vcg \psi_l\otimes\vcg\phi_l)\mt U_j^2(\mt 1 + t \vcg \phi_l\otimes\vcg\psi_l)\bigr|^2\geq \rho_0^2
\end{align}
for any $t\in\R$, whenever at least one out of 
\begin{align*}
%\label{conditions}
%\begin{split}
\mt U_j\neq \mt U_1 \quad \text{ or }\quad \vcg \phi_1\otimes\vcg\psi \neq \vcg \phi_l\otimes\vcg\psi_l \in \mathcal{S},\qquad&\text{in the case of \eqref{sepaIneq},}\\
\mt U_j\neq \mt U_i \quad \text{ or }\quad \vcg \phi_2\otimes\vcg\psi \neq \vcg \phi_l\otimes\vcg\psi_l \in \mathcal{S},\qquad&\text{in the case of \eqref{sepaIneq2},}
%\end{split}
\end{align*}
holds. The amount of cases to be checked is huge. Indeed, there are four different junctions to be checked, that is case \eqref{Ca}--\eqref{Cd}, each with two subcases. For each of these cases we have to verify two inequalities, namely \eqref{sepaIneq}--\eqref{sepaIneq2}, which must hold for six possible different $j$'s, and for forty-eight possible slip-systems. The total amount of cases to be checked is hence $4\cdot 2\cdot 2 \cdot ( 6 \cdot 48-1) = 4592$. Since we were not able to identify a unique simple algorithm to verify \eqref{sepaIneq}--\eqref{sepaIneq2} in all these cases, we verified it numerically. Indeed, for any $\lambda>0$, any $\mt U_j$, $j=\{1,\dots,6\}$ and $\vcg\phi_l\otimes\vcg\psi_l \in \mathcal{S}$ the functions $g_1,g_2$ are fourth order polynomials  in $t$ which can be minimised numerically. The smooth dependence of $g_1,g_2$ on $\lambda,t$ make the numerical problem well posed. %allows to deduce that if we verify the claim for a large enough (but finite) number of different values of $\lambda\in(1,\sqrt{2})$, then it is true for the whole interval. 
Numerically one observes that the claim is true for any $\lambda\in(1,\sqrt{2})$ (cf. Figure \ref{fig:fig}).%$\min g_1,\min g_2$ are strictly monotone functions of $\lambda$ which are null for $\lambda=1$ and strictly bigger when $\lambda>1.$ 
\begin{figure}
\begin{subfigure}{.5\textwidth}
  \centering
  \includegraphics[width=.99\linewidth]{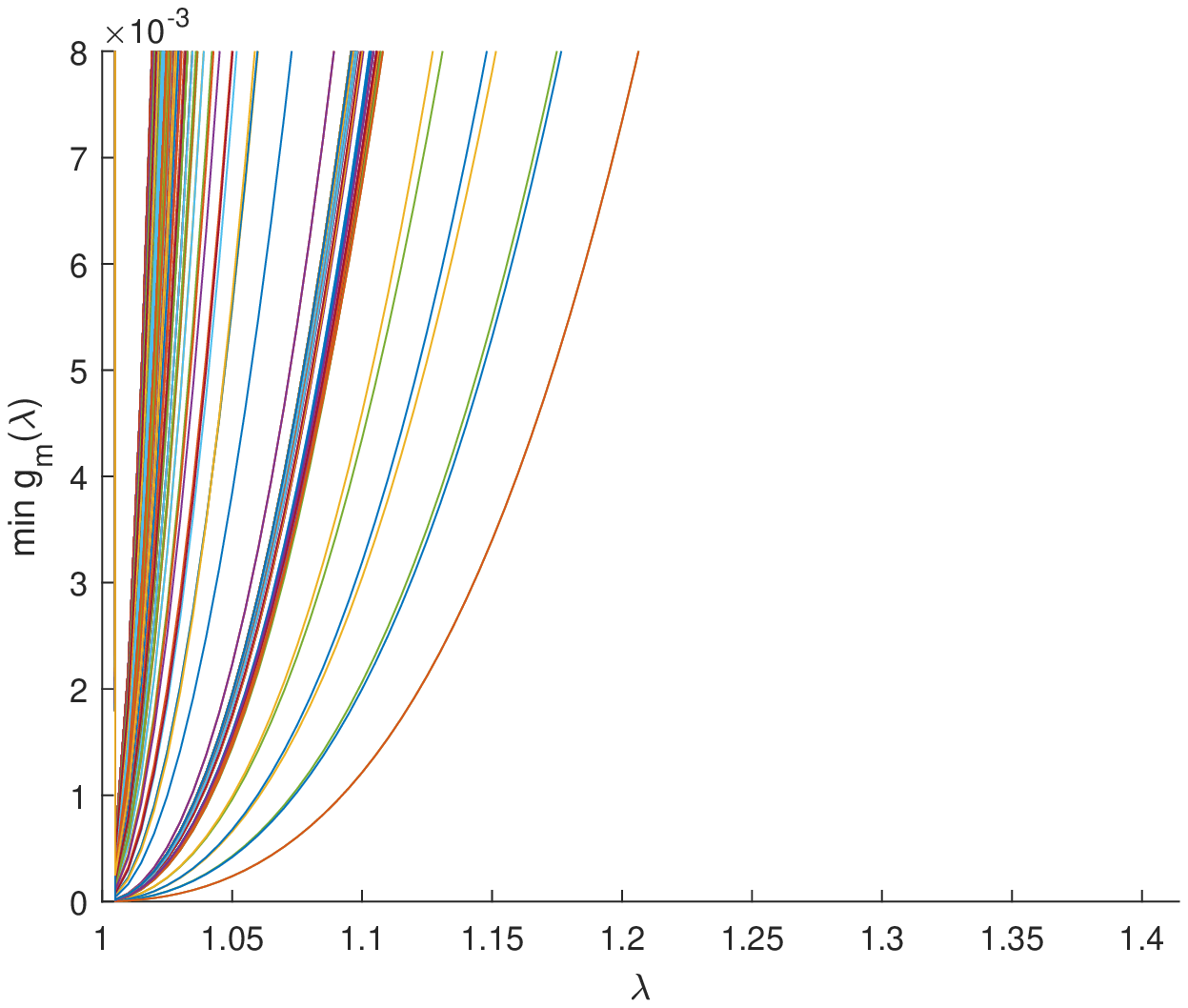}
  \caption {}
  \label{fig0a}
\end{subfigure}%
\begin{subfigure}{.5\textwidth}
  \centering
  \includegraphics[width=.99\linewidth]{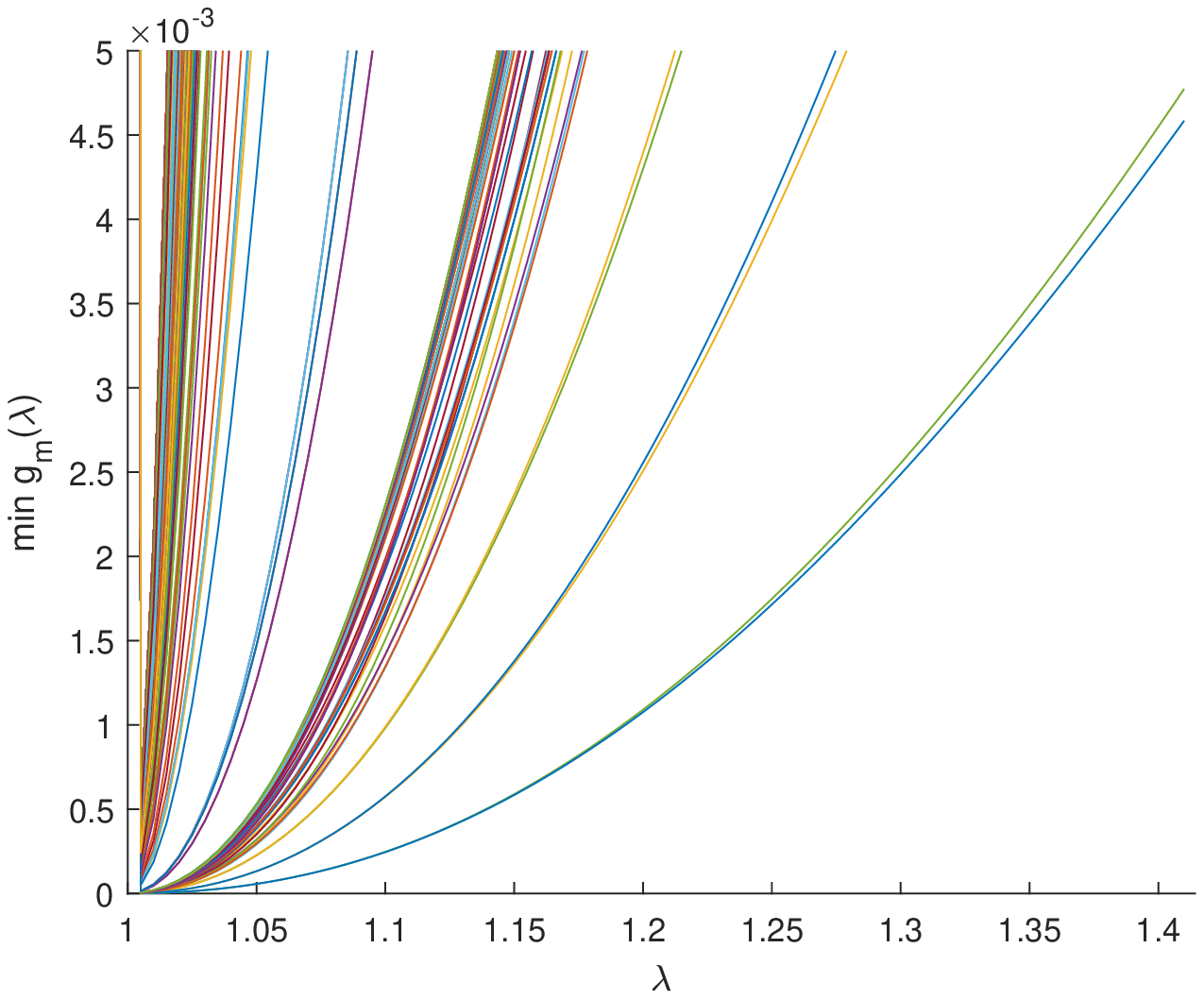}
  \caption{}
  \label{fig0b}
\end{subfigure}
\newline
\begin{subfigure}{.5\textwidth}
  \centering
  \includegraphics[width=.99\linewidth]{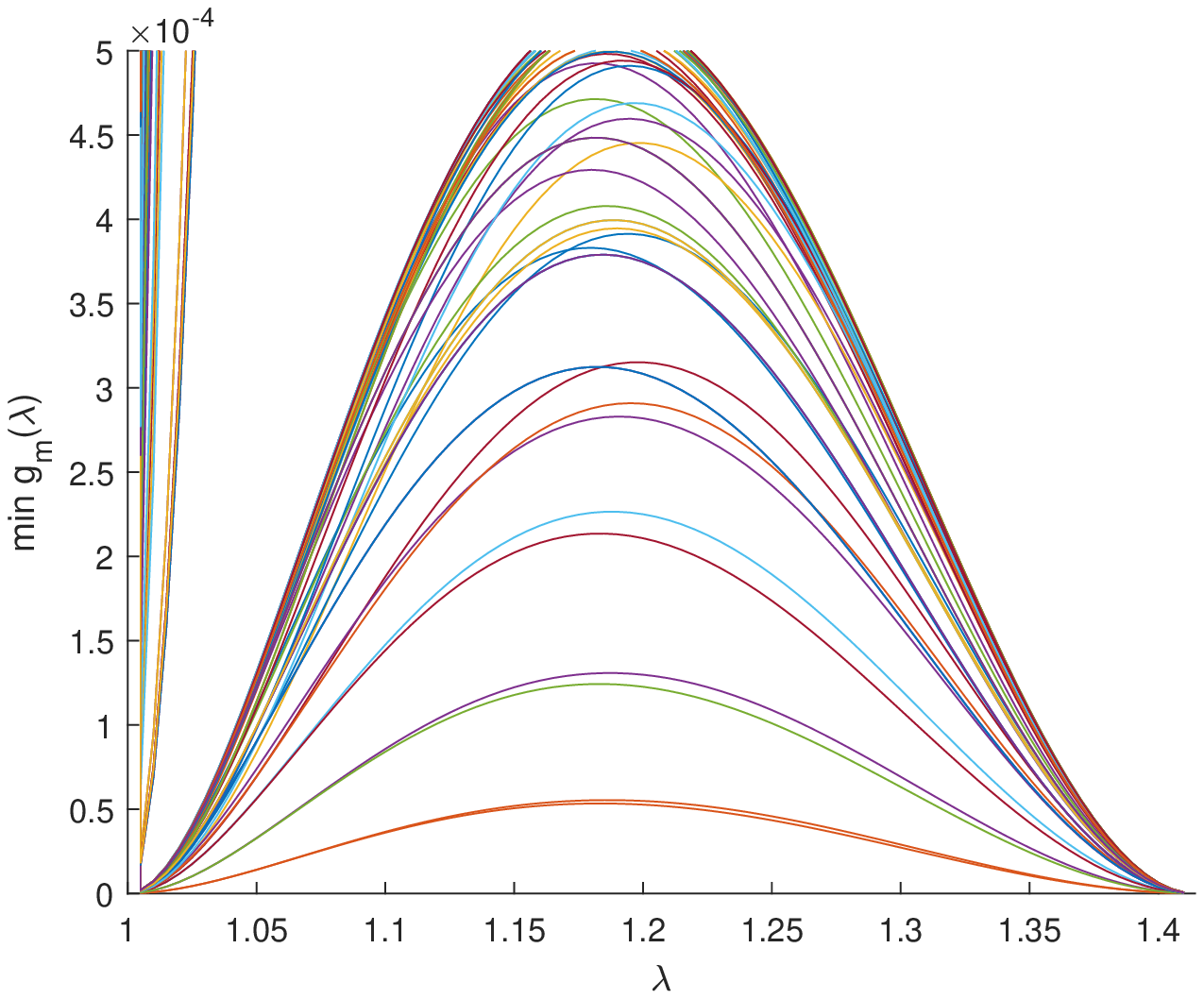}
  \caption{}
  \label{fig0c}
\end{subfigure}
\begin{subfigure}{.5\textwidth}
  \centering
  \includegraphics[width=.99\linewidth]{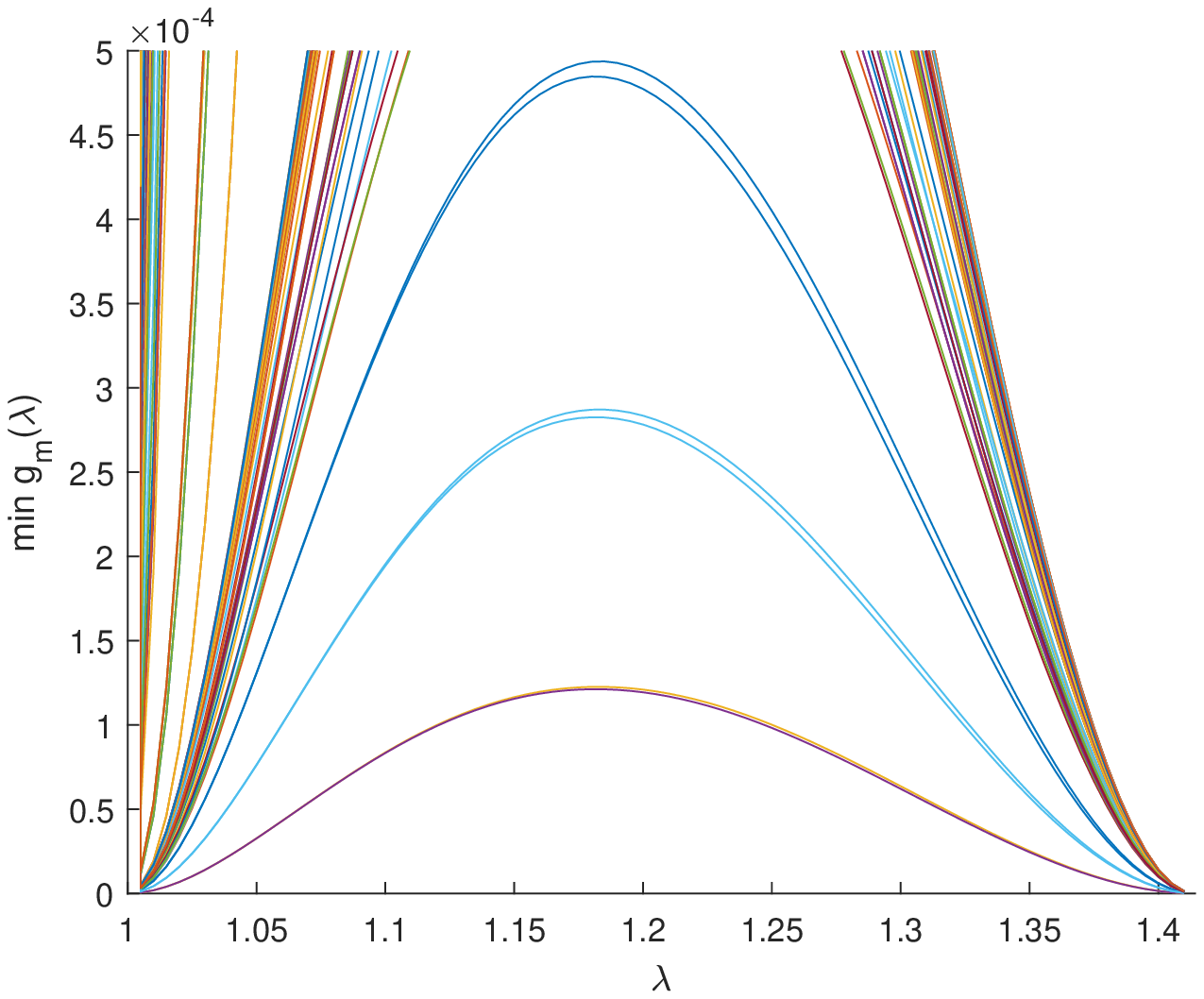}
  \caption{}
  \label{fig0d}
\end{subfigure}
\caption{Figure \ref{fig0a} and Figure \ref{fig0b} respectively represent $\min_{t\in\R} g_1$ and $\min_{t\in\R} g_2$ for the first option in both the cases \eqref{Ca} and \eqref{Cc} in Theorem \ref{TnThm}. Also, Figure \ref{fig0a} and Figure \ref{fig0b} respectively represent $\min_{t\in\R} g_2$ and $\min_{t\in\R} g_1$ for the second option in the cases \eqref{Ca} and \eqref{Cc}. In Figure \ref{fig0c} and Figure \ref{fig0d} we respectively plot $\min_{t\in\R} g_1$ and $\min_{t\in\R} g_2$ for the first option in both the cases \eqref{Cb} and \eqref{Cd} in Theorem \ref{TnThm}. Also, Figure \ref{fig0c} and Figure \ref{fig0d} respectively represent $\min_{t\in\R} g_2$ and $\min_{t\in\R} g_1$ for the second option in the cases \eqref{Cb} and \eqref{Cd}. 
%
%
%In Figure \ref{fig0c} and Figure \ref{fig0d} we also plot $g_1$ and $g_2$ for the second option in both the cases $(i,\sigma_i)$ equal to $(4,-)$ and $(6,-)$ respectively. 
%In Figure \ref{fig0a} and Figure \ref{fig0b} we respectively plot $g_1$ and $g_2$ (as defined in \eqref{sepaIneq}--\eqref{sepaIneq2}) for both the cases $(i,\sigma_i)$ equal to $(3,+)$ and $(5,+)$. In Figure \ref{fig0c} and Figure \ref{fig0d} we respectively plot $g_1$ and $g_2$ for both the cases $(i,\sigma_i)$ equal to $(4,-)$ and $(6,-)$. 
Each line corresponds to a different value of $j\in\{1,\dots,6\},\, l\in \{1,\dots,48\}$.
\label{fig:fig}
}
\end{figure}

Now, given $\rho_0$ as in the claim, we know that there exists $r= \rho_0 + \max_i|\mt F_i|$ such that if $\mt G\in\R^{3\times 3}$ satisfies $|\mt G|\geq r$ then $|\mt F_i-\mt G|\geq \rho_0.$ Furthermore, the function $H\colon \{\mt G\in \R^{3\times 3}: |\mt G|<r\}\to \R^{3\times3}$ defined by $H(\mt G) = \mt G^T\mt G$ is Lipschitz on its domain, and hence there exists $c_0>0$ such that
$$
|\mt F_i - \mt G|\geq c_0 |H(\mt F_i) - H(\mt G)|.
$$ 
Therefore, combining this inequality with the claim we obtain that $\mt F_i(s_i)$ enjoys the separation property with $\rho=\rho_0\min\{1,c_0\}$.}

\paragraph{$V_{II}$ junctions and local stability.}
First, we have to construct $\omega$ such that \eqref{HH3}--\eqref{HH4} in Definition \ref{DefVII} are satisfied. But for $(i,\sigma_i)$ as in \eqref{Ca}--\eqref{Cd}, fixed $\vc n_1 = \vc n_1^+$ we can choose $\vc n_2 = \pm \vc n_i^{\sigma_i}$ such that \eqref{HH3}--\eqref{HH4} are satisfied. Let us now define $\vc y$ as in \eqref{stabbile}. This is well defined because of the Hadamard jump condition, and leads to a $V_{II}$ junction for each of the cases \eqref{Ca}--\eqref{Cd}. Given the steps above, in order to show that the $V_{II}$ junctions are stable, we just need to verify the assumption in Theorem \ref{StableThm} that $(\mt V_j^{2}\vcg\phi_j\times\vcg\psi)\cdot\vc m\neq 0$, with $j=1,2$, where in the notation of Theorem \ref{StableThm} $\mt V_1 = \mt U_1$ and $\mt V_2 = \mt U_i$ and $i$ is given by \eqref{Ca}--\eqref{Cd}. This is done by using \eqref{glims}. We plot {$(\mt V_j^{2}\vcg\phi_j\times\vcg\psi)\cdot\vc m,$} against $\lambda$ in Figure \ref{FigUltimaIp}, and we deduce that it is satisfied for all the cases \eqref{Ca}--\eqref{Cd} and $j=1,2$.
The $V_{II}$ junctions given by \eqref{Ca}--\eqref{Cd} are hence locally stable. %but not for the case \eqref{caso1}.
\begin{figure}
  \centering
  \includegraphics[width=.5\linewidth]{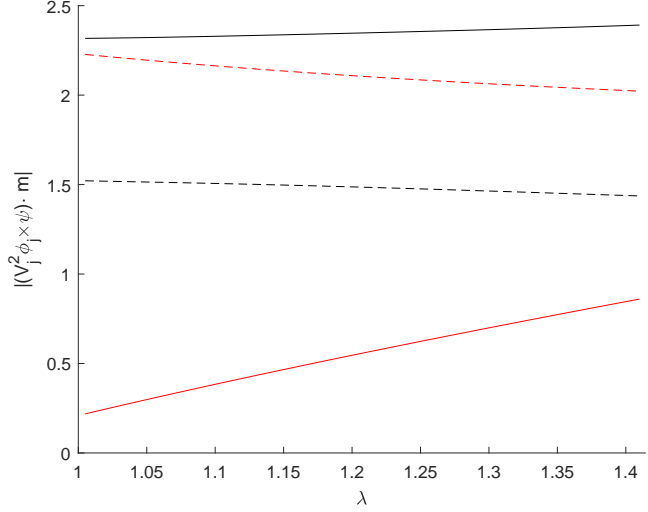}
\caption{\label{FigUltimaIp}Plotting {$|(\mt V_j^{2}\vcg\phi_j\times\vcg\psi)\cdot\vc m|,$} against $\lambda$. In black the cases \eqref{Ca} and \eqref{Cc}, while in red the cases \eqref{Cb} and \eqref{Cd}. Continuous and dashed lines are respectively for $j=1$ and $j=2$ for the first out of the two options in \eqref{Ca}--\eqref{Cd}, and for $j=2$ and $j=1$ for the second options in \eqref{Ca}--\eqref{Cd}.}
\end{figure}

\paragraph{$V_{II}$ junctions between $\mt 1 +\vc a_1^+\otimes\vc n_1^+$ and $\mt 1 +\vc a_1^{-}\otimes\vc n_1^{-}$.}
In this case there are many slip systems which make plastic junctions possible. However, the only ones which satisfy the necessary conditions of Lemma \ref{NecessityCond}, and such that $\vcg\psi_1,\vcg\psi_2\perp \hat{\vc m}$ (where $\hat{\vc m}$ is parallel to $\vc n_1\times\vc n_2$) as required by hypothesis \ref{HH4} in Definition \ref{DefVII}, are couples of slip systems among
\begin{multicols}{2}
\begin{enumerate}[(I)]
\item\label{CasiI} $\vcg\phi = (-1,1,1)$ and $\vcg \psi = (2,1,1)$;
\item $\vcg\phi = (1,1,1)$ and $\vcg \psi = (-2,1,1)$;
\item $\vcg\phi = (1,-1,1)$ and $\vcg \psi = (0,1,1)$;
\item\label{CasiIV} $\vcg\phi = (1,1,-1)$ and $\vcg \psi = (0,1,1)$.
\end{enumerate}
\end{multicols}
Below we denote by case $(j,k)$ the case where $\vcg\phi_1\otimes\vcg\psi_1,\vcg\phi_2\otimes\vcg\psi_2$ are respectively given by $j$ and $k$ among \eqref{CasiI}--\eqref{CasiIV} above. Let us study the situation in the different cases:\\

\textit{Case $(III,III)$ and case $(IV,IV)$.}
In these cases Proposition \ref{PropMeet} guarantees that there are no plastic junctions as $(\alpha_2\vc a_1 + \alpha_1 \vc a_2)\cdot (\hat{\vcg\phi}_1\times\hat{\vcg\phi}_2)= (\beta_2\vc a_1 + \beta_1 \vc a_2)\cdot (\hat{\vcg\phi}_1\times\hat{\vcg\phi}_2) = 0$, but $(\vc a_1\times\vc a_2)\cdot \hat{\vcg\phi}_i\neq 0,$ for $i=1,2$, in \eqref{cond s1 a}.\\

\textit{Cases $(I,III),(I,IV),(II,III),(II,IV),$ $(III,I),(III,II),(IV,I),(IV,II)$.} By Proposition \ref{PropMeet} there exists a unique plastic junction, and $\bar t_i=0$ for the slip on the plane $(0,1,1)$. Therefore, this cases can be studied within the context of cases $(I,I)$ and $(II,II)$ below.\\

\textit{Case $(I,II)$ and case $(II,I)$.}
In these cases, Proposition \ref{PropMeet} guarantees the existence of a one parameter family of plastic junctions. However, no local rigidity (in the sense of Definition \ref{DefPlJ}) holds. Indeed, let $\bar t_1,\bar t_2\in\R$, $\vc b\in\R^3$ and $\vc m\in\mathbb{S}^2$ be such that
$$
(\mt 1 + \vc a_1^+\otimes\vc n_1^+)(\mt1 + \bar t_1\vcg\phi_1\otimes\vcg\psi_1) - (\mt 1 + \vc a_1^-\otimes\vc n_1^-)(\mt1 + \bar t_2\vcg\phi_2\otimes\vcg\psi_2) = \vc b\otimes \vc m.
$$
Let $\mt R\in SO(3)$ be a rotation of angle $\theta$ and axis $\hat{\vc m} = \frac{\vc n_1^+\times\vc n_1^-}{|\vc n_1^+\times\vc n_1^-|}.$ We notice that $\hat{\vc m}\perp \vcg \phi_1,\vcg \phi_2,\vcg \psi_1,\vcg \psi_2, \vc a_1^+, \vc a_1^-$, and hence
\begin{align*}
\vc 0 = \bigl(\mt R(\mt 1 + \vc a_1^+\otimes\vc n_1^+)(\mt1 + t_1\vcg\phi_1\otimes\vcg\psi_1) - (\mt 1 + \vc a_1^-\otimes\vc n_1^-)(\mt1 + t_2\vcg\phi_2\otimes\vcg\psi_2) \bigr)\hat{\vc m} ,
%0 &= \mt R(\mt 1 + \vc a_1^+\otimes\vc n_1^+)\vcg\phi_1\cdot\hat{\vc m} = (\mt 1 + \vc a_1^+\otimes\vc n_1^+)\vcg\phi_1\cdot\hat{\vc m} = \vcg\phi_1\cdot\hat{\vc m} ,\\
%0 &=(\mt 1 + \vc a_1^-\otimes\vc n_1^-)\vcg\phi_2\cdot\hat{\vc m} = \vcg\phi_2\cdot\hat{\vc m}. 
\end{align*}
for any $t_1,t_2\in\R.$ Therefore, if for any small $\theta$ we can show that there exists $t_1^*,t_2^*\in\R$ such that
\beq
\label{testedv}
\vc 0 = \bigl(\mt R(\mt 1 + \vc a_1^+\otimes\vc n_1^+)(\mt1 + t_1^*\vcg\phi_1\otimes\vcg\psi_1) - (\mt 1 + \vc a_1^-\otimes\vc n_1^-)(\mt1 + t_2^*\vcg\phi_2\otimes\vcg\psi_2) \bigr){\vc v}, \qquad \vc v = \frac{\vc m\times\hat{\vc m}}{|\vc m\times\hat{\vc m}|},
\eeq
we have for any small $\theta$,
$$
\mt R(\mt 1 + \vc a_1^+\otimes\vc n_1^+)(\mt1 + t_1^*\vcg\phi_1\otimes\vcg\psi_1) - (\mt 1 + \vc a_1^-\otimes\vc n_1^-)(\mt1 + t_2^*\vcg\phi_2\otimes\vcg\psi_2) = \vc c\otimes\vc m,
$$
for some $\vc c\in\R^3$, and hence the plastic junction is not rigid. But \eqref{testedv} simplifies to 
\beq
\label{PerV}
\begin{split}
\mt R\vc a_1^+(\vc n_1^+\cdot\vc v) &- \vc a_1^-(\vc n_1^-\cdot\vc v) + t_1^* \mt R(\mt 1 + \vc a_1^+\otimes\vc n_1^+)\vcg\phi_1(\vcg\psi_1\cdot\hat{\vc m}) \\
&- t_2^* (\mt 1 + \vc a_1^-\otimes\vc n_1^-)\vcg\phi_2(\vcg\psi_2\cdot\hat{\vc m}) + (\cos(\theta)-1)\vc v + \sin(\theta)\vc m = \vc 0.
\end{split}
\eeq
If $\vcg\psi_1\cdot\hat{\vc m}=0$ or $\vcg\psi_2\cdot\hat{\vc m}=0$, that is if $\vcg\psi_1\parallel\vc m$ or if $\vcg\psi_2\parallel\vc m$, then by hypothesis \ref{HH4} in Theorem \ref{StableThm} the case reduces to case $(I,I)$ or case $(II,II)$ below. Otherwise, since $(\mt 1 + \vc a_1^+\otimes\vc n_1^+)\vcg\phi_1$ and $(\mt 1 + \vc a_1^-\otimes\vc n_1^-)\vcg\phi_2$ are linearly independent, there exists an open neighbourhood $\mathcal{U}$ of $0$ such that $\mt R(\mt 1 + \vc a_1^+\otimes\vc n_1^+)\vcg\phi_1$ and $(\mt 1 + \vc a_1^-\otimes\vc n_1^-)\vcg\phi_2$ are linearly independent for any $\theta\in\mathcal{U}$. Taking in account that all the terms in \eqref{PerV} are orthogonal to $\hat{\vc m}$, \eqref{PerV} is solvable for some $t_1^*,t_2^*\in\R$. As a consequence the junctions are not locally rigid.\\

\textit{Case $(I,I)$ and case $(II,II)$.}
In these cases Proposition \ref{PropMeet} guarantees the existence of a one parameter family of solutions respectively given by
$$
s_1 = s_2 + \frac{\lambda(\lambda^2-1)}{\sqrt{2}(2\lambda^4+1)},\qquad s_1 = s_2 - \frac{\lambda(\lambda^2-1)}{\sqrt{2}(2\lambda^4+1)}.
$$
In the cases (I,I) and (II,II), we respectively have 
\begin{equation}
\label{Mcaso1last}
\begin{split}
\vc m \parallel  \Bigl(2 - \frac{ 4(2\lambda^4 +1)}{4\lambda^4(2s_2+1) + \sqrt2\lambda^3 - \sqrt{2}\lambda + 4s_2 } , 1, 1\Bigr),\\
\vc m \parallel  \Bigl(\frac{ 4(2\lambda^4 +1)}{4\lambda^4(2s_2+1) - \sqrt2\lambda^3 +\sqrt{2}\lambda + 4s_2} -2, 1, 1\Bigr)
.
\end{split}
\end{equation}
By arguing as in the case $(I,II)$ and the case $(II,I)$ we can deduce that, as long as $(2,1,1)\nparallel\vc m$ and $(-2,1,1)\nparallel\vc m$ then the plastic junctions constructed in the case $(I,I)$ and in the case $(II,II)$ are not locally rigid. But we notice that, given $\lambda\in(1,\sqrt{2})$ and $\vc m$ as in \eqref{Mcaso1last} this never occurs, concluding that no local rigidity holds for these junctions.\\

\textit{Case $(III,IV)$ and case $(IV,III)$.}
In these cases there exists plastic junctions if and only if $s_2= - s_1 = \frac{\lambda(\lambda^2-1)}{2\sqrt{2}}$, and $\vc m =(1,0,0)$. Let now $\mt R\in SO(3)$ be a rotation of angle $\theta\in(-\pi,\pi]$ and axis $\hat{\vc m}=\frac{\vc n_1^+\times\vc n_1^-}{|\vc n_1^+\times\vc n_1^-|}.$ In this case we can solve explicitly 
$$
\cof\bigl(\mt R \mt R_1\mt V_1(\mt 1+t_1\vcg \phi_{1}\otimes \vcg\psi_1) - (\mt R_1\mt V_1 + \vc b_1\otimes \vc m_1+ \vc b_2\otimes \vc m_2)(\mt 1+t_2\vcg \phi_{2}\otimes \vcg\psi_2)  \bigr)= \mt 0,
$$
in terms of $(t_1,t_2)$, and deduce that the unique solution is given by
$$
\bar t_2 = -\bar t_1 = \frac{\lambda^2\Bigl((\lambda^2-1)\cos\Bigl(\frac\theta2\Bigr) - 2\lambda\sin\Bigl(\frac\theta2\Bigr) \bigr)}{\sqrt2\Bigl((\lambda^2-1)\sin\Bigl(\frac\theta2\Bigr) + 2\lambda\cos\Bigl(\frac\theta2\Bigr) \bigr)}.
$$ 
In this case, however,
$$
\mt R \mt R_1\mt V_1(\mt 1+t_1\vcg \phi_{1}\otimes \vcg\psi_1) - (\mt R_1\mt V_1 + \vc b_1\otimes \vc m_1+ \vc b_2\otimes \vc m_2)(\mt 1+t_2\vcg \phi_{2}\otimes \vcg\psi_2)  = \vc b\otimes\vc m,
$$
for some $\vc b\in\R^3$ depending on $\theta$. Therefore, also in this case no local rigidity holds.\\
%However, here $\vc m =(1,0,0)$, and hence $\mt V_i^{-2}\vcg\psi\cdot\vc m =  0$, so that, by Definition \ref{DefStabY} no $V_II$ junction is possible.

\noindent
The verification of the Theorem is thus completed.

%\begin{remark}
%\rm
%{\color{red}With our construction and our choice of $\omega$ we can choose $\nabla\vc y+\mt 1$ a.e. in $\R^3\setminus\omega.$ This reflect the presence of austenite observed in experiments between the martensitic plates.}
%\end{remark}
\subsection{Comparison with experimental results}
\label{Comparison with Experiments}
We now compare the results obtained in Theorem \ref{TnThm} to the experimental observations in \cite{Inamura} for \Tn. We recall that for \Tn, $V_{II}$ junctions with $\mt 1 + \vc a_1^+\otimes \vc n_1^+$ are observed only for $\mt 1 + \vc a_i^{\sigma_i}\otimes \vc n_i^{\sigma_i}$, with $(i,\sigma_i)$ equal to $(4,-)$ and $(6,-)$. This is coherent with the result in Theorem \ref{TnThm}. Indeed, although Theorem \ref{TnThm} predicts the existence of $V_{II}$ junctions also for the cases $(i,\sigma_i)$ equal to $(3,+)$ and $(5,+)$, Figure \ref{FigSs} shows that the energy required for a single slip in these cases is consistently bigger than the energy required in the case $(i,\sigma_i)$ equal to $(4,-)$ and $(6,-)$. 

If we approximate the transformation matrices for the phase transition in \Tn\,with the matrices in \eqref{cubictoortho} with $d=\frac1\lambda$, $\lambda\in(1.033,1.035)$ we get that, in some regions of the domain, the shear amount required to form $V_{II}$ junctions in the cases $(i,\sigma_i)$ equal to $(3,+)$ and $(5,+)$, is about ten times bigger than in the case $(i,\sigma_i)$ equal to $(4,-)$ and $(6,-)$. Therefore, one can explain the lack of $V_{II}$ junctions between $\mt 1 + \vc a_1^+\otimes \vc n_1^+$ and $\mt 1 + \vc a_i^{\sigma_i}\otimes \vc n_i^{\sigma_i}$, with $(i,\sigma_i)$ equal to $(3,+)$ and $(5,+)$ with the fact that they are energetically expensive. We report the above discussed results in Table \ref{Table 01}. 

Another factor influencing the presence of $V_{II}$ junctions may be the norm of the dislocation density tensor $\nabla\times \mt F^p$ (see e.g., \cite{ReinaConti}). For $V_{II}$ junctions as in Definition \ref{DefStabY} we have that $\nabla\times \mt F^p$ is a Radon measure and $\nabla\times \mt F^p = \bigl( \bar{t}_1\vcg\phi_1\otimes \vcg \psi_1-\bar{t}_2\vcg\phi_2\otimes \vcg \psi_2\bigr) \times \vc m \,\mathscr H^2\,\mres\{\vc x\cdot\vc m=0\}$. Here $\mathscr H^2\,\mres\{\vc x\cdot\vc m=0\}$ is the two-dimensional Hausdorff measure restricted to the plane $\{\vc x\cdot\vc m=0\}$, and the cross product is taken row-wise. We report in Figure \ref{CurlPict} the values of $|\bigl( \bar{t}_1\vcg\phi_1\otimes \vcg \psi_1-\bar{t}_2\vcg\phi_2\otimes \vcg \psi_2\bigr) \times \vc m|$ for the the constructed $V_{II}$ junctions. Again, the results confirm that the cases $(i,\sigma_i)$ equal to $(4,-)$ and $(6,-)$ are more preferable than the cases $(i,\sigma_i)$ equal to $(3,+)$ and $(5,+)$.
\begin{figure}
\begin{subfigure}{.45\textwidth}
  \centering
  \includegraphics[width=.9\linewidth]{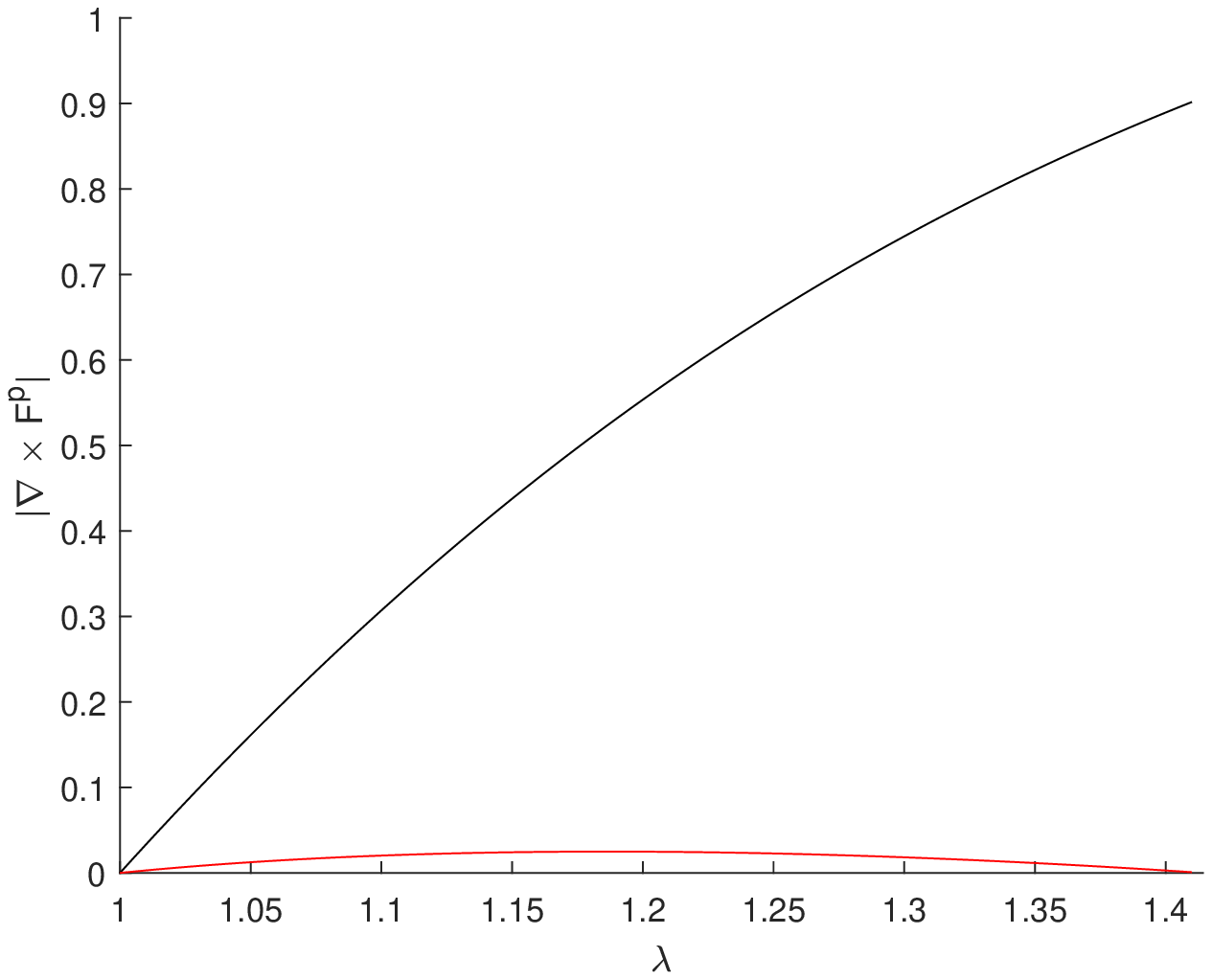}
  \caption{}
  \label{fig99}
\end{subfigure}%
\begin{subfigure}{.45\textwidth}
  \centering
  \includegraphics[width=.9\linewidth]{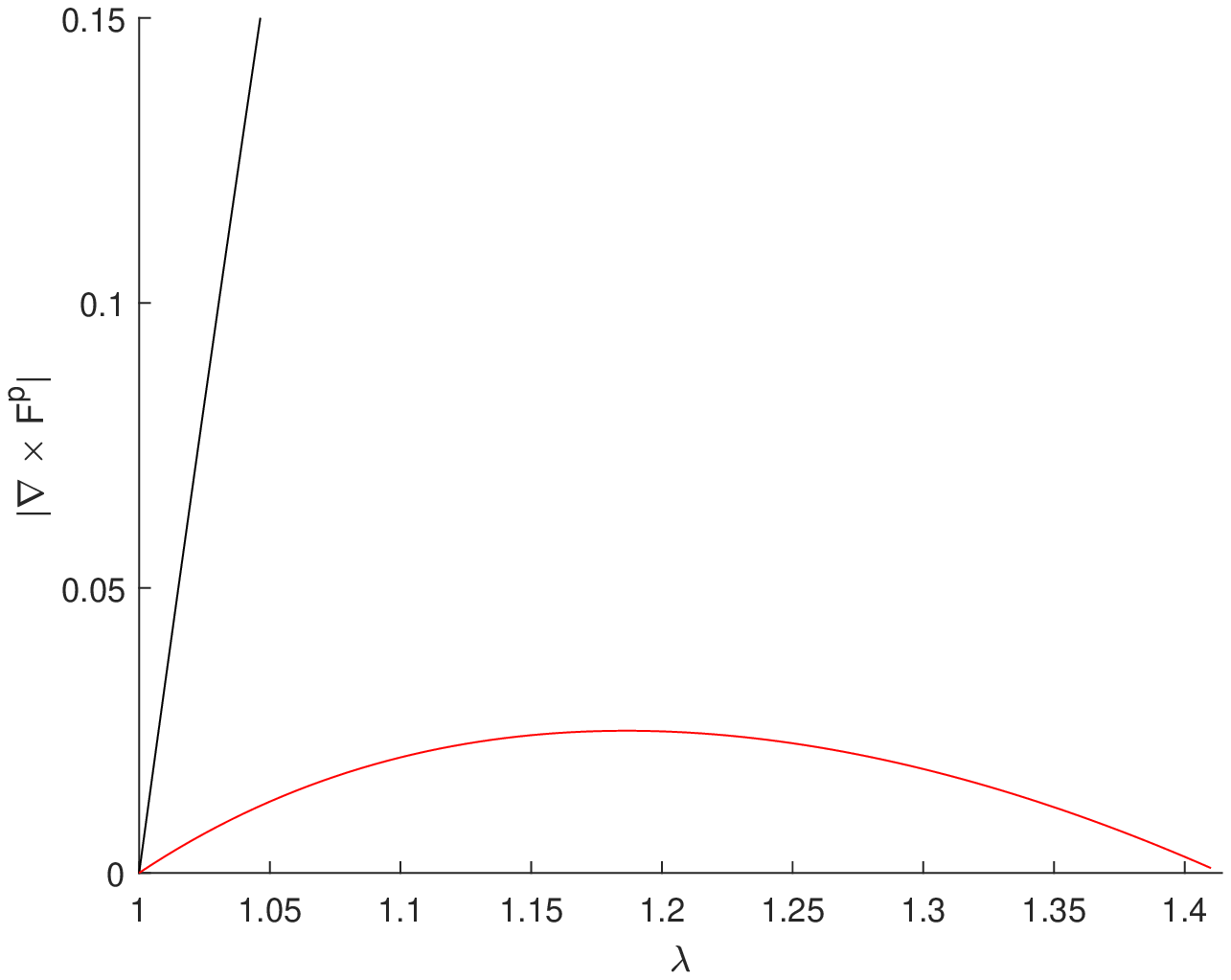}
  \caption{}
  \label{fig1}
\end{subfigure}%
\caption{Plotting $|\nabla\times \mt F^p|$ against $\lambda$. In black the cases $(i,\sigma_i)$ equal to $(3,+)$ and $(5,+)$, while in red the cases $(i,\sigma_i)$ equal to $(4,-)$ and $(6,-)$. On the right a zoom of the plot.}
\label{CurlPict}
\end{figure}
%\begin {table}[h]
%\begin{center}
%\begin{tabular}{ |l|l|l|l|l|l|l|l|l|l|l|l| }
%\hline
%%\parbox{120pt}{ Angle of incompatibility\\ (approx. in degrees)}
%& $(1,-)$ & $(2,+)$ & $(2,-)$ & $(3,+)$ & $(3,-)$ & $(4,+)$ & $(4,-)$ & $(5,+)$ & $(5,-)$ & $(6,+)$ & $(6,-)$ \\\hline
%$\theta$ (approx. in dgs.) & $-3.84$ & $-3.28$ & $-3.28$ & $0.7$ & $3.7$ & $3.75$ & $0.6$ & $0.7$ & $3.7$ & $3.7$ & $0.6$\\
%Observed junction  & none & none & none & $V_I$ & none & none & $V_{II}$ & $V_I$ & none & none & $V_{II}$ \\
%$(|s_1|,|s_2|)$ (values$\cdot 10^2$) & none$^*$ & none & none & $(0.44,4.25)-(0.47,4.5)$ & none & none & $(0.23,0.37)-(0.24,0.39)$ & $(0.44,4.25)-(0.47,4.5)$ & none & none & $(0.23,0.37)-(0.24,0.39)$\\
% \hline
%\end{tabular}
\begin {table}[h]
\begin{center}
\begin{tabular}{ |l|l|l|l| }
\hline
$(i,\sigma_i)$ & $|\theta|$ (approx. in dgs.) & Observed junction  & $(|\bar t_1|,|\bar t_2|)$ (values$\cdot 10^2$)  \\ \hline
$(1,-)$ & $3.84$ & none & none \\ 
$(2,+)$ & $3.28$ & none & none \\ 
$(2,-)$ & $3.28$ & none & none \\ 
$(3,+)$ & $0.69$ & $V_I$ & $(0.44,4.25)-(0.47,4.5)$ \\
$(3,-)$ & $3.70$ & none & none \\ 
$(4,+)$ & $3.70$ & none & none \\
$(4,-)$ & $0.57$ & $V_{II}$ & $(0.23,0.37)-(0.24,0.39)$ \\ 
$(5,+)$ & $0.69$ & $V_I$ & $(0.44,4.25)-(0.47,4.5)$ \\ 
$(5,-)$ & $3.70$ & none & none \\
$(6,+)$ & $3.70$ & none & none \\
$(6,-)$ & $0.57$ & $V_{II}$ & $(0.23,0.37)-(0.24,0.39)$ \\
 \hline
\end{tabular}
\end{center}
\caption {\label{Table 01} 
Incompatible junctions observed in \Tn: comparison between experimental data and results obtained in Theorem \ref{TnThm}. In the second column we give the incompatibility between $\mt 1+\vc a_1^+\otimes\vc n_1^+$ and $\mt 1+\vc a_i^{\sigma_1}\otimes\vc n_i^{\sigma_1}$ measured as in \cite{BallS1} (see Introduction). The approximate values obtained for the angles of incompatibility $\theta$ are expressed in degrees. In the third column we report the type of incompatible junction observed in experiments. In the last column we report the values of $|\bar t_1|,|\bar t_2|$, the amount of simple shear for the $V_{II}$ junctions given by Theorem \ref{TnThm}. For this values we have given a range, corresponding to the value of $\lambda = 1.033$ and $\lambda = 1.035$ respectively. This range approximates the deformation gradient for \Tn\,best. The obtained results confirm that $V_{II}$ junctions are energetically convenient when $(i,\sigma_i)$ is equal to $(4,-)$ or $(6,-).$
The data in the second and third column are taken from \cite[Table 4]{Inamura}. } 	
\end{table}

\section{Concluding remarks}
\label{ConcludingRmk}
In Section \ref{Minim} we provided a mathematical characterisation of $V_{II}$ junctions in martensitic transformations. Our $V_{II}$ junctions are weak local minimisers of a physically relevant energy introduced in Section \ref{Nonlin}. In Section \ref{Appl} we have showed that our model is successful in capturing the $V_{II}$ junctions observed in \Tn.
There are nonetheless a few directions in which the present work can be extended or improved. 

Despite $V_{II}$ junctions look very similar to the inexact junctions observed in Ni\textsubscript{65}Al\textsubscript{35} \cite{BallS1,BallS2}, the theory developed in this paper cannot be applied to that case. This is mainly for three reasons: first, as reported in \cite{BS3} elastic distortions are experimentally observed and seem to play an important role for the formation of incompatible junctions in Ni\textsubscript{65}Al\textsubscript{35}. Second, when considering average deformation gradients like laminates (and hence a relaxed elastic energy), one should also consider average plastic shears (and thus a relaxed plastic energy). In that case, also the compatibility results of Section \ref{ShearComp} should be re-proven. Third, it seems that a rigidity argument based on the separation of wells as the one in the proof of Theorem \ref{StableThm} does not work for a relaxed elastic energy.  

% for the following two reasons: the first is that we are neglecting elastic effects, which cannot be neglected the second one is that the hypotheses of our Theorem \ref{StableThm} cannot be applied in the case of average compatibility (that is compatibility between laminates or more general average deformation gradients).  
The aim of this work is to study $V_{II}$ junctions; it would be interesting to understand also $V_I$ junctions within this framework. This would allow to better understand nucleation of martensite in \Tn. Indeed, as reported in \cite{Inamura}, nucleation in \Tn\,occurs mostly through the formation of new $V_{I}$ junctions. However we were not able to find a mathematical characterisation of $V_I$ junctions which is both simple and well-defined, as in this case one should consider plastic deformations both in austenite and in the martensite plates. This will hopefully be discussed in future work.

In our opinion, taking in account small elastic effects would improve the physical accuracy of the model discussed in Section \ref{Nonlin}, but would make the proof of local stability much harder. The context of linear elasto-plasticity and the geometrically linear theory of elasticity for martensitic transformations (see e.g., \cite{Batt}) may provide a better framework to approach this problem analytically. Indeed, in geometrically linear elasticity the composition of subsequent deformations reduces to summing the respective deformation gradients, rather than multiplying them as in the context of nonlinear elasticity. Therefore, by giving up some accuracy in the model, this theory guarantees a more approachable framework for analytic results. 
Examples of recent studies of martensitic transformation within this context are \cite{AS,FO,AR}. However, we remark that in some particular cases the nonlinear elasticity theory and the geometrically linear theory may give different results (cf. the case of triple stars in \cite[Sec. 2-3]{CDPRZZ}).

\subsection*{Acknowledgments}
This work was partially supported by the Engineering and Physical Sciences Research Council [EP/L015811/1]. The author would like to thank John Ball, Tomonari Inamura and Angkana R\"uland for the useful discussions. The author would
like to acknowledge the two anonymous reviewers for improving this paper with their comments.

\bibliographystyle{plain}
\bibliography{biblio}

\begin{thebibliography}{10}

\bibitem{Adams}
R.A. Adams and J.J.F. Fournier.
\newblock {\em Sobolev spaces}, volume 140.
\newblock Elsevier, 2003.

\bibitem{AD}
K.~Anguige and P.W. Dondl.
\newblock Energy estimates, relaxation, and existence for strain-gradient
  plasticity with cross-hardening.
\newblock In {\em Analysis and computation of microstructure in finite
  plasticity}, pages 157--173. Springer, 2015.

\bibitem{BallChuJames}
J.M. Ball, C.~Chu, and R.D. James.
\newblock Hysteresis during stress-induced variant rearrangement.
\newblock {\em Le Journal de Physique IV}, 5(C8):C8--245, 1995.

\bibitem{BallJames1}
J.M. Ball and R.D. James.
\newblock Fine phase mixtures as minimizers of energy.
\newblock {\em Arch. Rational Mech. Anal.}, 100(1):13--52, 1987.

\bibitem{BallJamesPlane}
J.M. Ball and R.D. James.
\newblock A characterization of plane strain.
\newblock {\em Proc. Roy. Soc. London Ser. A}, 432(1884):93--99, 1991.

\bibitem{BallJames2}
J.M. Ball and R.D. James.
\newblock Proposed experimental tests of a theory of fine microstructure and
  the two-well problem.
\newblock {\em Phil. Trans. R. Soc. Lond. A}, 338(1650):389--450, 1992.

\bibitem{BallKoumatosQC}
J.M. Ball and K.~Koumatos.
\newblock Quasiconvexity at the boundary and the nucleation of austenite.
\newblock {\em Arch. Ration. Mech. Anal.}, 219(1):89--157, 2016.

\bibitem{BS3}
J.M. Ball and D.~Schryvers.
\newblock The formation of macrotwins in {NiAl} martensite.
\newblock In {\em IUTAM Symposium on Mechanics of Martensitic Phase
  Transformation in Solids}, pages 27--36. Springer, 2002.

\bibitem{BallS1}
J.M. Ball and D.~Schryvers.
\newblock The analysis of macrotwins in {NiAl} martensite.
\newblock In {\em Journal de Physique IV (Proceedings)}, volume 112, pages
  159--162. EDP sciences, 2003.

\bibitem{RuddockEx}
Z.S. Basinski and J.W. Christian.
\newblock Crystallography of deformation by twin boundary movements in
  {Indium-Thallium} alloys.
\newblock {\em Acta Metallurgica}, 2(1):101--116, 1954.

\bibitem{BattSme}
K.~Bhattacharya.
\newblock Self-accommodation in martensite.
\newblock {\em Arch. Rational Mech. Anal.}, 120(3):201--244, 1992.

\bibitem{Batt}
K.~Bhattacharya.
\newblock {\em Microstructure of martensite}.
\newblock Oxford Series on Materials Modelling. Oxford University Press,
  Oxford, 2003.
\newblock Why it forms and how it gives rise to the shape-memory effect.

\bibitem{BallS2}
P.~Boullay, D.~Schryvers, and J.M. Ball.
\newblock Nano-structures at martensite macrotwin interfaces in {Ni65Al35}.
\newblock {\em Acta Materialia}, 51(5):1421 -- 1436, 2003.

\bibitem{FO}
A.~Capella and F.~Otto.
\newblock A quantitative rigidity result for the cubic-to-tetragonal phase
  transition in the geometrically linear theory with interfacial energy.
\newblock {\em Proceedings of the Royal Society of Edinburgh: Section A
  Mathematics}, 142(2):273–327, 2012.

\bibitem{CDPRZZ}
P.~Cesana, F.~Della~Porta, A.~R\"uland, C.~Zillinger, and B.~Zwicknagl.
\newblock Exact constructions in the (non-linear) planar theory of elasticity:
  from elastic crystals to nematic elastomers.
\newblock {\em Arch. Rational Mech. Anal.}, 237(1):383--445, 2020.

\bibitem{JamesHyst}
X.~Chen, V.~Srivastava, V.~Dabade, and R.D. James.
\newblock Study of the {\it cofactor conditions}: conditions of
  supercompatibility between phases.
\newblock {\em J. Mech. Phys. Solids}, 61(12):2566--2587, 2013.

\bibitem{AS}
I.V. Chenchiah and A.~Schl{\"o}merkemper.
\newblock Non-laminate microstructures in monoclinic-i martensite.
\newblock {\em Arch. Rational Mech. Anal.}, 207(1):39--74, 2013.

\bibitem{CDK}
S.~Conti, G.~Dolzmann, and C.~Kreisbeck.
\newblock Variational modeling of slip: from crystal plasticity to geological
  strata.
\newblock In {\em Analysis and Computation of Microstructure in Finite
  Plasticity}, pages 31--62. Springer, 2015.

\bibitem{FDP2}
F.~Della~Porta.
\newblock Analysis of a moving mask hypothesis for martensitic transformations.
\newblock {\em Journal of Nonlinear Science}, 29(5):2341--2384, 2019.

\bibitem{FDP1}
F.~Della~Porta.
\newblock A model for the evolution of highly reversible martensitic
  transformations.
\newblock {\em Mathematical Models and Methods in Applied Sciences},
  29(03):493--530, 2019.

\bibitem{FDP3}
F.~Della~Porta.
\newblock On the cofactor conditions and further conditions of
  supercompatibility between phases.
\newblock {\em Journal of the Mechanics and Physics of Solids}, 122:27 -- 53,
  2019.

\bibitem{DRMD}
O.~Dmitrieva, D.~Raabe, S.~M{\"u}ller, and P.W. Dondl.
\newblock Microstructure in plasticity, a comparison between theory and
  experiment.
\newblock In {\em Analysis and Computation of Microstructure in Finite
  Plasticity}, pages 205--218. Springer, 2015.

\bibitem{dolzman}
G.~Dolzmann.
\newblock {\em Variational methods for crystalline microstructure---analysis
  and computation}, volume 1803 of {\em Lecture Notes in Mathematics}.
\newblock Springer-Verlag, Berlin, 2003.

\bibitem{DKMS}
G.~Dolzmann, B.~Kirchheim, S.~M\"{u}ller, and V.~\v{S}ver\'{a}k.
\newblock The two-well problem in three dimensions.
\newblock {\em Calc. Var. Partial Differential Equations}, 10(1):21--40, 2000.

\bibitem{Inamura}
T.~Inamura, H.~Hosoda, and S.~Miyazaki.
\newblock Incompatibility and preferred morphology in the self-accommodation
  microstructure of {$\beta$}-{Titanium} shape memory alloy.
\newblock {\em Philosophical Magazine}, 93(6):618--634, 2013.

\bibitem{InamuraIII}
T.~Inamura, T.~Nishiura, H.~Kawano, H.~Hosoda, and M.~Nishida.
\newblock Self-accommodation of {B19'} martensite in {Ti–Ni} shape memory
  alloys. {Part III}. {Analysis} of habit plane variant clusters by the
  geometrically nonlinear theory.
\newblock {\em Philosophical Magazine}, 92(17):2247--2263, 2012.

\bibitem{Muller}
S.~M\"uller.
\newblock Variational models for microstructure and phase transitions.
\newblock In {\em Calculus of variations and geometric evolution problems
  ({C}etraro, 1996)}, volume 1713 of {\em Lecture Notes in Math.}, pages
  85--210. Springer, Berlin, 1999.

\bibitem{MullerSverak}
S.~M\"uller and V.~\v{S}ver\'{a}k.
\newblock Convex integration with constraints and applications to phase
  transitions and partial differential equations.
\newblock {\em J. Eur. Math. Soc. (JEMS)}, 1(4):393--422, 1999.

\bibitem{InamuraI}
M.~Nishida, T.~Nishiura, H.~Kawano, and T.~Inamura.
\newblock Self-accommodation of {B19'} martensite in {Ti–Ni} shape memory
  alloys – {Part I}. {Morphological} and crystallographic studies of the
  variant selection rule.
\newblock {\em Philosophical Magazine}, 92(17):2215--2233, 2012.

\bibitem{OR99}
M.~Ortiz and E.A. Repetto.
\newblock Nonconvex energy minimization and dislocation structures in ductile
  single crystals.
\newblock {\em Journal of the Mechanics and Physics of Solids}, 47(2):397 --
  462, 1999.

\bibitem{ReinaConti}
C.~Reina and S.~Conti.
\newblock Kinematic description of crystal plasticity in the finite kinematic
  framework: A micromechanical understanding of {F=FeFp}.
\newblock {\em Journal of the Mechanics and Physics of Solids}, 67:40 -- 61,
  2014.

\bibitem{ReinaContiS}
C.~Reina, A.~Schl{\"o}merkemper, and S.~Conti.
\newblock Derivation of {F}= {F}e{F}p as the continuum limit of crystalline
  slip.
\newblock {\em Journal of the Mechanics and Physics of Solids}, 89:231--254,
  2016.

\bibitem{Res}
J.G. Re\v{s}etnjak.
\newblock Liouville's conformal mapping theorem under minimal regularity
  hypotheses.
\newblock {\em Sibirsk. Mat. \v{Z}.}, 8:835--840, 1967.

\bibitem{Ruddock}
G.~Ruddock.
\newblock A microstructure of martensite which is not a minimiser of energy:
  the {$X$}-interface.
\newblock {\em Arch. Rational Mech. Anal.}, 127(1):1--39, 1994.

\bibitem{AR}
A.~R{\"u}land.
\newblock The cubic-to-orthorhombic phase transition: rigidity and non-rigidity
  properties in the linear theory of elasticity.
\newblock {\em Arch. Rational Mech. Anal.}, 221(1):23--106, 2016.

\bibitem{JamesNew}
Y.~Song, X.~Chen, V.~Dabade, T.W. Shield, and R.D. James.
\newblock Enhanced reversibility and unusual microstructure of a
  phase-transforming material.
\newblock {\em Nature}, 502(7469):85, 2013.

\bibitem{JamesMuller}
Z.~Zhang, R.D. James, and S.~M{\"u}ller.
\newblock Energy barriers and hysteresis in martensitic phase transformations.
\newblock {\em Acta Materialia}, 57(15):4332--4352, 2009.

\end{thebibliography}

\end{document}